\documentclass[prodmode,acmtocl]{acmsmall}

\usepackage[ruled]{algorithm2e}

\SetAlFnt{\small}

\acmVolume{1}
\acmNumber{1}
\acmArticle{1}
\acmYear{2016}
\acmMonth{1}

\doi{0000001.0000001}

\issn{1234-56789}

\usepackage{amsmath,amssymb}
\usepackage{proof}
\usepackage{stmaryrd}
\usepackage{latexsym}
\usepackage{url}
\usepackage{mycommands}
\usepackage{color}
\usepackage{xspace}
\usepackage{tikz}
\usepackage{relsize}
\RequirePackage{txfonts}
\usepackage{microtype}
\usepackage{hyperref}

\begin{document}

\markboth{B. Lellmann and E. Pimentel}{Modularisation of sequent calculi for normal and non-normal modalities
}

\title{Modularisation of sequent calculi for normal and non-normal modalities}
\author{Bj\"{o}rn Lellmann
\affil{Institute of Computer Languages, TU Wien, Austria}
Elaine Pimentel
\affil{Departamento de Matem\'{a}tica, UFRN, Brazil}}

\begin{abstract}

In this work we explore the connections between (linear) nested
sequent calculi and ordinary sequent calculi for normal and non-normal
modal logics. By proposing local versions to ordinary sequent rules we
obtain linear nested sequent calculi for a number of logics, including
to our knowledge the first nested sequent calculi for a large class of
simply dependent multimodal logics, and for many standard non-normal
modal logics. The resulting systems are modular and have separate left
and right introduction rules for the modalities, which makes them
amenable to specification as bipole clauses. While this
granulation of the sequent rules introduces more choices for proof
search, we show how linear nested sequent calculi can be restricted to
blocked derivations, which directly correspond to ordinary sequent
derivations.


\end{abstract}

\begin{CCSXML}
<ccs2012>
<concept>
<concept_id>10003752.10003790.10003792</concept_id>
<concept_desc>Theory of computation~Proof theory</concept_desc>
<concept_significance>500</concept_significance>
</concept>
<concept>
<concept_id>10003752.10003790.10003793</concept_id>
<concept_desc>Theory of computation~Modal and temporal logics</concept_desc>
<concept_significance>500</concept_significance>
</concept>
<concept>
<concept_id>10003752.10003790.10003801</concept_id>
<concept_desc>Theory of computation~Linear logic</concept_desc>
<concept_significance>500</concept_significance>
</concept>
<concept>
<concept_id>10003752.10003790.10003794</concept_id>
<concept_desc>Theory of computation~Automated reasoning</concept_desc>
<concept_significance>300</concept_significance>
</concept>
</ccs2012>
\end{CCSXML}

\ccsdesc[500]{Theory of computation~Proof theory}
\ccsdesc[500]{Theory of computation~Modal and temporal logics}
\ccsdesc[500]{Theory of computation~Linear logic}
\ccsdesc[300]{Theory of computation~Automated reasoning}

%
%

\keywords{Linear nested sequents, labelled systems} 

\acmformat{Bj\"{o}rn Lelllmann and Elaine Pimentel, 2016. Modularisation of sequent calculi for normal and non-normal modalities.}

\begin{bottomstuff}
\includegraphics[height=2ex]{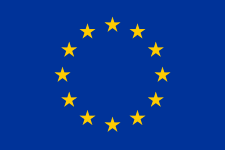} This project has received funding from the European Union's Framework
Programme for Research and Innovation Horizon~2020 (2014-2020) under
the Marie Sk{\l}odowska-Curie Grant Agreement No.~[660047] and from
the WWTF project MA16-028.
This work is supported by the EU under Project GetFun and the Brazilian research agencies CNPq and CAPES.
\end{bottomstuff}

\maketitle

\section{Introduction}\label{sec:intro}

One of the main research topics in proof theory is the proposal of
suitable frameworks for logical systems. Determining which properties
should be taken into account for calling a framework \emph{suitable}
depends on the intended application. For example, \emph{simple}
frameworks are easy to understand and handle, hence this can be a
desirable characteristic.  Another highly desirable property is
\emph{analyticity}. Analytic calculi consist solely of rules that
compose the formulae to be proved in a stepwise manner. As a result,
derivations in an analytic calculus possess the subformula property:
every formula that appears (anywhere) in the derivation must be a
subformula of the formulae to be proved. This is a powerful
restriction on the form of the proofs and can be exploited to prove
important meta-logical properties of the formalised logics such as
consistency, decidability and interpolation.  Also, a framework is
often required to be amenable for \emph{smooth extensions}, in order
to avoid the necessity of a fresh start every time new axioms are
added to the base logic.

Maybe the best known formalism for proposing proof systems is
Gentzen's \emph{sequent calculus}~\cite{gentzen35}.  Due to its
simplicity, sequent calculus appears as an ideal tool for proving
meta-logical properties. However, it is neither expressive enough for
constructing analytic calculi for many logics of interest, nor
scalable in order to capture large classes of logics in a uniform and
systematic way.

In the case of \emph{modal logics}, the limitation of the sequent
framework is glaring.  Undoubtedly, there are sequent calculi for a
number of modal logics exhibiting many good properties (such as
analyticity)\footnote{Analyticity in sequent calculus systems is often
  guaranteed by proving {\em cut elimination}.}, which can be used in
complexity-optimal decision procedures. However, their construction
often seems ad-hoc; they are usually not \emph{modular}, in the sense
that the addition of a single property usually implies a reworking of
the whole system to obtain cut elimination; and they mostly lack
properties such as separate left and right introduction rules for the
modalities, which are relevant from the point of view of
proof-theoretic semantics and facilitate closer connections to natural
deduction systems.
  
These problems are often connected to the fact that the modal rules in
such calculi usually introduce more than one connective at a time. For
example, in the standard presentation of the rule
\[
\infer[\mathsf{k}]{\Gamma',\Box \Gamma\seq\Box A,\Delta}{\Gamma\seq A} 
\]
for modal logic $\K$~\cite{Chellas:1980fk}, the context $\Gamma$
contains an arbitrary finite number of formulae, each of which is
prefixed with a box in the conclusion.  Thus, if the formulae in
$\Gamma$ really are considered to form part of the context, then this
context is not kept intact when passing over to the premiss. Moreover,
the context $\Box \Gamma$ in the conclusion places a severe
restriction on the side formulae, in that only modalised formulae can
appear. Thus, the rule is not \emph{local} in the sense that it does
not only decompose the principal formula $\Box A$. Alternatively, the
$\mathsf{k}$ rule can also be seen as an infinite set of rules
\[
\left\{\; \vcenter{\infer[\mathsf{k}_n]{\Gamma', \Box B_1,\dots, \Box
      B_n \seq \Box A, \Delta}{B_1,\dots, B_n \seq A}} \mid n \geq
  0\;\right\}
\]
each with a fixed number of principal formulae. While from this point
of view the rules $\mathsf{k}_n$ could be considered local because
they do not place any restriction on the side formulae in
$\Gamma',\Delta$, they explicitly introduce boxed formulae on both
sides of the sequent arrow, and hence explicitly discard the
distinction between left and right rules for the modal
connective. Thus, both of these perspectives are somewhat
dissatisfying. For a more detailed discussion see,
e.g.,~\cite{Wansing:2002fk}.

One way of solving this problem is to consider extensions of the
sequent framework that are expressive enough for capturing these
modalities using separate left and right introduction rules.  This is
possible e.g.\ in the frameworks of \emph{labelled
  sequents}~\cite{Negri:2011} or in that of \emph{nested sequents} or
\emph{tree-hypersequents}~\cite{Bull:1992,DBLP:journals/sLogica/Kashima94,Brunnler:2009kx,Poggiolesi:2009vn,Strasburger:2013}.
In the labelled sequent framework, the trick is accomplished by
explicitly mentioning the Kripke-style relational semantics of normal
modal logics in the sequents. In the nested or tree-hypersequent
framework in contrast, intuitively, a single sequent is replaced with
a tree of sequents, where successors of a sequent are interpreted
under a modality. The modal rules of these calculi govern the transfer
of (modal) formulae between the different sequents, and it can be
shown that it is sufficient to transfer only one formula at a
time. However, the price to pay for this added expressivity is that
the obvious proof search procedure is of suboptimal complexity since
it constructs potentially exponentially large nested
sequents~\cite{Brunnler:2009kx}.

In this work, we reconcile the added superior expressiveness and
modularity of nested sequents over ordinary sequents with the computational behaviour of the
standard sequent framework by proposing the concept of block form
derivations for \emph{linear nested sequents}.
Linear nested sequents~\cite{Lellmann:2015lns} (short: $\LNS$) is
a restricted form of nested sequents where the tree-structure is
restricted to that of a line.
In $\LNS$, a list of standard
sequents is separated by the \emph{nesting operator} $\lns$, with the
head of the list interpreted in the usual way and the tail
interpreted (recursively) under a modal operator.  
The logical rules then act on
the elements of the list, possibly moving formulas from one element to another.
This finer way of representing systems enables both locality and
modularity by decomposing standard sequent rules into smaller components.
For example, the modal rule $\mathsf{k}$ in the linear nested setting is
decomposed into the two rules
\[
     \infer[\Box_L]{\mathcal{S}\con{\Gamma,\Box A \seq \Delta\lnes{\Sigma
          \seq \Pi}}}{\mathcal{S} \con{\Gamma \seq
        \Delta\lnes{\Sigma,A \seq \Pi}}}
           \qquad
           \infer[\Box_R]{\mathcal{G} \lnes{\Gamma\seq \Delta,\Box A}}{\mathcal{G}
      \lnes {\Gamma\seq\Delta\lnes{\;\seq A}}}
\]
Note that different connectives are introduced one at a time by
different (context free) rules, and this entails locality.  Moreover,
decomposing the sequent rules enables modularity since now extensions
of, \eg, the modal system $\K$ are obtained by adding the respective
(local) modal rules.

However, locality has a collateral side effect: more choices on the
application of rules. This may cause an explosion in the proof
space. In order to obtain a better control of proofs, we propose a
proof strategy based on \emph{blocks of applications} of modal rules.
The result is a notion of normal derivations in the linear nested
setting, which directly correspond to derivations in the standard
sequent setting.

Since we are interested in the connections to the standard sequent
framework, we concentrate on logics which have a standard sequent
calculus. Examples include normal modal logic $\K$ and extensions of
it, in particular the family of simply dependent multimodal
logics~\cite{Demri:2000}, as well as several non-normal modal logics,
i.e., standard extensions of \emph{classical modal
  logic}~\cite{Chellas:1980fk}. Notably, we obtain the first nested
sequent calculi for the logics of the \emph{modal tesseract} (see
Fig.~\ref{fig:modal-tesseract}).

Finally, while more expressive formalisms such as $\LNS$ enable
calculi for a broader class of logics, the greater bureaucracy makes
it harder to prove meta-logical properties, such as analyticity
itself.  Since a specific logic gives rise to specific sets of rules
in different calculi, it is important to determine whether there is a
\emph{general} methodology for determining/analysing such meta-level
properties. This is the role of \emph{logical frameworks} in proof
theory, where proof systems are adequately embedded into a meta-level
formal system so that object-level properties can be uniformly proven.
Since logical frameworks often come with automated procedures, the
meta-level machinery can be used for proving properties of the
embedded systems \emph{automatically}.
In~\cite{DBLP:journals/tcs/MillerP13} \emph{bipoles} and the
\emph{focusing proof strategy}~\cite{Andreoli:1992} in linear
logic~\cite{Girard:1987uq} were used in order to specify sequent
systems.  By interpreting object-level inference rules as meta-level
bipoles, focusing forces a one-to-one correspondence between the
application of rules and the derivation of formulae.  In this work, we
show that this bipole/focusing approach can be extended to linear
nested systems. Such specification allows for the proposal of a
general theorem prover (POULE available at
\url{http://subsell.logic.at/nestLL/}), parametric in the theory,
profiting from the modularity of the specified systems.
  
It should be noted that some preliminary results on linear nested
systems for various modal systems were presented
in~\cite{DBLP:conf/lpar/LellmannP15}. In the present paper we give
many more examples and refine several technical details. The new
contributions with respect to~\cite{DBLP:conf/lpar/LellmannP15} are:
(1) generalisation of the results on simply dependent bimodal logics
to large family of logics in Sec.~\ref{sec:simply-depend-modal}; (2)
introduction of modular linear nested sequent calculi for the
non-normal modal logics of the modal tesseract in
Sec.~\ref{sec:dosens-principle}; (3) definition of a notion of normal
forms for linear nested sequents, via the concept of \emph{modal block
  forms}; this allows for a modular way of translating modal sequent
into linear nested sequent systems; (4) automatic generation of
labelled systems for all the logics in the modal tesseract; and
finally (5) discussion on some other possible approaches for focusing
in modal systems, especially the ones proposed
in~\cite{DBLP:conf/fossacs/ChaudhuriMS16}.

The rest of the paper is organized as follows. In
Section~\ref{sec:nested} we introduce the concept of linear nested
sequents ($\LNS$). In Section~\ref{sec:examples} we show that the
linear nested sequent framework is a good formalism for a large class
of modal systems, showing non trivial extensions of multimodal $\K$ as
well as a large class of non-normal modal
logics. Section~\ref{sec:dosens-principle} also presents local systems
for non-normal logics, but by modifying the structural rules of the
system, instead of their logical rules. In both Sections we make use
of auxiliary structural operators. Since locality often entails less
efficient systems, in Section~\ref{sec:focused} we propose a notion of
``normal proofs'' in $\LNS$ derivations, hence showing how to reduce
the proof space and consequently optimize proof search. Since modal
connectives presented in this work are uniquely defined by the modal
rules, we can specify such rules as bipoles. We show the specification
process in Section~\ref{sec:labelled}, by first proposing labelled
sequent versions for $\LNS$ systems and then showing how to to
generate bipole clauses in linear logic which adequately correspond to
$\LNS$ modal rules.  Finally, in Section~\ref{sec:conc}, we conclude
by pointing out some future work.


\section{Linear nested sequent systems}\label{sec:nested}

As an intermediate between the efficiency of the ordinary sequent
framework and the expressiveness of the nested sequent
framework~\cite{Bull:1992,DBLP:journals/sLogica/Kashima94,Brunnler:2009kx,Poggiolesi:2009vn,Strasburger:2013}
we consider calculi in the \emph{linear nested sequent}
framework~\cite{Lellmann:2015lns}. This is essentially a reformulation
of Masini's 2-sequents~\cite{Masini:1992} in the nested sequent
framework, where the tree structure of nested sequents is restricted
to that of a line. The benefit is that this framework exhibits the
structure essential to obtain modular calculi, i.e., the nesting of
sequents, while retaining a very close connection to the ordinary
sequent framework and offering advantages in terms of efficiency. A
similar approach was followed with the $\mathsf{G}$-$\mathsf{CK}_n$
sequents for constructive modal logic of~\cite{Mendler:2011kxb} which
moreover also add some form of focusing to the linear structure.

In the following, we consider a \emph{sequent} to be a pair
$\Gamma \seq \Delta$ of multisets of formulae and adopt the standard
conventions and notations for formulae, multisets, and proof systems
(see e.g.~\cite{Negri:2011}).  A linear nested sequent then is simply
a finite list of sequents. As noted in~\cite{Lellmann:2015lns}, this
data structure matches exactly that of a \emph{history} in a backwards
proof search in an ordinary sequent calculus, a fact we will heavily
use in what follows.

\begin{definition}
The set $\LNS$ of \emph{linear nested sequents} is given recursively by:
\begin{enumerate}
  \item if  $\Gamma\seq \Delta$ is a sequent then $\Gamma\seq \Delta\in\LNS$
  \item if $\Gamma\seq \Delta$ is a sequent and $\mathcal{G}\in\LNS$
    then $\Gamma\seq\Delta\lns\mathcal{G}\in\LNS$.
\end{enumerate}
We will write $\mathcal{S}\{\Gamma\seq \Delta\}$ for denoting a
\emph{context} $\mathcal{G}\lns\Gamma\seq \Delta\lns\mathcal{H}$ where
each of $\mathcal{G},\mathcal{H}$ is a linear nested sequent or empty
(omitting the $\lns$ symbol in the latter case).  We call each sequent
in a linear nested sequent a \emph{component} and slightly abuse
notation and abbreviate ``linear nested sequent'' to $\LNS$.  The
standard interpretation for linear nested sequents for modal logic
$\K$ is given by:
\begin{align*}
  \iota_\Box(\Gamma \seq \Delta) \defs & \Land \Gamma \iimp \Lor \Delta\\
  \iota_\Box(\Gamma \seq \Delta \lns \mathcal{G}) \defs & \Land \Gamma \iimp
                                                    \Lor \Delta \lor\Box
                                                    \iota_\Box(\mathcal{G})
\end{align*}
As usual, we take a conjunction and disjunction over an empty multiset
to be $\top$ and $\bot$, respectively.
\end{definition}

Thus, the nesting operator $\lns$ of linear nested sequents is
interpreted as a structural connective for the modal box on the right
hand side of a sequent. Note that this is essentially the standard
interpretation of the brackets $\nes{.}$ of nested sequents using the
two-sided sequents of~\cite{Bull:1992} instead of the single-sided
formulation of~\cite{Brunnler:2009kx}. Since we only consider
\emph{linear} nested sequents, we use $\lns$ instead of iterated
brackets to increase readability.

\begin{example}\label{ex:lns-interpretation}
  Consider the logic $\K$.
    \begin{enumerate}
    \item The formula interpretation of the linear nested sequent
      $\;\seq A \lns A\seq \;$ is $\top \to A \lor \Box (A \to \bot)$
      which is equivalent to $A \lor \Box \neg A$.\label{ex:triv}
    \item The formula interpretation of the linear nested sequent
      $\Box A\seq \; \lns \; \seq \; \lns \;\seq A$ is \\
      $\Box A \to
      \bot \lor \Box( \top \to \bot \lor \Box(\top \to A))$ which is equivalent to $\Box A
      \to \Box \Box A$.\label{ex:density}
    \end{enumerate}
\end{example}

\begin{remark}
  It is worth noting that while the structure of a linear nested
  sequent as a list of ordinary sequents is the same as that of a
  \emph{hypersequent} (see, e.g.,~\cite{Avron:1996kx}), there is an
  important difference between the two frameworks. In virtually all
  hypersequent calculi the formula interpretation of a hypersequent is
  given by some form of disjunction. E.g., in the context of modal
  logics the standard formula interpretation of the hypersequent
  $\Gamma_1 \seq \Delta_1 \mid \Gamma_2 \seq \Delta_2 \mid
  \Gamma_3\seq \Delta_3$ would be given by
  $\Box (\Land \Gamma_1 \to \Lor \Delta_1) \lor \Box (\Land \Gamma_2
  \to \Lor \Delta_2) \lor \Box (\Land \Gamma_3 \to \Lor \Delta_3)$. In
  particular, every component of the hypersequent is interpreted
  uniformly under exactly one application of $\Box$. In contrast, the
  formula interpretation of the linear nested sequent
  $\Gamma_1 \seq \Delta_1 \lns \Gamma_2 \seq \Delta_2 \lns \Gamma_3
  \seq \Delta_3$ according to the interpretation $\iota_\Box$ from
  above is given by
  $\Land \Gamma_1 \to \Lor \Delta_1 \lor \Box \left( \Land \Gamma_2
    \to \Lor \Delta_2 \lor \Box \left(\Land \Gamma_3 \to \Lor
      \Delta_3\right)\right)$. Crucially, every component is
  interpreted under a number of modal operators which depends on its
  position in the linear nested sequent.  So the formula
  interpretations of the hypersequents $\;\seq A \mid A \seq \;$ and
  $\Box A \seq \; \mid \; \seq \; \mid \;\seq A$ corresponding to the
  linear nested sequents of Ex.~\ref{ex:lns-interpretation} would be
  given by the formulae $\Box(\top \to A) \lor \Box (A \to \bot)$ and
  $\Box( \Box A \to \bot) \lor \Box (\top \to \bot) \lor \Box(\top \to
  A)$ respectively, which are equivalent to the formulae
  $\Box A \lor \Box \neg A$ and
  $\Box \neg \Box A \lor \Box \bot \lor \Box A$ respectively. Clearly,
  these interpretations are rather different from the ones in the
  linear nested sequent framework.  Accordingly, virtually all
  hypersequent calculi contain a rule like the \emph{external exchange
    rule} which permits a reordering of the components. However, under
  the linear nested sequent formula interpretation and for the logics
  considered here this rule would not be sound. In line with this
  observation, linear nested sequent calculi have also been considered
  as hypersequent calculi without the external exchange rule under the
  name of \emph{non-commutative hypersequents}
  e.g.~in~\cite{Indrzejczak:2016} and in their tableaux version as
  \emph{path-hypertableaux} in~\cite{Ciabattoni:2000}.  A more
  detailed investigation on the connection between linear nested
  sequents and hypersequents is contained in~\cite{Lellmann:2015lns}.
\end{remark}

In this work we consider only modal logics based on classical
propositional logic, and we take the system $\LNS_\G$
(Fig.~\ref{fig:lns-G}) as our base calculus. The linear nested sequent
versions of the standard (internal) structural rules are given in
Fig.~\ref{fig:struct-rules}.

\begin{definition}
  For a system $\mathcal{C}$ of linear nested sequent rules, we define
  a \emph{derivation} to be a finite directed tree where each node is
  labelled with a linear nested sequent in such a way that the linear
  nested sequent associated to each node is obtained from the linear
  nested sequents associated to its immediate successors by an
  application of one of the rules from $\mathcal{C}$. In particular,
  each leaf of a derivation is labelled with the conclusion of an
  instance of a zero-premiss rule, i.e., one of the rules
  $\init, \bot_L, \top_R$. The label of the root of a derivation is
  also called the \emph{conclusion} of that derivation, and we say
  that a linear nested sequent $\mathcal{G}$ is \emph{derivable in the
    system $\mathcal{C}$}, in symbols
  $\entails_{\mathcal{C}} \mathcal{G}$, if there is a derivation in
  $\mathcal{C}$ with conclusion $\mathcal{G}$. The \emph{depth} of a
  derivation is the length of the longest branch in the underlying
  directed tree plus one. In the following we
  will denote by $\LNS_{\mathcal{L}}$ a linear nested sequent system
  for a logic $\mathcal{L}$ obtained by adding a certain set of
  rules for the modal operators to the system $\LNS_\G$. By
  $\LNS_{\mathcal{L}}\Con\W$ we denote the extension of the system
  $\LNS_{\mathcal{L}}$ with the structural rules of contraction and
  weakening from Fig.~\ref{fig:struct-rules}, where we abbreviate
  $\ConL,\ConR$ to $\Con$ and $\W_L,\W_R$ to $\W$.
\end{definition}

Observe that $\LNS_\G$ is the linear nested version of the well known
system $\Gtcp$ from~\cite{Troelstra:2000fj} plus explicit rules for
negation.  The reason for considering the structural rules explicitly
is that, while in the logical systems considered in
Section~\ref{sec:non-normal-modal} contraction and weakening are
admissible (see Lemmas~\ref{weakening} and~\ref{contraction}), some of
the systems in Sections~\ref{sec:simply-depend-modal}
and~\ref{sec:dosens-principle} are based on sequent calculi which
include explicit contraction and weakening.  As a side remark, it is
worth noticing that the approach presented here could be easily
adapted to having $\LKF$ \cite{liang09tcs} as the base logical system,
since such a decision would not alter the proof theory developed for
the modal connectives.  Choosing $\LNS_\G$ has the advantage that the
initial sequents are atomic, weakening, contraction and cut are
admissible and all propositional rules are invertible.
\begin{figure}[t]
  \hrule\smallskip
  \[
  \infer[\mathsf{init}]{\mathcal{S}\con{\Gamma, p \seq p, \Delta}}{}
  \quad
   \infer[\bot_L]{\mathcal{S}\con{\Gamma,\bot \seq \Delta}}{}
  \quad
  \infer[\top_R]{\mathcal{S}\con{\Gamma \seq \top, \Delta}}{}
  \quad
  \infer[\neg_L]{\mathcal{S}\con{\Gamma, \neg A \seq  \Delta}}{\mathcal{S}\con{\Gamma \seq A, \Delta}}
  \quad
    \infer[\neg_R]{\mathcal{S}\con{\Gamma \seq \neg A,  \Delta}}{\mathcal{S}\con{\Gamma,A \seq  \Delta}}
  \]
  \[
  \infer[\lor_L]{\mathcal{S} \con{\Gamma,  A \lor B \seq
      \Delta}}{\mathcal{S} \con{\Gamma,A \seq \Delta} &
    \mathcal{S}\con{\Gamma,B \seq  \Delta}}
  \qquad
  \infer[\land_L]{\mathcal{S}\con{\Gamma,A\land B \seq \Delta}}{\mathcal{S}
    \con{\Gamma, A, B \seq \Delta}}
  \qquad
  \infer[\iimp_L]{\mathcal{S}\con{\Gamma,A \iimp B \seq\Delta}
  }{\mathcal{S}\con{\Gamma,B \seq \Delta} & \mathcal{S}\con{\Gamma
      \seq A,\Delta}}
  \]
  \[
  \infer[\lor_R]{\mathcal{S}\con{\Gamma \seq A\lor B, \Delta}}{\mathcal{S}
    \con{\Gamma \seq A, B, \Delta}}
   \qquad
  \infer[\land_R]{\mathcal{S} \con{\Gamma \seq A \land B,
      \Delta}}{\mathcal{S} \con{\Gamma \seq A,\Delta} &
    \mathcal{S}\con{\Gamma \seq B, \Delta}}
  \qquad\infer[\iimp_R]{\mathcal{S}\con{\Gamma \seq A \iimp
      B,\Delta}}{\mathcal{S}\con{\Gamma,A \seq B,\Delta}}
  \]
 \hrule
  \caption{System $\LNS_{\G}$ for
    classical propositional logic. In the $\init$ rule, $p$ is atomic.
  }
  \label{fig:lns-G}
\end{figure}
\begin{figure}
  \hrule
  \[
    \infer[\ConL]{\mathcal{S} \con{\Gamma, A \seq \Delta}
    }
    {\mathcal{S} \con{\Gamma, A, A \seq \Delta}
    }
    \qquad
    \infer[\ConR]{\mathcal{S} \con{\Gamma \seq A, \Delta}
    }
    {\mathcal{S} \con{\Gamma \seq A, A, \Delta}
    }
    \qquad
    \infer[\W_L]{\mathcal{S} \con{\Gamma, A \seq
        \Delta}
    }
    {\mathcal{S}\con{\Gamma \seq \Delta}
    }
    \qquad
    \infer[\W_R]{\mathcal{S} \con{\Gamma \seq A, \Delta}
    }
    {\mathcal{S} \con{\Gamma \seq \Delta}
    }
  \]
  \hrule
  \caption{The structural rules of contraction and weakening.}
  \label{fig:struct-rules}
\end{figure}

Fig.~\ref{fig:lns-K} presents the modal rules for the linear nested
sequent calculus $\LNS_{\K}$ for $\K$, essentially a linear version of
the standard nested sequent calculus
from~\cite{Brunnler:2009kx,Poggiolesi:2009vn}. Thus, the calculus
$\LNS_\K$ contains the rules of $\LNS_\G$ together with the rules of
Fig.~\ref{fig:lns-K}.

Conceptually, the main point is that the sequent rule $\mathsf{k}$ is
split into the two rules $\Box_L$ and $\Box_R$, which permit to
simulate the sequent rule treating one formula at a time. While this
is one of the main features of nested sequent calculi and deep
inference in general~\cite{Guglielmi:2001}, being able to separate the
left/right behaviour of the modal connectives is the key to modularity
for nested and linear nested sequent
calculi~\cite{Strasburger:2013,Lellmann:2015lns}.
\begin{figure}[t]
  \hrule
  \smallskip
  \[
     \infer[\Box_L]{\mathcal{S}\con{\Gamma,\Box A \seq \Delta\lnes{\Sigma
          \seq \Pi}}}{\mathcal{S} \con{\Gamma \seq
        \Delta\lnes{\Sigma,A \seq \Pi}}}
           \qquad
           \infer[\Box_R]{\mathcal{G} \lnes{\Gamma\seq \Delta,\Box A}}{\mathcal{G}
      \lnes {\Gamma\seq\Delta\lnes{\;\seq A}}}
    \]
  \hrule
  \caption{The modal rules of the linear nested sequent calculus $\LNS_{\K}$ for $\K$.}
  \label{fig:lns-K}
\end{figure} 

Completeness of $\LNS_\K$ w.r.t.\ modal logic $\K$ is shown by
simulating a sequent derivation bottom-up in the last two components
of the linear nested sequents, marking applications of modal rules by
the nesting $\lns$ and simulating the $\mathsf{k}$-rule by a block of
$\Box_L$ and $\Box_R$ rules~\cite{Lellmann:2015lns}. Hence, an
application of $\mathsf{k}$ on a branch with history captured by the
$\LNS$ $\mathcal{G}$ is simulated by:
\[
\begin{array}{ccc}
\vcenter{
\infer*[\mathcal{G}]{}{
\infer[\mathtt{k}]
{\Gamma',\Box \Gamma\seq\Box A,\Delta
}
{\Gamma\seq A}
  }
  }
& \quad\rightsquigarrow \quad
& 
\vcenter{\infer[\Box_R]{\Gscr\lns\Gamma',\Box\Gamma\seq \Box A,\Delta}
{\infer=[\Box_ L]{\Gscr\lns\Gamma',\Box\Gamma\seq \Delta\ \lns\seq A}
{\deduce{\Gscr\lns\Gamma'\seq \Delta\lns\Gamma\seq A}{}}}}
\end{array}
\]
where the double line indicates multiple rule applications.  Observe
that this method relies on the view of linear nested sequents as
histories in proof search, where intuitively the modal rules mark a
transition or jump to a new state in a corresponding Kripke model. It
also simulates the propositional sequent rules in the \emph{rightmost}
component of the linear nested sequents.  However, while the principal
formulae of the sequent rule can now be handled separately, the modal
rules in the $\LNS$ system do not need to occur in a block
corresponding to one application of the sequent rule anymore. In fact,
one way of deriving the instance
$\Box (p \iimp q) \iimp (\Box p \iimp \Box q)$ of the normality axiom
for modal logic $\K$ is as follows.
\[
\infer=[\iimp_R]{\;\seq \Box (p \iimp q) \iimp (\Box p \iimp \Box q)
}
{\infer[\Box_R]{\Box (p \iimp q), \Box p \seq \Box q
  }
  {\infer[\Box_L]{\Box (p \iimp q),\Box p \seq \; \lns  \seq q}
  {\infer[\iimp_L]{\Box p \seq \; \lns p \iimp q \seq q
    }
    {\infer[\init]{\Box p \seq \; \lns q \seq q}{}
    &
    \infer[\Box_L]{\Box p \seq \; \lns \;\seq p,q
      }
      {\infer[\init]{\;\seq\;\lns p \seq p,q}{}
      }
    }
  }
}
}
\] 
Note that the propositional rule $\iimp_L$ is applied between two
modal rules. Hence there are many derivations in $\LNS_\K$ which are
not the result of simulating a derivation of the sequent calculus for
$\K$. Thus, while the linear nested sequent calculus $\LNS_\K$ has
conceptual advantages over the standard sequent calculus for $\K$, its
behaviour in terms of proof search is worse: there are many more
possible derivations with the same conclusion, when compared to the
sequent calculus. In Section~\ref{sec:labelled}, we will consider how
to restrict proof search to a smaller class of derivations, while
retaining the conceptual advantages of the framework.


\section{Linear nested sequent systems and modality}\label{sec:examples}

In~\cite{Lellmann:2015lns} the method of granulizing sequent rules
into linear nested sequent rules was applied to some basic modal
logics and to the multi-succedent calculus for intuitionistic logic.
In the following, we will considerably extend these results and show
that the linear nested sequent framework is a \emph{good} formalism for
a large class of modal systems. We first concentrate on non trivial
extensions of multimodal $\K$ (the so called simply dependent
multimodal logics) and then we show how to modularly extend the $\LNS$
approach also for handling non-normal modal logics.


\subsection{Simply dependent multimodal logics}
\label{sec:simply-depend-modal}

As a first example we consider multimodal logics with a simple
interaction between the modalities, called \emph{simply dependent
  multimodal logics}~\cite{Demri:2000}. The language for these logics
contains indexed modalities $\ibox{i}$ for indices $i$ from an index
set $N \subseteq \nat$ of natural numbers. In a Hilbert-style
presentation of these logics, the axioms are given by extensions of
the axioms of modal logic $\K$ for every modality $\ibox{i}$ together
with \emph{interaction axioms} of the form
$\ibox{i} A \iimp \ibox{j} A$. A simple example of such a logic is the
simply dependent bimodal logic $\K\T \oplus_\subseteq \Sf$
from~\cite{Demri:2000}, whose language contains the two modalities
$\ibox{1}$ and $\ibox{2}$. Its Hilbert-style axiomatisation is given
by the axioms and rules of classical propositional logic together with
the axioms and rules of modal logic $\K\T$ for the modality
$\ibox{1}$, i.e., axioms $\K$ and $\T$ and the rule $\mathsf{nec}$,
the $\Sf$ axioms for the modality $\ibox{2}$, i.e., axioms $\K,\T,\4$
and rule $\mathsf{nec}$, and the single interaction axiom
$\ibox{2} A \iimp \ibox{1} A$. Standard modal logics such as $\K$ or
extensions also can be seen as the trivial case of simply dependent
multimodal logics where the index set $N$ is a singleton. Other
examples include multimodal logics with a \emph{justified knowledge}
or ``any fool knows'' modality
from~\cite{Artemov:2006,McCarthy:1978}. Here and in the following we
will identify a logic with its set of theorems and write
$A \in \mathcal{L}$ if the formula $A$ is a theorem of logic
$\mathcal{L}$, i.e., derivable in the Hilbert-style system for
$\mathcal{L}$.

A general framework to describe simply dependent multimodal logics was
given in~\cite[Sec.~4]{Achilleos:2016}. There, such a logic is given
essentially by a triple $(N,\cless,F)$, where $N$ is a finite set of
natural numbers, $(N,\cless)$ is a partial order (i.e., transitive,
reflexive and antisymmetric), and $F$ is a mapping from $N$ to a set
$\mathfrak{L}$ of logics.

In the present work, we will take $\mathfrak{L}$ to be the set of
extensions of modal logic $\K$ with axioms from the set $\{\D,\T,\4\}$
(see Fig.~\ref{fig:modal-axioms}). The \emph{logic described by}
$(N,\cless,F)$ then has modalities $\ibox{i}$ for every $i \in N$,
with axioms for the modality $i$ given by the logic $F(i)$ and
interaction axioms $\ibox{j} A \iimp \ibox{i} A$ for every $i,j \in N$
with $i \cless j$. We write $\mathcal{L}_{(N,\cless,F)}$ for the logic
described by $(N,\cless,F)$.
\begin{figure}[t]
  \hrule
  \[
    \K\; \Box (A \iimp B) \iimp (\Box A \iimp \Box B)
    \qquad
    \vcenter{\infer[\mathsf{nec}]{\Box A}{A}}
    \qquad
    \D\; \neg \Box \bot
    \qquad
    \T\; \Box A \iimp A
    \qquad
    \4\; \Box A \iimp \Box \Box A
  \]
  \hrule
  \caption{
  Some modal axioms and rule $\mathsf{nec}$. Modal logic $\K$ contains
    the propositional tautologies, modus ponens, $\K$ and
    $\mathsf{nec}$.}\label{fig:modal-axioms}
\end{figure}

\begin{example}\label{ex:simply-dependent-logic}
The simply dependent bimodal
logic $\K\T\oplus_\subseteq\Sf$ is given by the description
$(N,\cless,F)$ with $N = \{ 1,2 \}$, and $F(1) = \K\T$, $F(2) =
\Sf$, where $\cless$ is given by $1\cless 1, 1 \cless 2, 2 \cless 2$.
\end{example}

The following definition extends the concept of frames to simply
dependent multimodal logic.  The notions of valuations, model and
truth in a world of the model are defined as usual (see,
e.g.,~\cite{Chellas:1980fk,Blackburn:2001fk}).
\begin{definition}
  Let $(N,\cless,F)$ be a description for a simply dependent
  multimodal logic. A {\em $(N,\cless,F)$-frame} is a tuple
  $(W,(R_i)_{i\in N})$ consisting of a set $W$ of \emph{worlds} and an
  \emph{accessibility relation} $R_i$ for every index $i \in N$, such
  that for all $i,j \in N$:
  \begin{itemize}
  \item If the logic $F(i)$ contains $\K\D$, then $R_i$ is serial.
  \item If the logic $F(i)$ contains $\K\T$, then $R_i$ is reflexive.
  \item If the logic $F(i)$ contains $\K\4$, then $R_i$ is transitive.
  \item If $i \cless j$, then $R_i \subseteq R_j$.
  \end{itemize}
\end{definition}

Since here we only consider simply dependent multimodal logics where
the different component logics are extensions of $\K$ with axioms from
$\{\D,\T,\4\}$, and since the interaction axioms are of a particularly
simple shape, standard results e.g.\ from Sahlqvist
theory~\cite[Thm.~4.42]{Blackburn:2001fk} immediately yield soundness
and completeness:

\begin{theorem}
  The modal logic given by the description $(N,\cless,F)$ is the logic
  of the class of $(N,\cless,F)$-frames, i.e., a formula is a
    theorem of the logic $\mathcal{L}_{(N,\cless,F)}$ iff it is valid in
    all $(N,\cless,F)$-frames.
\end{theorem}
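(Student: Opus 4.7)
My plan is to establish soundness and completeness separately, treating the theorem as a standard application of multimodal correspondence theory. The soundness direction is a routine induction; the completeness direction, as the authors indicate, reduces to verifying that every axiom in sight is a Sahlqvist formula so that the canonical model argument delivers a frame of exactly the stipulated class.

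For soundness, I would argue by induction on the length of a Hilbert derivation that every theorem of $\mathcal{L}_{(N,\cless,F)}$ is valid in every $(N,\cless,F)$-frame. The propositional base and modus ponens are immediate. For each index $i \in N$, the axiom $\K$ and the necessitation rule hold on any Kripke frame, while $\D$, $\T$, $\4$, when contained in $F(i)$, are valid exactly because the definition of $(N,\cless,F)$-frame forces $R_i$ to be serial, reflexive, or transitive respectively, which are their well-known first-order correspondents. For each interaction axiom $\ibox{j} A \iimp \ibox{i} A$ with $i \cless j$, validity is direct: if $w \models \ibox{j} A$ and $w R_i v$, then $w R_j v$ by the inclusion $R_i \subseteq R_j$, and hence $v \models A$, giving $w \models \ibox{i} A$.

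For completeness, the key observation is that all axioms in play are Sahlqvist formulas in the multimodal signature: $\K$, $\D$, $\T$, $\4$ are textbook examples, and the interaction axiom $\ibox{j} A \iimp \ibox{i} A$ is a Sahlqvist implication (its antecedent is built from boxed atoms, its consequent is positive), with first-order correspondent $\forall x \forall y\,(R_i(x,y) \iimp R_j(x,y))$. Hence by the multimodal version of Sahlqvist's completeness theorem~\cite[Thm.~4.42]{Blackburn:2001fk} the canonical frame of $\mathcal{L}_{(N,\cless,F)}$ validates exactly the conjunction of the first-order conditions corresponding to all axioms, which is precisely the defining condition of a $(N,\cless,F)$-frame. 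Any non-theorem is therefore refuted on this frame.

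The only real obstacle is bookkeeping: one has to check that the order-theoretic properties of $\cless$ play no independent semantic role beyond pruning the set of interaction axioms. Reflexivity of $\cless$ only yields the trivial axiom $\ibox{i} A \iimp \ibox{i} A$; transitivity of $\cless$ makes the chain of inclusions $R_i \subseteq R_j \subseteq R_k$ consistent with the axioms $\ibox{j} A \iimp \ibox{i} A$ and $\ibox{k} A \iimp \ibox{j} A$ already present; antisymmetry is not used at all for soundness or completeness. The substantive content is entirely captured by the pointwise conditions imposed by $F(i)$ together with $R_i \subseteq R_j$ for $i \cless j$, which is what the Sahlqvist correspondence delivers.
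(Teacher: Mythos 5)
Your proposal is correct and follows exactly the route the paper takes: the paper gives no explicit proof but justifies the theorem by the observation that all axioms involved ($\K$, $\D$, $\T$, $\4$ and the interaction axioms $\ibox{j} A \iimp \ibox{i} A$) are Sahlqvist formulas whose first-order correspondents are precisely the defining conditions of $(N,\cless,F)$-frames, citing the same Thm.~4.42 of Blackburn et al. You have merely spelled out the routine soundness induction and the Sahlqvist bookkeeping that the paper leaves implicit.
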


By standard modal reasoning we immediately obtain the following lemma
stating upwards propagation of the modal axioms $\D$ and $\T$.

\begin{lemma}
  Let $(N,\cless,F)$ be a description of a simply dependent
  multimodal logic $\mathcal{L}$. Then for every $i \in N$:
  \begin{itemize}
  \item If $\K\D \subseteq F(i)$, then for every $j \in N$ with $i
    \cless j$, $\neg \ibox{j} \bot$ is also a theorem of $\mathcal{L}$.
  \item If $\K\T \subseteq F(i)$, then for every $j \in N$ with $i
    \cless j$,  $\ibox{j} A \iimp A$ is also a theorem of $\mathcal{L}$.
  \end{itemize}
\end{lemma}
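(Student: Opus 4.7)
The proof is an immediate application of the interaction axioms, combined with propositional reasoning in the Hilbert-style system for $\mathcal{L}_{(N,\cless,F)}$. The plan is to use the axiom $\ibox{j} A \iimp \ibox{i} A$ (which is available for every $i,j \in N$ with $i \cless j$) to ``transport'' the $\D$ or $\T$ property from modality $i$ to modality $j$.

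For the first item, suppose $\K\D \subseteq F(i)$, so that $\neg \ibox{i} \bot$ is a theorem of $\mathcal{L}$. Fix $j \in N$ with $i \cless j$. Instantiating the interaction axiom with $A \defs \bot$ gives $\ibox{j} \bot \iimp \ibox{i} \bot$ as a theorem of $\mathcal{L}$. Classical contraposition then yields $\neg \ibox{i} \bot \iimp \neg \ibox{j} \bot$, and modus ponens with $\neg \ibox{i} \bot$ produces $\neg \ibox{j} \bot$ as desired.

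For the second item, suppose $\K\T \subseteq F(i)$, so that $\ibox{i} A \iimp A$ is a theorem of $\mathcal{L}$ for every formula $A$. Fix $j \in N$ with $i \cless j$. The interaction axiom gives $\ibox{j} A \iimp \ibox{i} A$. Chaining the two implications by propositional reasoning (transitivity of $\iimp$) yields $\ibox{j} A \iimp A$ as a theorem of $\mathcal{L}$.

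There is essentially no obstacle here: both parts rely only on (i) the availability of the interaction axiom for the pair $(i,j)$ with $i \cless j$, which is guaranteed by the description of $\mathcal{L}_{(N,\cless,F)}$, and (ii) elementary propositional manipulations that are part of the Hilbert-style base. Note that we do not need to appeal to the completeness theorem via $(N,\cless,F)$-frames, although a semantic argument would work equally well: in any $(N,\cless,F)$-frame, $i \cless j$ forces $R_i \subseteq R_j$, so seriality of $R_i$ entails seriality of $R_j$, and reflexivity of $R_i$ entails reflexivity of $R_j$, from which validity of $\neg \ibox{j} \bot$ respectively $\ibox{j} A \iimp A$ is immediate.
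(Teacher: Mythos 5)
Your proof is correct and follows essentially the same route as the paper: the paper's proof simply says the result is obtained "by closing the axioms under the interaction axioms $\ibox{j} A \iimp \ibox{i} A$ for $i \cless j$" (or alternatively from the semantic characterisation), and you have spelled out exactly that argument, including the same semantic alternative via $R_i \subseteq R_j$.
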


\begin{proof}
  By closing the axioms under the interaction axioms $\ibox{j} A \iimp
  \ibox{i} A$ for $i \cless j$. Alternatively, this can be seen from
  the semantical characterisation.
\end{proof}

Hence we may assume, without loss of generality, that for any
description $(N,\cless,F)$ and any $i \in N$, if $\K\D\subseteq F(i)$
(or $\K\T\subseteq F(i)$), then for every $j \in N$ with $i\cless j$
we have $\K\D \subseteq F(j)$ (resp. $\K\T \subseteq F(j)$). As a more
economic notation we also write $\upset{i}$ for the \emph{upset} of
the index $i$, i.e., the set $\{j \in N : i \cless j\}$. Furthermore,
in light of the comments above we extend this notation to the sets
$\upset[\mathsf{Ax}]{i} \defs \{ j \in N : i \cless j, \,
\K\mathsf{Ax}\subseteq F(j)\}$ and
$\upset[\neg\mathsf{Ax}]{i} \defs \{ j \in N : i \cless j, \,
\K\mathsf{Ax}\not\subseteq F(j)\}$ where $\mathsf{Ax}$ is any of the
axioms $\D,\T,\4$. Thus e.g.\ the set $\upset[\neg\4]{i}$ is the set
of indices $j$ with $i\cless j$ such that $\K\4 \not\subseteq F(j)$,
i.e., the logic $F(j)$ does not derive the transitivity axiom $\4$.

The next step is to obtain cut-free sequent calculi for logics of this
family. In order to obtain cut-free completeness, i.e., completeness
without the cut rule, we also need the set of transitive logics to be
upwards closed. Formally:

\begin{definition}
  A description $(N,\cless,F)$ is \emph{transitive-closed} if for
  every $i,j\in N$ with $i\cless j$, if $\K\4 \subseteq F(i)$ then
  $\K\4\subseteq F(j)$.
\end{definition}

Using the method of \emph{cut elimination by saturation} for sequent
rules with restrictions on the context, as developed
in~\cite{Lellmann:2013fk,Lellmann:2013}, it is then reasonably
straightforward to construct cut-free sequent calculi for simply
dependent multimodal logics given by a transitive-closed
description. Since the actual construction of the sequent rules is not
central to this paper, we will omit the details.  The resulting modal
rules and rule sets are given in Fig.~\ref{fig:simply-dep-multimodal}.

\begin{definition}\label{def:seq-simplpy-dep}
  The restriction of the propositional calculus $\LNS_\G$ from
  Fig.~\ref{fig:lns-G} to sequents is denoted by $\G$. If
  $(N,\cless,F)$ is a description for a simply dependent multimodal
  logic, then $\G_{(N,\cless,F)}$ is the sequent calculus extending
  the propositional calculus $\G$ with the modal rules
  $\mathcal{R}_{(N,\cless,F)}$ according to
  Fig.~\ref{fig:simply-dep-multimodal}.
\end{definition}
\begin{figure}[t]
  \hrule
  \[
    \infer[\mathsf{k}_i]{\Omega,\{\ibox{j} \Gamma_j, \ibox{j} \Sigma_j
      : j\in \upset[\4]{i}\}, \{ \ibox{j} \Sigma_j :
      j\in\upset[\neg\4]{i}\} \seq \ibox{i} A, \Xi}{\{\ibox{j}
      \Gamma_j, \Sigma_j : j \in \upset[\4]{i}\},
      \{ \Sigma_j : j \in \upset[\neg\4]{i}\} \seq A}
  \]
  \[
    \infer[\mathsf{d}_i]{\Omega,\{\ibox{j} \Gamma_j, \ibox{j} \Sigma_j
      : j \in \upset[\4]{i}\}, \{ \ibox{j} \Sigma_j : j\in
      \upset[\neg\4]{i}\} \seq \Xi}{\{\ibox{j} \Gamma_j,
      \Sigma_j : j \in \upset[\4]{i}\}, \{ \Sigma_j : j \in
      \upset[\neg\4]{i}\} \seq \;} \qquad\quad
    \infer[\mathsf{t}_i]{\Omega, \{\ibox{j} \Sigma_j : j \in
      \upset{i}\} \seq \Xi }{\Omega, \{\Sigma_j : j
      \in \upset{i} \} \seq \Xi}
  \]
  \[
    \mathcal{R}_{(N,\cless,F)} \defs \{ \mathsf{k}_i : i \in N \} \cup
    \{ \mathsf{d}_i : i \in N, \K\D \subseteq F(i)\} \cup \{
    \mathsf{t}_i : i \in N, \K\T \subseteq F(i) \}
  \]
  \hrule
  \caption{The modal sequent rules for the simply dependent multimodal
    logic given by the transitive-closed description $(N, \cless, F)$.}
  \label{fig:simply-dep-multimodal}
\end{figure}

The intuition behind the rules perhaps is best obtained by considering
an example:

\begin{example}
  Continuing our Ex.~\ref{ex:simply-dependent-logic}, in the case of
  the logic $\K\T \oplus_\subseteq \Sf$ we have
  $\K\D \subseteq \K\T \subseteq F(i)$ for $i = 1,2$ and
  $\K\4 \subseteq F(2)$ but $\K\4 \not\subseteq F(1)$. Hence we have
  $\upset{1} = \{ 1,2 \}, \upset{2} = \{2\}$ and furthermore
  $\upset[\4]{1} = \{ 2 \}, \upset[\neg \4]{1}= \{ 1\}$ as well as
  $\upset[\4]{2} = \{ 2\}, \upset[\neg \4]{2} = \emptyset$. Thus the
  sequent calculus $\G_{\K\T \oplus_\subseteq \Sf}$ for this logic
  contains the following modal rules, obtained as specific instances
  of the rules given in Fig.~\ref{fig:simply-dep-multimodal}: 
\[
  \infer[\mathsf{k}_1]{\Omega,\ibox{2} \Gamma_2,\ibox{2} \Sigma_2,\ibox{1}
    \Sigma_1 \seq \ibox{1} A,\Xi}{\ibox{2} \Gamma_2,
    \Sigma_2,\Sigma_1 \seq A} \quad
  \infer[\mathsf{k}_2]{\Omega,\ibox{2} \Gamma_2, \ibox{2} \Sigma_2 \seq \ibox{2}
    A,\Xi}{\ibox{2} \Gamma_2, \Sigma_2 \seq A}
\]
\[
  \infer[\mathsf{d}_1]{\Omega, \ibox{2} \Gamma_2, \ibox{2}\Sigma_2,
    \ibox{1}\Sigma_1 \seq \Theta}{\ibox{2}\Gamma_2,
    \Sigma_2, \Sigma_1 \seq \;}
  \quad
  \infer[\mathsf{d}_2]{\Omega,\ibox{2} \Gamma_2, \ibox{2}\Sigma_2
    \seq \Theta}{\ibox{2} \Gamma_2, \Sigma_2
    \seq\;}
  \quad
  \infer[\mathsf{t}_1]{\Omega,\ibox{1} \Sigma_1 \seq
    \Theta}{\Omega,\Sigma_1\seq\Theta}
  \quad
  \infer[\mathsf{t}_2]{\Omega,\ibox{2} \Sigma_2 \seq
    \Theta}{\Omega,\Sigma_2\seq\Theta}
  \quad
\]
Note that this rule set could still be simplified in two ways. First
we observe that the rules $\mathsf{d}_1, \mathsf{d}_2$ are derivable
using rules $\mathsf{t}_1,\mathsf{t}_2$ and weakening, and hence could
be omitted from the rule set. For the sake of a uniform presentation
we decided to keep them. Further, the rules $\mathsf{t}_i$ could be
restricted to the more traditional version with only a single
principal formula.  We chose the current formulation because this is
the form of the rules which is obtained directly from the construction
and facilitates a more uniform cut elimination proof.
\end{example}

\begin{remark}\label{rem:modularity-simply-dep}
  The previous example also serves to illustrate the issues with
  \emph{modularity} in the sequent framework: suppose we wanted to
  obtain a cut-free sequent system for the logic
  $\K\T \oplus_{\subseteq} \K\T$ instead of the logic
  $\K\T \oplus_{\subseteq} \Sf$, i.e., we only drop the axiom
  $\ibox{2} A \to \ibox{2}\ibox{2} A$ from the Hilbert-style
  system. Then to obtain the calculus
  $\G_{\K\T \oplus_{\subseteq}\K\T}$ from the calculus
  $\G_{\K\T \oplus_{\subseteq}\Sf}$ above we would need to drop the
  context formulae $\ibox{2}\Gamma_2$ from \emph{each} of the rules
  $\mathsf{k}_1, \mathsf{k}_2, \mathsf{d}_1, \mathsf{d}_2$. This is no
  accident: in general, adding or deleting one axiom from the
  Hilbert-style presentation of a logic requires heavy modifications
  of the corresponding sequent calculi, which need to take \emph{all}
  rules of that calculus into account.
\end{remark}

While soundness and completeness of the calculi $\G_{(N,\cless,F)}\Con\W$
follow directly from the construction, for later reference and the reader not familiar
with the general construction we state them explicitly and
briefly sketch 
the proofs.

\begin{theorem}
  Let $(N,\cless,F)$ be a transitive-closed description of a simply
  dependent multimodal logic. Then the sequent calculus
  $\G_{(N,\cless,F)}\Con\W$ is sound with respect to this logic, i.e.,
  for every formula $A$ we have that $A\in \mathcal{L}_{(N,\cless,F)}$
  if $\entails_{\G_{(N,\cless,F)}\Con\W} \;\seq A$.
\end{theorem}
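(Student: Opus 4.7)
The plan is to prove soundness semantically, leveraging the frame-theoretic characterisation just established: a formula lies in $\mathcal{L}_{(N,\cless,F)}$ iff it is valid in every $(N,\cless,F)$-frame. It therefore suffices to show that if $\entails_{\G_{(N,\cless,F)}\Con\W}\;\Gamma\seq\Delta$, then the formula interpretation $\Land\Gamma\iimp\Lor\Delta$ is valid in every such frame. I would proceed by induction on the depth of the derivation, checking that each rule preserves frame-validity; specialising to conclusions of the form $\;\seq A$ then yields the theorem.

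The propositional rules of $\LNS_\G$ (restricted to ordinary sequents) and the weakening and contraction rules preserve validity in every Kripke model by standard arguments, so the content lies entirely with the three modal schemes in Fig.~\ref{fig:simply-dep-multimodal}. Fix an $(N,\cless,F)$-frame, a valuation, and a world $w$ at which the antecedent of the conclusion is satisfied. For $\mathsf{k}_i$, assume $\ibox{i} A$ fails at $w$; pick an $R_i$-successor $v$ with $A$ false at $v$. For each $j\in\upset{i}$ the interaction axiom gives $R_i\subseteq R_j$, hence $w R_j v$; so every $\ibox{j}\Sigma_j$ true at $w$ yields $\Sigma_j$ true at $v$. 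For $j\in\upset[\4]{i}$ the relation $R_j$ is transitive, which is precisely what is needed to keep $\ibox{j}\Gamma_j$ true at $v$ (so that in the premise we may still write $\ibox{j}\Gamma_j$ rather than just $\Gamma_j$). Thus the antecedent of the premise holds at $v$, and by the inductive hypothesis $A$ holds at $v$, contradiction. For $\mathsf{d}_i$ the argument is analogous, using seriality of $R_i$ to obtain the successor $v$ (since $\K\D\subseteq F(i)$). For $\mathsf{t}_i$, we apply the upset-propagation lemma just proved: since $\K\T\subseteq F(i)$ and $j\in\upset{i}$, we have $\K\T\subseteq F(j)$, so $R_j$ is reflexive, whence $\ibox{j}\Sigma_j$ at $w$ implies $\Sigma_j$ at $w$, and the premise applies directly at $w$.

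The main obstacle — in fact the only subtle point — is the bookkeeping in the $\mathsf{k}_i$ case, where the rule simultaneously unpacks modalities indexed by the entire upset $\upset{i}$ but treats transitive and non-transitive indices differently. Getting the invariants right (that $R_i\subseteq R_j$ for every $j\in\upset{i}$, and that transitivity of $R_j$ is exactly what licenses retaining the $\ibox{j}\Gamma_j$ formulae in the premise) is where one must be careful; everything else is routine and follows the usual recipe for soundness of modal sequent rules.
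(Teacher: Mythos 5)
Your proposal is correct and follows essentially the same route as the paper: soundness is reduced to validity preservation over $(N,\cless,F)$-frames, argued by induction on the depth of the derivation, with the key observations being that $R_i\subseteq R_j$ for $j$ in the upset of $i$ and that transitivity of $R_j$ for $j\in\upset[\4]{i}$ is what licenses keeping the boxed context formulae $\ibox{j}\Gamma_j$ in the premiss. The paper illustrates the argument on $\mathsf{d}_i$ where you illustrate it on $\mathsf{k}_i$, but the content is the same.
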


\begin{proof}
  We use the fact that the logic given by the description is also
  characterised by frames $(W, (R_i)_{i \in N})$ where for $i \in N$
  the accessibility relation $R_i$ satisfies the properties stipulated
  by the logic $F(i)$ (i.e., is serial if $\K\D \subseteq F(i)$,
  reflexive if $\K\T \subseteq F(i)$ and transitive if
  $\K\4 \subseteq F(i)$), and where for every $i,j \in N$ with
  $i \cless j$ we have $R_i \subseteq R_j$. Then it is easy to show
  that all the modal rules preserve validity by showing that if the
  negation of the conclusion is satisfiable in such a frame, then so
  is the premiss. Since the zero-premiss rules are valid, i.e., the
  negation of their formula interpretation is not satisfiable in any
  frame, from this we obtain the soundness statement by induction on
  the depth of the derivation. As an example we fix a description
  $(N,\cless,F)$ and consider the following application of the rule
  $\mathsf{d}_i$ for an index $i\in N$ such that $F(i)$ is serial.
  \[
    \infer[\mathsf{d}_i]{\Omega,\{\ibox{j} \Gamma_j, \ibox{j} \Sigma_j
      : j\in\upset[\4]{i}\}, \{ \ibox{j} \Sigma_j :
      j\in\upset[\neg\4]{i}\} \seq \Xi}{\{\ibox{j} \Gamma_j,
      \Sigma_j : j\in\upset[\4]{i}\}, \{ \Sigma_j : j \in
      \upset[\neg\4]{i}\} \seq \;}
  \]
  If the negation of the conclusion of this rule is satisfiable in a
  $(N,\cless,F)$-model $\mathfrak{M} = (W,(R_i)_{i\in N},\sigma)$,
  then we have a world $w \in W$ such that
  \begin{equation}
    \mathfrak{M},w \forces
    \Land \Omega \land \Land_{j \in \upset[\4]{i}}\left(\Land
      \ibox{j}\Gamma_j\land \Land \ibox{j}\Sigma_j\right) \land
    \Land_{j \in \upset[\neg\4]{i}}\Land \ibox{j}\Sigma_j
    \land \neg\Lor \Xi\;.\label{eq:soundness1}
  \end{equation}
  Since $F(i)$ is serial, there is a world $v \in W$ with $wR_i v$,
  and since $i\cless j$ implies $R_i \subseteq R_j$ for all $i,j \in
  N$, for this $v$ we also have $wR_jv$ for every $j$ with $i\cless
  j$. Hence using~\eqref{eq:soundness1} and transitivity of the
  relations $R_j$ for $j$ with $\K\4 \subseteq F(j)$ we obtain
  \[
    \mathfrak{M},v \forces \Land_{j\in \upset[\4]{i}}\left(\Land
      \ibox{j}\Gamma_j \land \Land
      \Sigma_j\right) \land \Land_{j \in \upset[\neg\4]{i}}\Land
    \Sigma_j\;.
  \]
  Hence the negation of the interpretation of the premiss of this rule
  application is satisfied in $v$. The reasoning for the remaining
  rules is similar.
  \end{proof}

\begin{theorem}
  Let $(N,\cless,F)$ be a transitive-closed description of a simply
  dependent multimodal logic. Then the sequent calculus
  $\G_{(N,\cless,F)}\Con\W$ is (cut-free) complete with respect
  to this logic, i.e., for every formula $A$ we have that
  $A\in \mathcal{L}_{(N,\cless,F)}$ only if
  $\entails_{\G_{(N,\cless,F)}\Con\W} \;\seq A$.
\end{theorem}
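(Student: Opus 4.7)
The plan is to prove completeness by simulating a Hilbert-style derivation in the sequent calculus extended with cut and then invoking cut admissibility, which itself follows from the general method of cut elimination by saturation developed in~\cite{Lellmann:2013fk,Lellmann:2013}. We induct on the length of a Hilbert-style derivation of $A$ in $\mathcal{L}_{(N,\cless,F)}$, building at each step a derivation of $\;\seq A$ in $\G_{(N,\cless,F)}\Con\W$ possibly using cut.

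For the base cases we derive each axiom. Classical tautologies are routine in $\G$. For $i \in N$, axiom $\K$ for $\ibox{i}$ is derived by a single application of $\mathsf{k}_i$ combined with $\iimp_L$. Axioms $\D_i$ and $\T_i$ are derived directly by one application of $\mathsf{d}_i$ and $\mathsf{t}_i$ respectively, and these rules appear in $\mathcal{R}_{(N,\cless,F)}$ exactly when $\K\D \subseteq F(i)$ or $\K\T \subseteq F(i)$. Axiom $\4$ for $i$ with $\K\4 \subseteq F(i)$ is derived by two applications of $\mathsf{k}_i$, exploiting $i \in \upset[\4]{i}$ so that the principal $\ibox{i} A$, placed in the $\ibox{i}\Gamma_i$ context of the conclusion, is carried over to the premiss both in its boxed and its unboxed form. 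The interaction axiom $\ibox{j} A \iimp \ibox{i} A$ for $i \cless j$ follows from a single $\mathsf{k}_i$: since $j \in \upset{i}$, the formula $\ibox{j} A$ fits into the $\ibox{j}\Sigma_j$ slot of the conclusion and contributes $A$ to the premiss, which then closes by $\init$ against the principal $A$. Modus ponens is simulated via cut, and necessitation (from $\;\seq A$ derive $\;\seq \ibox{i} A$) by one application of $\mathsf{k}_i$ with all contexts $\Omega, \Xi, \Gamma_j, \Sigma_j$ empty.

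Cut-free derivability then follows by cut elimination, which we obtain by invoking~\cite{Lellmann:2013fk,Lellmann:2013}: the transitive-closedness of $(N,\cless,F)$ is precisely the condition that makes $\mathcal{R}_{(N,\cless,F)}$ saturated under principal cuts between modal rules. The main obstacle is the verification of this saturation condition. Concretely, a principal cut between a $\mathsf{k}_j$ applied above and a $\mathsf{k}_i$ applied below (or combinations thereof with $\mathsf{d}, \mathsf{t}$) yields a reduced rule whose upset partitions $\upset[\4]{\cdot}, \upset[\neg\4]{\cdot}$ must still match a rule already in $\mathcal{R}_{(N,\cless,F)}$. This is exactly where transitive-closedness is used: it guarantees that $\K\4 \subseteq F(i)$ propagates upwards along $\cless$, so that the partition into $\4$- and $\neg\4$-indices resulting from the cut reduction coincides with the one used in the single rule replacing the cut. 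Modulo this modal case analysis, the propositional cases are standard and admissibility of weakening and contraction is preserved throughout.
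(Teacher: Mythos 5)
Your proposal is correct and follows essentially the same route as the paper: simulate the Hilbert-style axioms and rules in $\G_{(N,\cless,F)}\Con\W$ extended with cut (interaction axioms by one $\mathsf{k}_i$, modus ponens by cut), then eliminate cuts, with transitive-closedness used exactly where you say — to ensure that boxed context formulae $\ibox{j}\Gamma_j$ remain admissible in the rule application produced by a principal modal cut reduction. The paper additionally spells out the multicut elimination as a direct double induction with two worked modal cases, but it explicitly notes that the result also follows from the general saturation criteria of~\cite{Lellmann:2013fk,Lellmann:2013}, which is the route you take.
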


\begin{proof}
  As usual, completeness of the system without a cut rule is shown by
  first showing that every axiom and rule of the Hilbert-style system
  for the logic $\mathcal{L}_{(N,\cless,F)}$ can be simulated in the
  system $\G_{(N,\cless,F)}$ together with the \emph{multicut rule},
  i.e., the following rule, where $n,m \geq 1$ and $A^k$ is an
  abbreviation for the multiset $A,\dots,A$ containing exactly $k$
  copies of the formula $A$:
  \[
    \infer[\mcut]{\Gamma, \Sigma \seq \Delta, \Pi}{\Gamma \seq
      \Delta, A^n & A^m, \Sigma \seq \Pi}
  \]
  We call the formula $A$ in this application of the multicut rule the
  \emph{cut formula}.  Deriving all the axioms for
  $\mathcal{L}_{(N,\cless,F)}$ in the system $\G_{(N,\cless,F)}\Con\W$
  is straightforward: the axioms for the logics $F(i)$ are derived as
  in the monomodal case, and the interaction axioms for $i \cless j$
  are obtained by a single application of the rule $\mathsf{k}_i$. As
  usual, the rule of modus ponens is simulated by applications of the
  cut rule.

  In the second step we show in a rather standard way that the
  multicut rule can be eliminated from derivations in the system
  $\G_{(N,\cless,F)}\Con\W$ (this also follows directly from checking
  that the system $\G_{(N,\cless,F)}\Con\W$ satisfies the general
  criteria for cut elimination
  in~\cite{Lellmann:2013fk,Lellmann:2013}).  As usual, the proof is by
  double induction on the \emph{complexity} of the cut formula, i.e.,
  the number of symbols in the cut formula, and the sum of the depths
  of the derivations of the two premisses of the application of the
  multicut rule. Applications of the multicut rule are then pushed
  upwards into the derivations of the premisses of that application,
  until in both of the latter at least one occurrence of the cut
  formula is introduced by the last applied rule, at which point the
  complexity of the cut formula is reduced. Since the reasoning for
  the different cases is rather standard, here we only consider two
  exemplary cases. See, e.g.,~\cite[Sec.~4.1.9]{Troelstra:2000fj} for
  the reasoning in the propositional cases.
  
  As a first example, consider the multicut below, with applications
  of rules based on a description such that $i \cless j,k$ and
  $\ell \cless i,m,n$ with $\K\4 \subseteq F(j),F(m)$ but with $\K\4$
  not contained in the other logics.
  \[
  \infer[\mcut]{\Omega,\ibox{j}\Gamma_j, \ibox{j}\Sigma_j, \ibox{k}\Sigma_k,
    \Upsilon, \ibox{i}\Sigma_i, \ibox{m}\Gamma_m, \ibox{m}\Sigma_m,
    \ibox{i}\Sigma_i, \ibox{n}\Sigma_n \seq \Xi, \Pi
  }
  {\infer[\mathsf{k}_i]{\Omega, \ibox{j} \Gamma_j, \ibox{j} \Sigma_j, \ibox{k}
      \Sigma_k, \ibox{i} \Sigma_i \seq \ibox{i} A, \Xi, \ibox{i} A^{n-1}
    }
    {\ibox{j} \Gamma_j,  \Sigma_j, \Sigma_k, \Sigma_i
      \seq  A
    }
    &
  \infer[\mathsf{d}_\ell]{ \Upsilon, \ibox{i} A^{t-s}, \ibox{m} \Gamma_m, \ibox{m}
    \Sigma_m, \ibox{i} A^s, \ibox{i}
    \Sigma_i, \ibox{n} \Sigma_n \seq \Pi
    }
    {\ibox{m} \Gamma_m, \Sigma_m, A^s, \Sigma_i, \Sigma_n \seq \;
    }
  }
  \]
  As usual, the multicut is replaced with a multicut on the formula of
  lower complexity $A$ as follows.
  \[
  \infer[\mathsf{d}_\ell]{\Omega, \Upsilon, \ibox{j}\Gamma_j, \ibox{j}\Sigma_j, \ibox{k}\Sigma_k,
    \ibox{i}\Sigma_i, \ibox{m}\Gamma_m, \ibox{m}\Sigma_m,
    \ibox{i}\Sigma_i, \ibox{n}\Sigma_n \seq \Xi, \Pi
  }
  {\infer[\mcut]{\ibox{j}\Gamma_j, \Sigma_j, \Sigma_k,
    \Sigma_i, \ibox{m}\Gamma_m, \Sigma_m,
    \Sigma_i, \Sigma_n \seq \;
    }
    {\ibox{j} \Gamma_j, \Sigma_j, \Sigma_k, \Sigma_i
      \seq  A
    &\quad
    \ibox{m} \Gamma_m, \Sigma_m, A^s, \Sigma_i, \Sigma_n \seq \;
    }
  }
  \]
  Crucially, since the relation $\cless$ is transitive, we know that
  $\ell \cless j,k$ as well, which renders the application of the rule
  $\mathsf{d}_\ell$ at the bottom permissible.

  As a second example, consider the multicut with cut formula
  $\ibox{i} A$ below, based on a description $(N,\cless, F)$ with
  $k \cless i \cless j$, such that $\K\4 \subseteq F(i), F(j)$ and
  $\K\D \subseteq F(k)$.
  \[
  \infer[\mcut]{\Omega, \ibox{j} \Gamma_j, \ibox{j} \Sigma_j,
    \Upsilon, \ibox{i}
    \Gamma_i, \ibox{i} \Sigma_i, \ibox{k} \Gamma_k \seq \Xi, \Pi
  }
  {\infer[\mathsf{k}_i]{ \Omega, \ibox{j} \Gamma_j, \ibox{j}\Sigma_j
      \seq \ibox{i} A, \Xi, \ibox{i} A^{n-1}
    }
    {\ibox{j} \Gamma_j, \Sigma_j \seq A
    }
  &
  \infer[\mathsf{d}_k]{\Upsilon, \ibox{i}A^{\ell-s}, \ibox{i} \Gamma_i, \ibox{i}
    A^s, \ibox{i} \Sigma_i, \ibox{k} \Gamma_k \seq \Pi
    }
    {\ibox{i} A^{\ell-s}, \ibox{i} \Gamma_i, A^s, \Sigma_i,
      \Gamma_k \seq \;
    }
  }
  \]
  This multicut is replaced by two multicuts, an application
    of $\mathsf{d}_k$ and contractions as follows:
  \[
  \infer=[\Con]{\Omega, \Upsilon, \ibox{j} \Gamma_j, \ibox{j}
    \Sigma_j, \ibox{i} \Gamma_i, \ibox{i} \Sigma_i, \ibox{k} \Gamma_k
    \seq \Xi, \Pi
  }
  {
  \infer[\mathsf{d}_k]{\Omega, \Upsilon, \ibox{j} \Gamma_j, \ibox{j} \Sigma_j, \ibox{j}
    \Gamma_j, \ibox{j} \Sigma_j, \ibox{i} \Gamma_i, \ibox{i} \Sigma_i,
    \ibox{k}\Gamma_k \seq \Xi, \Pi
  }
  {\infer[\mcut]{\ibox{j} \Gamma_j, \Sigma_j, \ibox{j} \Gamma_j, \ibox{j} \Sigma_j, \ibox{i}
      \Gamma_i, \Sigma_i,\Gamma_k \seq \;
    }
    {\ibox{j} \Gamma_j, \Sigma_j \seq A
      &
      \infer[\mcut]{\ibox{j} \Gamma_j, \ibox{j} \Sigma_j, \ibox{i}
      \Gamma_i, A^s, \Sigma_i, \Gamma_k \seq \;
      }
      {\infer[\mathsf{k}_i]{\ibox{j} \Gamma_j, \ibox{j} \Sigma_j \seq
          \ibox{i} A
        }
        {\ibox{j} \Gamma_j, \Sigma_j \seq A
        }
      &\quad
      \ibox{i} A^{\ell-s}, \ibox{i} \Gamma_i, A^s, \Sigma_i,
      \Gamma_k \seq \;
      }
    }
  }
  }
  \]
  The upper multicut is then eliminated using the (inner) induction
  hypothesis on the sum of the depths of the derivations of its
  premisses, the lower one is eliminated using the (outer) induction
  hypothesis on the complexity of the cut formula.  Note that for this
  transformation to work it is crucial that the logic $F(j)$ is also
  transitive, i.e., that the description $(N,\cless,F)$ is transitive
  closed, since otherwise we would not be able to apply the rule
  $\mathsf{d}_k$ with boxed context formulae $\ibox{j} \Gamma_j$. A
  similar situation occurs if the application of $\mathsf{d}_k$ is
  replaced with an application of $\mathsf{k}_k$ with an additional
  principal formula on the right.

  The general cases of the above examples as well as the remaining
  cases are treated similarly.
\end{proof}

In order to convert the resulting sequent systems into $\LNS$ systems,
we need to modify the linear nested setting to account for all the
different non-invertible right rules. For this, given a description
$(N,\cless,F)$ we introduce nesting operators $\lns[i]$ for every
$i \in N$, and change the interpretation so that they are interpreted
by the corresponding modality:
\begin{align*}
  \iota(\Gamma \seq \Delta) & \defs \Land \Gamma \iimp \Lor \Delta \\
  \iota(\Gamma \seq \Delta \lns[i] \mathcal{H}) & \defs \Land \Gamma
  \iimp \Lor \Delta \lor \ibox{i} \,\iota(\mathcal{H})
\end{align*}
The modal sequent rules of $\G_{(N,\cless,F)}$ are then decomposed
into the modal linear nested sequent rules shown in
Fig.~\ref{fig:lns-rules-simply-dep-mult}. The propositional rules are
those of $\LNS_\G$ (Fig.~\ref{fig:lns-G}). We call the resulting
calculus $\LNS_{(N,\cless,F)}$.  The intuition behind the rules is
that an application of the sequent rule $\mathsf{k}_i$ is decomposed
into an application of the rule ${\ibox{i}}_R$ followed by
applications of ${\ibox{ij}}_L$ to unpack the principal formulae of
the sequent rule, and applications of the rule $\4_{ij}$ to move the
boxed context formulae into the next component.
\begin{figure}[t]
  \hrule
  \[
  \infer[{\ibox{ij}}_L]{\mathcal{S}\con{ \Gamma, \ibox{i} A \seq \Delta
    \lns[j]\Sigma \seq \Pi}}{\mathcal{S}\con{\Gamma \seq \Delta
    \lns[j]\Sigma,A \seq \Pi}}
  \qquad
  \infer[{\ibox{i}}_R]{\mathcal{G}\lns[k]  \Gamma \seq \Delta, \ibox{i}
    A}{\mathcal{G}\lns[k]  \Gamma \seq \Delta\, \lns[i] \;\seq A}
  \]
  \[
  \infer[\mathsf{d}_{ij}]{\mathcal{G}\lns[k] \Gamma, \ibox{i} A \seq
    \Delta}{\mathcal{G} \lns[k] \Gamma \seq \Delta \lns[j] A \seq \;}
  \qquad
  \infer[\mathsf{t}_i]{\mathcal{S} \con{ \Gamma, \ibox{i} A \seq
      \Delta}}{\mathcal{S} \con{ \Gamma, A \seq \Delta}}
  \qquad
  \infer[{\4_{ij}}]{\mathcal{S} \con{ \Gamma, \ibox{i} A \seq \Delta
    \lns[j]\Sigma \seq \Pi}}{\mathcal{S} \con{\Gamma \seq \Delta
    \lns[j]\Sigma,\ibox{i}A \seq \Pi}}
  \]
  \begin{align*}
  \mathcal{R}_{(N,\cless, F)} \defs &\; \{ {\ibox{i}}_R : i \in N\} \cup \{
  {\ibox{ij}}_L : i,j \in N, i \in \upset{j}\} \cup \{ \mathsf{d}_{ij} :
  i,j \in N, i \in \upset[\D]{j}\}\\
  & 
   \cup \{ \mathsf{t}_i : i \in N, \K\T \subseteq F(i)\} 
  \cup \{
  {\4_{ij}} : i,j \in N, i\in \upset[\4]{j}\}
  \end{align*}
  \hrule
    \caption{The linear nested sequent rules for the simply dependent
    multimodal logic given by the description $(N,\cless,F)$.}
  \label{fig:lns-rules-simply-dep-mult}
\end{figure}

\begin{example}
  The linear nested sequent calculus for the logic $\K\T
  \oplus_\subseteq \Sf$ contains the $\LNS$ rules $\ibox{11}_L,
  \ibox{21}_L, \ibox{22}_L, \ibox{1}_R, \ibox{2}_R, \mathsf{d}_{11},
  \mathsf{d}_{21}, \mathsf{d}_{22}, \mathsf{t}_1,
  \mathsf{t}_2$, ${\4_{21}}$, and $\4_{22}$.
\end{example}

\begin{remark}
  The previous example illustrates the added modularity of the linear
  nested sequent approach: if, as in
  Rem.~\ref{rem:modularity-simply-dep} we wanted to obtain a linear
  nested sequent calculus for the logic $\K\T\oplus_\subseteq \K\T$
  from the calculus for $\K\T\oplus_\subseteq \Sf$ above, we would
  only need to delete the rules $\4_{21}$ and $\4_{22}$ from the rule
  set, keeping all other rules the same. This is in stark contrast to
  the modification of almost all modal rules required in the ordinary
  sequent setting. Note however, that we have modularity, and indeed
  completeness, only for transitive-closed descriptions. I.e., we
  would not be able to obtain a calculus for the logic
  $\Sf \oplus_\subseteq \K\T$, since it is not given by a 
  transitive-closed description.
\end{remark}

\begin{theorem}
  If $(N,\cless,F)$ is a transitive-closed description of a simply
  dependent multimodal logic, then $\LNS_{(N,\cless,F)}\Con\W$ is
  sound and complete for the logic given by $(N,\cless,F)$, i.e., for
  every formula $A$ we have that $A \in \mathcal{L}_{(N,\cless,F)}$ if
  and only if $\entails_{\LNS_{(N,\cless,F)}\Con\W} \;\seq A$.
\end{theorem}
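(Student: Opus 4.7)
The natural plan is to derive soundness and completeness of $\LNS_{(N,\cless,F)}\Con\W$ by reducing them to the already established soundness and completeness of the sequent calculus $\G_{(N,\cless,F)}\Con\W$. Concretely, I would establish two translations: an $\LNS$-to-sequent translation for soundness, and a sequent-to-$\LNS$ simulation for completeness. Since a linear nested sequent of depth one is just an ordinary sequent, and since the interpretation $\iota$ of an $\LNS$ agrees with the interpretation of its rightmost component after applying the sequent rule that introduced the nesting, the correspondence should be straightforward.

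For \emph{soundness}, I would proceed semantically, showing by induction on derivation depth that if $\entails_{\LNS_{(N,\cless,F)}\Con\W} \mathcal{G}$ then $\iota(\mathcal{G}) \in \mathcal{L}_{(N,\cless,F)}$. Since $\iota(\mathcal{S}\con{\Gamma\seq\Delta})$ is defined compositionally, it suffices to check that each rule preserves validity on all $(N,\cless,F)$-frames under $\iota$. The propositional and contraction/weakening cases are routine. For ${\ibox{i}}_R$, validity preservation is immediate from the definition of $\iota$ and of $\ibox{i}$. For ${\ibox{ij}}_L$ and $\mathsf{d}_{ij}$ the key is that $i\in\upset{j}$, i.e., $j\cless i$, which by the frame condition gives $R_j\subseteq R_i$, allowing the $\ibox{i} A$ formula in component $k$ to be unpacked into component $k+1$ accessed via $R_j$. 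The rule $\mathsf{t}_i$ is handled by reflexivity of $R_i$, and $\4_{ij}$ by transitivity combined with $R_j\subseteq R_i$. Alternatively, one can avoid direct semantics by flattening an $\LNS$ derivation to a sequent derivation of $\;\seq \iota(\mathcal{G})$ using cut and the inverse direction's simulation.

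For \emph{completeness}, I would show that every modal sequent rule of $\G_{(N,\cless,F)}$ can be simulated inside $\LNS_{(N,\cless,F)}\Con\W$ acting on the last component of an arbitrary $\LNS$ history $\mathcal{G}$, exactly in the spirit of the simulation of $\mathsf{k}$ given earlier in the paper for $\K$. For an application of $\mathsf{k}_i$, the simulation starts with ${\ibox{i}}_R$ to open a new component indexed by $i$, then applies ${\ibox{ij}}_L$ (for each $j\in\upset{i}$) to move each principal $\ibox{j}\Sigma_j$ into that new component, then applies $\4_{ij}$ (for $j\in\upset[\4]{i}$) to move the boxed $\ibox{j}\Gamma_j$ into the new component as well. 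The rule $\mathsf{d}_i$ is simulated analogously using $\mathsf{d}_{ij}$ to open the new component without a right-hand $\ibox{i} A$, and $\mathsf{t}_i$ is simulated directly by the $\LNS$ rule of the same name. Combined with the already-established completeness of $\G_{(N,\cless,F)}\Con\W$, this yields completeness of the $\LNS$ calculus.

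The main obstacle I expect is accounting correctly for the interaction between transitivity and the upset conditions in the simulation of $\mathsf{k}_i$ and $\mathsf{d}_i$. In the sequent rules, a single application packages the entire upset $\upset{i}$ simultaneously, distinguishing the $\4$-part from the non-$\4$-part; in the $\LNS$ setting this has to be reconstructed by a sequence of ${\ibox{ij}}_L$ and $\4_{ij}$ applications whose applicability depends on $i\in\upset{j}$ respectively $i\in\upset[\4]{j}$. The transitive-closure hypothesis on $(N,\cless,F)$ is what ensures these simulations compose correctly: whenever a $\4$-logic sits above $i$, it also sits above every index $j$ with $i\cless j$, so the required $\4_{ij}$ rules are available. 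Once the rule-by-rule simulation is verified with this bookkeeping, soundness and completeness follow.
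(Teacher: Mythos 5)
Your proposal matches the paper's own proof in both structure and substance: soundness is argued semantically rule-by-rule over $(N,\cless,F)$-frames using $R_i\subseteq R_j$ for $i\cless j$, and completeness is obtained by simulating each modal sequent rule of $\G_{(N,\cless,F)}\Con\W$ in the last component via a block of ${\ibox{i}}_R$, ${\ibox{ji}}_L$, $\4_{ji}$ (resp.\ $\mathsf{d}_{ji}$) applications, exactly as in the paper's displayed simulation of $\mathsf{d}_i$. Apart from a harmless swap of the two subscripts in naming the ${\ibox{ij}}_L$ and $\4_{ij}$ instances, the argument is the paper's.
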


\begin{proof}
  For soundness, i.e., the ``if'' statement, we show that whenever the
  negation of the interpretation of the conclusion of a rule from
  $\LNS_{(N,\cless,F)}\Con\W$ is satisfiable in a
  $(N,\cless,F)$-frame, then so is the negation of the interpretation
  of at least one of its premisses. This makes essential use of the
  fact that in such frames we have $R_i \subseteq R_j$ whenever
  $i \cless j$.  For completeness, i.e., the ``only if'' statement, we
  again simulate the sequent rules in the last components, i.e., we
  translate a sequent derivation in $\G_{(N,\cless, F)}\Con\W$
  bottom-up into a linear nested sequent derivation in
  $\LNS_{(N,\cless,F)}$, simulating propositional sequent rules by
  their linear nested sequent counterparts, and modal sequent rules by
  a number of applications of the corresponding linear nested sequent
  rules. E.g., an application of the modal sequent rule $\mathsf{d}_i$
  with history (i.e., trace to the conclusion of the sequent
  derivation) captured by the linear nested sequent $\mathcal{G}$ is
  simulated as follows (assuming that $k \in \upset[\neg \4]{i}$):
  \[
  \infer*[\mathcal{G}]{}{
    \infer[\mathsf{d}_i]{\Omega,\{\ibox{j} \Gamma_j, \ibox{j} \Sigma_j
      : j \in \upset[\4]{i}\}, \{ \ibox{j} \Sigma_j : j\in
      \upset[\neg\4]{i}\}, \ibox{k} A \seq \Xi
    }
    {\{\ibox{j} \Gamma_j, 
    \Sigma_j : j \in \upset[\4]{i}\}, \{ \Sigma_j : j \in
    \upset[\neg\4]{i}\}, A \seq \;
    }
  }
  \]
\[
  \qquad\leadsto\qquad
  \vcenter{
  \infer[\mathsf{d}_{ki}]{\mathcal{G} \lns \Omega,\{\ibox{j} \Gamma_j, \ibox{j} \Sigma_j
      : j \in \upset[\4]{i}\}, \{ \ibox{j} \Sigma_j : j\in
      \upset[\neg\4]{i}\}, \ibox{k} A \seq \Xi
  }
  {\infer=[\ibox{ji}_L]{\mathcal{G} \lns \Omega,\{\ibox{j} \Gamma_j, \ibox{j} \Sigma_j
      : j \in \upset[\4]{i}\}, \{ \ibox{j} \Sigma_j : j\in
      \upset[\neg\4]{i}\} \seq \Xi \lns[i] A \seq \;
    }
    {\infer=[\ibox{ji}_L]{\mathcal{G} \lns \Omega,\{\ibox{j} \Gamma_j, \ibox{j} \Sigma_j
      : j \in \upset[\4]{i}\}  \seq \Xi \lns[i] \{ \Sigma_j : j\in
      \upset[\neg\4]{i}\}, A \seq \;
        }
        {\infer=[\4_{ji}]{\mathcal{G} \lns \Omega,\{\ibox{j} \Gamma_j
      : j \in \upset[\4]{i}\},  \seq \Xi \lns[i] \{ \Sigma_j :
      j \in \upset[\4]{i} \}, \{ \Sigma_j : j\in
      \upset[\neg\4]{i}\}, A \seq \;
          }
          {\mathcal{G} \lns \Omega,  \seq \Xi \lns[i] \{\ibox{j} \Gamma_j, \Sigma_j
      : j \in \upset[\4]{i}\},\{ \Sigma_j :
      j \in \upset[\4]{i} \}, \{ \Sigma_j : j\in
      \upset[\neg\4]{i}\}, A \seq \;
          }
        }
    }
  }
  }
\]
The remaining modal rules are simulated in a similar way.
\end{proof}

Note that the proof of completeness via simulation of the sequent
calculus in the last component actually shows a slightly stronger
statement, i.e., completeness for a variant of the calculus where the
rules are restricted so they only manipulate the last components. More
precisely:

\begin{definition}\label{end-active}\label{def:end-active}
  An application of a linear nested sequent rule is \emph{end-active}
  if the rightmost components of the premisses are active and the only
  active components (in premiss and conclusion) are the two rightmost
  ones. The \emph{end-active variant} of a $\LNS$ calculus is the
  calculus with the rules restricted to end-active applications.
\end{definition}

\begin{example}
  The application of the rule $\land_L$ below left is end-active, the one
  below right is not, since the rightmost component is not active.
  \[
    \infer[\land_L]{\mathcal{G} \lns \Gamma, A\land B \seq \Delta
    }
    {\mathcal{G} \lns \Gamma, A, B \seq \Delta
    }
    \qquad\qquad
    \infer[\land_L]{\mathcal{G} \lns \Gamma, A\land B \seq \Delta \lns
      \Sigma \seq \Pi
    }
    {\mathcal{G} \lns \Gamma, A, B \seq \Delta\lns \Sigma \seq \Pi
    }
  \]
  Applications of the modal rules in the $\LNS$ calculi for non-normal
  modal logics considered in this paper (see next section) are always
  end-active. An application of the modal rule
  \[
    \infer[\Box_L]{\mathcal{S}\con{\Gamma, \Box A \seq \Delta \lns \Sigma \seq
      \Pi}}{\mathcal{S} \con{\Gamma \seq \Delta \lns \Sigma, A \seq \Pi}}
  \]
  is end-active only if $\Sigma \seq \Pi$ is the rightmost component.
\end{example}

\begin{corollary}
  If $(N,\cless,F)$ is a transitive-closed description of a simply
  dependent multimodal logic, then the end-active variant of
  $\LNS_{(N,\cless,F)}\Con\W$ is sound and complete for the logic
  given by $(N,\cless,F)$, i.e., for every formula $A$ we have
  $A \in \mathcal{L}_{(N,\cless,F)}$ if and only if $\;\seq A$ is
  derivable in the end-active variant of $\LNS_{(N,\cless,F)}\Con\W$.
\end{corollary}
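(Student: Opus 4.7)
The plan is to observe that the completeness argument of the preceding theorem already constructs end-active derivations, so the corollary follows by inspecting that construction; soundness comes for free. I would split the proof into the two usual directions.

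For soundness, i.e., the ``if'' direction, every derivation in the end-active variant is \emph{a fortiori} a derivation in the full system $\LNS_{(N,\cless,F)}\Con\W$, so the statement is inherited directly from the soundness half of the preceding theorem with no additional argument required.

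For completeness, i.e., the ``only if'' direction, I would revisit the simulation used in the proof of the preceding theorem and verify that every $\LNS$ rule it introduces is end-active in the sense of Definition~\ref{def:end-active}. That simulation translates a $\G_{(N,\cless,F)}\Con\W$ derivation bottom-up: propositional and structural sequent rules are simulated by their $\LNS$ counterparts acting on the (unique) rightmost component, and each modal sequent rule $\mathsf{k}_i$, $\mathsf{d}_i$, $\mathsf{t}_i$ is simulated by a block consisting of a single application of $\ibox{i}_R$ or $\mathsf{d}_{ij}$, which spawns a fresh rightmost component, followed by applications of $\ibox{ji}_L$, $\4_{ji}$ and $\mathsf{t}_i$ that only shuffle formulae between the two rightmost components. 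Proceeding by induction on the height of the sequent derivation, I would check case by case (using the explicit simulation for $\mathsf{d}_i$ displayed in the preceding proof as the model) that each such application touches only the two rightmost components of the ambient $\LNS$, which is precisely the end-active condition.

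The only point that requires mild care, rather than a genuine obstacle, is the base of the induction: the target sequent $\;\seq A$ corresponds to a one-component $\LNS$, so the ``rightmost component'' condition holds vacuously there, and the inductive step preserves it because new components are introduced exclusively at the right end by $\ibox{i}_R$ and $\mathsf{d}_{ij}$, never in the middle. Combined with cut-free completeness of $\G_{(N,\cless,F)}\Con\W$ for $\mathcal{L}_{(N,\cless,F)}$, this yields an end-active $\LNS_{(N,\cless,F)}\Con\W$ derivation of $\;\seq A$ whenever $A \in \mathcal{L}_{(N,\cless,F)}$, completing the proof.
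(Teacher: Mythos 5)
Your proposal is correct and follows essentially the same route as the paper: soundness is inherited because end-active derivations are derivations in the full calculus, and completeness is obtained by observing that the simulation of sequent rules in the preceding theorem's proof acts only on the rightmost components, i.e., uses only end-active rule applications.
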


\begin{proof}
  Soundness, i.e., the ``if'' statement, follows immediately from
  soundness for the full calculus. For completeness, i.e., the ``only
  if'' statement, observe that the sequent rules are simulated in the
  last component, i.e., by end-active applications of the linear
  nested sequent rules.
\end{proof}

The fact that we can restrict the linear nested calculi to their
end-active variants will be exploited in Section~\ref{sec:focused} for
reducing the search space in proof search.

The example of simply dependent multimodal logics shows another
conceptual advantage of $\LNS$ calculi over standard sequent calculi:
for more involved sequent calculi such as the ones in
Fig.~\ref{fig:simply-dep-multimodal} the decomposition of the sequent
rules into their different components tends to make the corresponding
$\LNS$ calculi (Fig.~\ref{fig:lns-rules-simply-dep-mult}) a lot more
readable.  Of course the previous theorem also shows that the obvious
adaption of this calculus to the full nested sequent setting
of~\cite{Brunnler:2009kx,Poggiolesi:2009vn} is sound and cut-free
complete for the corresponding logic.


\subsection{Non-normal modal logics}
\label{sec:non-normal-modal}

The same ideas also yield $\LNS$ calculi for some \emph{non-normal}
modal logics, i.e., modal logics that are not extensions of modal
logic $\K$ (see~\cite{Chellas:1980fk} for an introduction).  The
calculi themselves are of independent interest since, to the best of
our knowledge, nested sequent calculi for the logics below have not
been considered before in the literature.  The most basic non-normal
logic, \emph{classical modal logic} $\mathsf{E}$, is given
Hilbert-style by extending the axioms and rules for classical
propositional logic with only the \emph{congruence rule}
$(\mathsf{E})$ for the $\Box$ connective
\[
\infer[(\mathsf{E})]{\Box A \iimp \Box B}{ A \iimp B & B
  \iimp A}
\]
which allows exchanging logically
equivalent formulae under the modality. Some of the better known
extensions of this logic are formulated by the addition of axioms from
the list below left.
\begin{center}
\begin{minipage}[c]{.4\textwidth}
  $\mathsf{M} \quad\Box (A\land B) \iimp (\Box A \land \Box B)$\medskip\\
               $\mathsf{C} \quad (\Box A \land \Box B) \iimp \Box (A \land B)$\medskip\\
               $\mathsf{N} \quad \Box \top$
\end{minipage}
\begin{minipage}[T]{.4\textwidth}
  \begin{tikzpicture}
    \node at (0,0) (E) {\textsf{E}};
    \node at (0,1.5) (M) {\textsf{M}};
    \node at (2,.5) (N) {\textsf{EN}};
    \node at (-1,.5) (C) {\textsf{EC}};
    \node at (2,2) (MN) {\textsf{MN}};
    \node at (-1,2) (MC) {\textsf{MC}};
    \node at (1,1) (NC) {\textsf{ECN}};
    \node at (1,2.5) (MNC) {$\mathsf{MNC} = \K$};
    \draw (E) -- (M) -- (MN) -- (N) -- (E) -- (C) -- (MC) -- (MNC) --
    (NC) -- (C);
    \draw (N) -- (NC);
    \draw (M) -- (MC);
    \draw (MN) -- (MNC);
  \end{tikzpicture}
\end{minipage}
\end{center}
Together, these extensions form what could be termed the
\emph{classical cube} (above right). Note that the extension of
$\mathsf{E}$ with all the axioms $\mathsf{M},\mathsf{C},\mathsf{N}$ is
normal modal logic~$\K$.  Fig.~\ref{fig:non-normal-seq} shows the
modal rules of the standard cut-free sequent calculi for these logics,
where in addition weakening is embedded in the conclusion.  Extensions
of $\mathsf{E}$ are written by concatenating the names of the axioms,
and in presence of the monotonicity axiom $\mathsf{M}$, sometimes the
initial $\mathsf{E}$ is dropped. E.g., the logic
$\mathsf{EMC} = \mathsf{MC}$ is the extension of $\mathsf{E}$ with
axioms $\mathsf{M}$ and $\mathsf{C}$. Its sequent calculus
$\G_{\mathsf{MC}}$ is given by the standard propositional rules of
$\G$ (see Def.~\ref{def:seq-simplpy-dep}) together with the rule
$(\mathsf{E})$ as well as the rules $(\mathsf{M}n)$ for $n \geq 1$.
For all of the logics apart from $\mathsf{EN}$ and $\mathsf{ECN}$
these calculi were given in~\cite{Lavendhomme:2000}, the one for
$\mathsf{EN}$ is considered in~\cite{Indrzejczak:2011}, the remaining
calculus is an easy extension.  It is not too difficult to show
admissibility of weakening and contraction in these calculi. However,
since the calculi originally were considered in \emph{op.~cit.} for
sequents based on sets instead of multisets, and in preparation for
later results, we mostly consider their extensions with the structural
rules. E.g., we consider $\G_{\mathsf{EM}}\Con\W$ instead of
$\G_{\mathsf{EM}}$.
\begin{figure}[t]
  \hrule\medskip
  \[
  \infer[(\mathsf{E})]{\Gamma,\Box A \seq \Box B,\Delta}{A \seq B & B \seq A}
  \qquad
  \infer[(\mathsf{M})]{\Gamma,\Box A \seq \Box B,\Delta}{A \seq B}
  \qquad
  \infer[(\mathsf{N})]{\Gamma \seq \Box A,\Delta}{ \;\seq A}
  \]
  \[
  \infer[(\mathsf{E}n)]{\Gamma,\Box A_1,\dots,\Box A_n \seq \Box
    B,\Delta}{A_1,\dots,A_n \seq B & B \seq A_1 & \cdots & B \seq A_n}
  \qquad
  \infer[(\mathsf{M}n)]{\Gamma,\Box A_1,\dots,\Box A_n \seq \Box
    B,\Delta}{A_1,\dots,A_n \seq B }
  \]
  \[
  \begin{array}{l@{\quad}l@{\qquad}l@{\quad}l@{\qquad}l@{\quad}l@{\qquad}l@{\quad}l}
    \G_{\mathsf{E}} & \{\, (\mathsf{E})\,\}
    &\G_{\mathsf{EN}} & \{ \, (\mathsf{E}), (\mathsf{N})\, \}
    &\G_{\mathsf{EC}} & \{\, (\mathsf{E}n) : n \geq 1\,\}
    &\G_{\mathsf{ECN}} & \{ \,(\mathsf{E}n) : n \geq 1 \,\} \cup \{
    (\mathsf{N}) \}\\
    \G_{\mathsf{M}} & \{\, 
                      (\mathsf{M})\,\}
    &\G_{\mathsf{MN}} & \{\, (\mathsf{M}), (\mathsf{N})\,\}
    &\G_{\mathsf{MC}} & \{\, (\mathsf{M}n) : n \geq 1\,\}
    &\G_{\mathsf{MCN}} & \{\, (\mathsf{M}n) : n \geq 0\,\}
  \end{array}
  \]
  \hrule
  \caption{Sequent rules and calculi for some non-normal modal logics}
  \label{fig:non-normal-seq}
\end{figure}

In order to construct linear nested calculi for these logics, again we
would like to decompose the sequent rules from
Fig.~\ref{fig:non-normal-seq} into their different
components. However, there are two complications compared to the case
of normal modal logics: we need a mechanism for capturing the fact
that e.g.\ in the rule $(\mathsf{M})$ exactly one boxed formula is
introduced on the left hand side; and we need a way of handling
multiple premisses of rules such as $(\mathsf{E})$ and
$(\mathsf{E}n)$. We solve the first problem by introducing an
auxiliary nesting operator $\lnse$ to capture a state where a sequent
rule has been {\em partly processed}, i.e., where the simulation of
the sequent rule is still unfinished. The intuition behind this is
that in this ``partly processed'' state, only other $\LNS$ rules
continuing or eventually finishing the simulation of the original
sequent rule can be applied. In contrast, the intuition for the
original nesting $\lns$ is that the simulation of a rule is
finished. We restrict the occurrence of $\lnse$ to the end of the
structures.

To solve the problem of multiple premisses, we make the nesting
operator $\lnse$ \emph{binary}, which permits the storage of more
information about the premisses. In particular, we can now store the
two ``directions'' of implications given, e.g., in the premisses of
rule $(\mathsf{E})$. Linear nested sequents for classical non normal
modal logics are then given by:
\begin{figure}
  \hrule
  \[
  \infer[\Box_R^\e]{\mathcal{G} \lns \Gamma \seq \Box B,
    \Delta}{\mathcal{G} \lns \Gamma \seq \Delta \lnse( \;\seq B; B
    \seq \;)}
  \quad
  \infer[\Box_L^\e]{\mathcal{G} \lns \Gamma,\Box A \seq
    \Delta\lnse(\Sigma \seq \Pi; \Omega \seq \Theta)}{\mathcal{G} \lns \Gamma \seq
    \Delta\lns \Sigma, A \seq \Pi & \mathcal{G} \lns \Gamma \seq
    \Delta\lns  \Omega \seq A, \Theta}
  \]
  \[
    \infer[\mathsf{N}]{\mathcal{G} \lns \Gamma \seq \Box B,
    \Delta}{\mathcal{G} \lns \Gamma \seq \Delta \lns \;\seq B}
  \qquad\qquad
  \infer[\mathsf{M}]{\mathcal{G} \lnse(\Sigma \seq \Pi; \Omega \seq
    \Theta)}{\mathcal{G} \lnse(\Sigma \seq \Pi; \Omega, \bot \seq
    \Theta)}
  \]
  \[
  \infer[\mathsf{C}]{\mathcal{G} \lns \Gamma, \Box A \seq \Delta
    \lnse(\Sigma \seq \Pi; \Omega \seq \Theta)}{\mathcal{G} \lns \Gamma \seq \Delta
    \lnse(\Sigma, A \seq \Pi; \Omega \seq \Theta) &
    \mathcal{G} \lns \Gamma \seq \Delta \lns \Omega \seq A, \Theta}
  \]
  \[
  \LNS_{\mathsf{E}\mathcal{A}}: \quad \{ \Box_R, \Box_L \} \cup \mathcal{A}
  \quad \text{for }\mathcal{A} \subseteq \{ \mathsf{N},\mathsf{M},\mathsf{C}\}
  \]
  \hrule
  \caption{Linear nested sequent calculi for non-normal modal logics}
  \label{fig:lns-classical-modal-logics}
\end{figure}
\[
\LNS_\e ::= \Gamma \seq \Delta \mid \Gamma \seq \Delta \lnse
(\Sigma \seq \Pi; \Omega \seq \Theta) \mid \Gamma \seq \Delta \lns \LNS_\e
\]
Fig.~\ref{fig:lns-classical-modal-logics} shows the modal rules for
these logics. For a logic $\mathsf{E} \mathcal{A}$ with
$\mathcal{A} \subseteq \{ \mathsf{N}, \M, \C\}$ the calculus
$\LNS_{\mathsf{E}\mathcal{A}}$ then contains the corresponding modal
rules along with the propositional rules of $\LNS_\G$
(Fig.~\ref{fig:lns-G}) with the restriction that they are not applied
inside the nesting $\lnse$ .  To keep the presentation simple, we
slightly abuse notation and write e.g.~$\mathsf{M}$ both for the axiom
and the corresponding rule.

\begin{remark}
  At first sight the linear nested sequent rules of
  Fig.~\ref{fig:lns-classical-modal-logics} might look more
  complicated than the ordinary sequent rules of
  Fig.~\ref{fig:non-normal-seq}. Moreover, it could be argued that the
  latter systems could be considered to be modular, since, e.g., to obtain a
  calculus for the logic $\M\C$ from the logic $\mathsf{E}\C$ it would
  be sufficient to simply \emph{add} the rules
  $\{ (\M_n) : n \geq 1 \}$ to the system for $\mathsf{E}\C$, so it
  might not be clear immediately what we have gained by moving to the
  nested sequent setting. However, it is worth noting that, for every
  extension of $\mathsf{E}\C$, the sequent systems of
  Fig.~\ref{fig:non-normal-seq} consist of an infinite number of
  rules, where in particular every rule has a different number of
  principal formulae. In contrast, the linear nested sequent systems
  of Fig.~\ref{fig:lns-classical-modal-logics} each consist of a
  finite number of rules, where each rule has at most one principal
  formula. Moreover, each of the axioms
  $\mathsf{N}, \mathsf{M}, \mathsf{C}$ corresponds to exactly one
  additional rule in the linear nested sequent setting. Both of these
  properties are philosophically highly relevant, e.g., from the point
  of view of \emph{proof-theoretic semantics}. For more details on
  this, see, e.g., the discussion in~\cite[Sec.~1.2 and~1.3]{Wansing:2002fk}.
\end{remark}

\begin{theorem}[Completeness]
  The linear nested sequent calculi of
  Fig.~\ref{fig:lns-classical-modal-logics} are complete w.r.t.\ the
  corresponding logics, i.e., if $A \in \mathsf{E}\mathcal{A}$, then
  $\entails_{\LNS_{\mathsf{E}\mathcal{A}}\Con\W} \;\seq A$.
\end{theorem}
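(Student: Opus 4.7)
The plan is to reduce the statement to the (well-known, cut-free) completeness of the ordinary sequent calculi $\G_{\mathsf{E}\mathcal{A}}\Con\W$ of Fig.~\ref{fig:non-normal-seq}, established in~\cite{Lavendhomme:2000,Indrzejczak:2011} and trivial extensions thereof. It then suffices to show that every $\G_{\mathsf{E}\mathcal{A}}\Con\W$-derivation of $\;\seq A$ can be translated into an $\LNS_{\mathsf{E}\mathcal{A}}\Con\W$-derivation of $\;\seq A$. As in the $\K$ case I would proceed by induction on the depth of the sequent derivation, building the $\LNS$-derivation bottom-up in the last (one or two) components of an arbitrary but fixed prefix $\mathcal{G}$ that plays the role of a history.

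Propositional and structural rules are mirrored by their $\LNS$ counterparts acting on the rightmost component, so the only non-routine cases are the modal ones. $(\mathsf{N})$ is simulated by a single application of the rule $\mathsf{N}$, after which the prefix is extended and the induction hypothesis applies to the premiss $\;\seq B$. For $(\mathsf{E})$ and $(\mathsf{M})$, I would apply $\Box_R^\e$ to create the binary nesting $\lnse(\;\seq B;B\seq\;)$, then in the $\mathsf{M}$-case apply the rule $\mathsf{M}$ to place $\bot$ on the left of the second component, and then apply $\Box_L^\e$ to the principal formula $\Box A$, obtaining two $\lns$-continuations whose sequents are direct weakenings of the premisses of the sequent rule (with the converse-direction premiss in the $\mathsf{M}$-case discharged by $\bot_L$).

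For the multi-premiss rules $(\mathsf{E}n)$ and $(\mathsf{M}n)$, the key idea is to use the rule $\mathsf{C}$ to absorb boxed left formulas into the first component of the binary nesting one at a time. Concretely, after $\Box_R^\e$ (and $\mathsf{M}$ in the $\mathsf{M}n$-case), I would apply $\mathsf{C}$ to $\Box A_1, \ldots, \Box A_{n-1}$ in turn, discharging each of the arising second premisses by weakening from the premiss $B\seq A_i$ of the original rule (respectively by $\bot_L$ in the $\mathsf{M}n$-case); a final $\Box_L^\e$ on $\Box A_n$ then collapses the binary nesting into two $\lns$-branches, the first obtained from $A_1,\ldots,A_n\seq B$ by weakening and the second from $B\seq A_n$ (respectively closed by $\bot_L$). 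In each resulting $\lns$-branch the induction hypothesis applies to the corresponding premiss of the original sequent rule with the suitably extended prefix.

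The main delicacy will be getting the ordering of rule applications right: the nesting-preserving rules $\mathsf{M}$ and $\mathsf{C}$ must be applied before the nesting-destroying $\Box_L^\e$, since after $\Box_L^\e$ there is no longer any $\lnse$ to act on, and once $\mathsf{M}$ has been applied all future $\mathsf{C}$-side-premisses must be closable by $\bot_L$ alone. Once this ordering is fixed, each individual case reduces to routine bookkeeping, tracking the accumulating left contexts of the binary nesting and verifying that every side premiss is indeed discharged by either weakening from one of the premisses of the original sequent rule or by $\bot_L$. The construction is end-active throughout, which will also be needed later for the block-form results of Section~\ref{sec:focused}.
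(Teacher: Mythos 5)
Your proposal is correct and follows essentially the same route as the paper: simulate the sequent rules of $\G_{\mathsf{E}\mathcal{A}}\Con\W$ bottom-up in the last components, handling $(\mathsf{E}n)$ by $\Box_R^\e$, then a chain of $\mathsf{C}$'s absorbing $\Box A_1,\dots,\Box A_{n-1}$ (side premisses weakened from $B\seq A_i$), closed off by $\Box_L^\e$ on $\Box A_n$, and inserting $\mathsf{M}$ right after $\Box_R^\e$ in the monotone case so that all converse-direction premisses close by $\bot_L$. The ordering constraint you highlight (nesting-preserving rules before the nesting-destroying $\Box_L^\e$) is exactly the one implicit in the paper's displayed derivations.
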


\begin{proof}
  Again the proof is via simulation of the sequent calculi. An
  application of the rule $(\mathsf{E}n)$ is simulated by the
  derivation
\[
  \infer[\Box_R^\e]{\mathcal{G} \lns \Gamma,\Box A_1,\dots, \Box A_n
    \seq \Box B,\Delta
  }
  {\infer[\mathsf{C}]{\mathcal{G} \lns \Gamma, \Box A_1,\dots, \Box
      A_n \seq  \Delta\lnse( \;\seq B ; B \seq \;)
    }
    {\infer*[]{\mathcal{G} \lns  \Gamma,\Box A_2,\dots, \Box A_n \seq
        \Delta \lnse( A_1\seq B ; B \seq \;)
      }
      {\infer[\Box_L^\e]{\mathcal{G} \lns \Gamma,\Box A_n \seq \Delta
          \lnse(A_1,\dots, A_{n-1} \seq B ; B \seq \;)
        }
        {\mathcal{G} \lns \Gamma \seq \Delta
          \lns A_1,\dots, A_n \seq B
        &\quad
        \mathcal{G} \lns  \Gamma\seq \Delta
          \lns B \seq A_n
        }
      }
    &
    \mathcal{G} \lns  \Gamma,\Box A_2,\dots, \Box A_n\seq \Delta \lns
    B \seq A_1
    }
  }
\]
The case of $n=1$ gives the simulation of the rule $(\mathsf{E})$. The
sequent rule $\mathsf{N}$ is simulated directly by the $\LNS$ rule
$\mathsf{N}$. In the monotone case the simulations are essentially the
same, but after creating the new nesting using the $\Box_R^\e$ rule we
first apply the rule $\M$ to make all the premisses for the
``backwards direction'' trivially derivable. The sequent rule
$(\mathsf{M}n)$ then is simulated by
\[
\infer[\Box_R^\e]{\mathcal{G} \lns \Gamma, \Box A_1,\dots, \Box A_n
  \seq \Box B, \Delta
  }
  {\infer[\mathsf{M}]{\mathcal{G} \lns \Gamma, \Box A_1,\dots, \Box A_n
    \seq \Delta \lnse (\;\seq B; B \seq \;)
    }
    {\infer[\mathsf{C}]{\mathcal{G} \lns \Gamma, \Box A_1,\dots, \Box A_n
        \seq \Delta \lnse (\;\seq B; B, \bot \seq \;)
      }
      {\infer*[]{\mathcal{G} \lns \Gamma, \Box A_2, \dots, \Box A_n \seq
          \Delta \lnse(A_1 \seq B; B,\bot \seq \;)
        }
        {\infer[\Box_L^\e]{\mathcal{G} \lns \Gamma, \Box A_n \seq \Delta
            \lnse (A_1,\dots, A_{n-1} \seq B; B,\bot \seq \;)
          }
          {\mathcal{G} \lns \Gamma \seq \Delta \lns A_1, \dots, A_n \seq
            B
            &
            \infer[\bot_L]{\mathcal{G} \lns \Gamma \seq \Delta \lns B,\bot \seq A_n}{}
          }
        }
      &
      \infer[\bot_L]{\mathcal{G} \lns \Gamma, \Box A_2,\dots, \Box A_n
        \seq \Delta \lns B, \bot \seq \;}{}
      }
    }
  }
\]
Again, the case of $n=1$ gives the simulation of the rule $(\mathsf{M})$.
\end{proof}

As in the case of simply dependent multimodal logics, the proof of
completeness via simulation of the sequent rules in the last component
also shows completeness of the end-active variants of the calculi.

\begin{corollary}
  The end-active variants of the linear nested sequent calculi of
  Fig.~\ref{fig:lns-classical-modal-logics} are complete w.r.t.\ the
  corresponding logics, i.e., if $A \in \mathsf{E}\mathcal{A}$, then
  $\;\seq A$ is derivable in the end-active variant of
  $\LNS_{\mathsf{E}\mathcal{A}}\Con\W$.
\end{corollary}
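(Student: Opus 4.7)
The plan is to reuse the completeness proof of the previous theorem essentially verbatim, and to check that every linear nested sequent rule application it employs is already end-active in the sense of Definition~\ref{def:end-active}. Soundness of the end-active variant is immediate from soundness of the full calculus, so the only task is completeness. Given $A \in \mathsf{E}\mathcal{A}$, I start from a derivation of $\;\seq A$ in $\G_{\mathsf{E}\mathcal{A}}\Con\W$ (which exists by the known completeness of those standard sequent calculi) and translate it bottom-up into an $\LNS$ derivation, maintaining as an invariant that at each step the history prefix $\mathcal{G}$ of the current linear nested sequent is never touched and that every rule acts only on the rightmost one or two components.

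The translation proceeds case by case on the last rule of the sequent derivation. First, the propositional rules of $\G$ and the structural rules $\Con, \W$ are mapped to their obvious $\LNS$ counterparts applied in the rightmost component, which is end-active by definition. Second, the sequent rule $(\mathsf{N})$ is simulated by a single application of the $\LNS$ rule $\mathsf{N}$, which acts on the rightmost component and extends the structure by a new rightmost one, hence is end-active. Third, for the rules $(\mathsf{E}n)$ and $(\mathsf{M}n)$ I reuse the explicit simulations given in the proof of the previous theorem. In each of them, $\Box_R^\e$ creates a fresh $\lnse$-nesting at the right end; then a sequence of $\Box_L^\e$, together with $\C$ and (in the monotone case) $\M$, acts on exactly this rightmost $\lnse$-pair; and the premises generated by $\Box_L^\e$ are then further simulated in the newly introduced $\lns$-component at the right.

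The main bookkeeping is to verify the modal case. This is essentially an inspection: each of the rules $\Box_R^\e, \Box_L^\e, \C, \M, \mathsf{N}$ in Fig.~\ref{fig:lns-classical-modal-logics} is written in the form $\mathcal{G} \lns \ldots$ with its principal slot either the rightmost component or the rightmost $\lnse$-pair, and the additional propositional and $\bot_L$ steps appearing in the schematic simulations likewise live in the rightmost component of the extended structure. Since no step of the simulation ever reaches into an earlier component, the resulting derivation lies entirely within the end-active fragment. Composing the simulations along the original sequent derivation then yields an end-active $\LNS$ derivation of $\;\seq A$, which proves the corollary.
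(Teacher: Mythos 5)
Your proposal is correct and follows exactly the paper's route: the corollary is obtained by observing that the simulations of $(\mathsf{E}n)$, $(\mathsf{M}n)$ and $(\mathsf{N})$ used in the completeness theorem already act only on the rightmost one or two components, so the resulting derivation lies in the end-active fragment. The only superfluous element is your remark on soundness, which the corollary does not claim.
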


For showing soundness of such calculi we need a different method,
though. This is due to the fact that, unlike for normal modal logics,
there is no clear formula interpretation for linear nested sequents
for non-normal modal logics.  However, since the propositional rules
cannot be applied inside the auxiliary nesting $\lnse$\;, the modal
rules can only occur in blocks which can be seen as a macro-rule
corresponding to a modal sequent rule.  In addition, we will show by a
permutation-of-rules argument that it is possible to restrict the
propositional rules to end-active applications (see
Def.~\ref{def:end-active}). Soundness of the full calculus then
follows from soundness of the end-active variant, which is shown by
translating derivations back into derivations in the corresponding
sequent calculus.

Similar to the argument for \emph{levelled derivations}
in~\cite[p.~241, Prop.~2]{Masini:1992}, the following Lemmata show
that the propositional rules can be restricted to be end-active. The
first step is to show invertibility of the general forms of the
propositional rules. Since all our calculi include the contraction
rule we show this in a slightly more general form.

\begin{definition}
  If $\Gamma_1 \seq \Delta_1 \lns \dots\ \lns \Gamma_n \seq \Delta_n$
  is a linear nested sequent, then the \emph{level} of the occurrences
  of formulae in $\Gamma_i, \Delta_i$ is $i$.
\end{definition}

In all the results stated in this section, we will assume that
$\mathcal{A} \subseteq \{ \mathsf{N},\M,\C\}$.

\begin{lemma}[Admissibility of Weakening]\label{weakening}
  The weakening rules $\W_L, \W_R$ are depth-preserving admissible in
  $\LNS_{\mathsf{E}\mathcal{A}}$ and
  $\LNS_{\mathsf{E}\mathcal{A}}\Con$, i.e., if there is a derivation
  $\mathcal{D}$ of $\mathcal{S}\con{\Gamma \seq \Delta}$ with depth at
  most $n$, then there are derivations $\mathcal{D}_1$ and
  $\mathcal{D}_2$ of $\mathcal{S} \con{\Gamma, A \seq \Delta}$ and
  $\mathcal{S} \con{\Gamma \seq A, \Delta}$ respectively with depth at
  most $n$ in the same system. Moreover, if the level of the active
  components of every rule application in $\mathcal{D}$ is at least
  $k$, then the same holds for $\mathcal{D}_1$ and $\mathcal{D}_2$.
\end{lemma}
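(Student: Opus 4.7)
The plan is to proceed by simultaneous induction on the depth $n$ of the derivation $\mathcal{D}$ for both $\W_L$ and $\W_R$, exploiting the fact that every rule of $\LNS_{\mathsf{E}\mathcal{A}}$ (and the structural rules of $\LNS_{\mathsf{E}\mathcal{A}}\Con$) is context-independent: no rule places a positive restriction on its side formulae, so inserting an extra formula into a component cannot block any instance.

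For the base case $n=1$, the derivation consists of a single application of $\init$, $\bot_L$, or $\top_R$. In each case the critical atom or constant required by the rule is still present after adding the weakening formula to $\Gamma$ or $\Delta$, so the zero-premiss rule remains an instance on the weakened linear nested sequent.

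For the inductive step, let $r$ be the last rule of $\mathcal{D}$. There are two essentially routine cases. If $r$ acts on a component of $\mathcal{S}\con{\Gamma\seq\Delta}$ different from the distinguished component $\Gamma \seq \Delta$, then each premiss of $r$ still displays $\Gamma \seq \Delta$ in the same relative position; we apply the induction hypothesis to each premiss to add $A$ on the appropriate side and then reapply $r$. If $r$ acts exactly on $\Gamma\seq\Delta$, then by the induction hypothesis we can insert $A$ into the corresponding component of each premiss without increasing depth, and the resulting instance of $r$ is still valid since the rule did not constrain the side formulae. The same argument covers $\Con_L,\Con_R$ in the system with structural rules, as well as the rules $\Box_R^\e$ and $\mathsf{N}$ that extend the linear nested sequent at its right end: the added component is fresh and independent of the weakening, which sits in an earlier component still present in the premiss.

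For the level statement, note that the transformation replaces each linear nested sequent in $\mathcal{D}$ by one with $A$ added to the same fixed component, without changing the length of any linear nested sequent nor the position at which any rule is applied. Consequently, the level of every active component is preserved, and the lower bound $k$ transfers verbatim to $\mathcal{D}_1$ and $\mathcal{D}_2$. The only real ``obstacle'' is purely clerical, namely verifying rule by rule that every schema in Fig.~\ref{fig:lns-classical-modal-logics} and the propositional and structural rules are indeed context-independent in the above sense; particular attention is warranted for the binary modal rules $\Box_L^\e$ and $\mathsf{C}$, where one must check that the weakening is correctly propagated into both premisses through their shared context $\mathcal{G}$.
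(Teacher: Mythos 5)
Your proposal is correct and follows essentially the same route as the paper: induction on the depth of the derivation, permuting the weakening upward through each (context-independent) rule until it is absorbed by the zero-premiss rules, with depth and the minimal level of active components preserved because the transformation does not alter the structure or positions of any rule applications. The paper's proof is just a terser statement of exactly this argument.
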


\begin{proof}
  As usual by induction on the depth of the derivation: applications
  of weakening are permuted upwards over every rule until they are
  absorbed by the initial sequents. Since this does not change the
  structure of the derivation and in particular does not introduce any
  new rule applications, the depth of the derivation and the minimal
  level of the active components of the rule applications is
  preserved.
\end{proof}

\begin{lemma}[Multi-invertibility of the propositional rules]\label{lem:m-invertibility}
  The non-end-active versions of the propositional rules are
  \emph{m-invertible} in $\LNS_{\mathsf{E}\mathcal{A}}\Con$, i.e., for
  every $n \geq 1$ we have:
  \begin{enumerate}
  \item If
    $\entails_{\LNS_{\mathsf{E}\mathcal{A}}\Con} \mathcal{S} \con{\Gamma,
      (\neg A)^n \seq \Delta}$, then also
    $\entails_{\LNS_{\mathsf{E}\mathcal{A}}\Con} \mathcal{S} \con{\Gamma
      \seq A^{n+k}, \Delta}$ for some $k \geq 0$.
  \item If
    $\entails_{\LNS_{\mathsf{E}\mathcal{A}}\Con} \mathcal{S} \con{\Gamma
      \seq (\neg A)^n, \Delta}$, then also
    $\entails_{\LNS_{\mathsf{E}\mathcal{A}}\Con} \mathcal{S} \con{\Gamma,
      A^{n+k} \seq \Delta}$ for some $k \geq 0$.
  \item If
    $\entails_{\LNS_{\mathsf{E}\mathcal{A}}\Con} \mathcal{S} \con{\Gamma,
      (A\iimp B)^n \seq \Delta}$, then also
    $\entails_{\LNS_{\mathsf{E}\mathcal{A}}\Con} \mathcal{S} \con{\Gamma,
      B^{n+k} \seq \Delta}$ and \\
    $\entails_{\LNS_{\mathsf{E}\mathcal{A}}\Con} \mathcal{S} \con{\Gamma
      \seq A^{n+\ell}, \Delta}$ for some $k,\ell \geq 0$.
  \item If
    $\entails_{\LNS_{\mathsf{E}\mathcal{A}}\Con} \mathcal{S} \con{\Gamma
      \seq (A\iimp B)^n, \Delta}$, then also
    $\entails_{\LNS_{\mathsf{E}\mathcal{A}}\Con} \mathcal{S} \con{\Gamma,
      A^{n+k} \seq B^{n+\ell}, \Delta}$ for some $k,\ell \geq 0$.
  \item If
    $\entails_{\LNS_{\mathsf{E}\mathcal{A}}\Con} \mathcal{S} \con{\Gamma,
      (A\lor B)^n \seq \Delta}$, then also
    $\entails_{\LNS_{\mathsf{E}\mathcal{A}}\Con} \mathcal{S} \con{\Gamma,
      A^{n+k} \seq \Delta}$ and \\
    $\entails_{\LNS_{\mathsf{E}\mathcal{A}}\Con} \mathcal{S}
    \con{\Gamma,B^{n+\ell} \seq \Delta}$ for some $k,\ell \geq 0$.
  \item If
    $\entails_{\LNS_{\mathsf{E}\mathcal{A}}\Con} \mathcal{S} \con{\Gamma
      \seq (A\lor B)^n, \Delta}$, then also
    $\entails_{\LNS_{\mathsf{E}\mathcal{A}}\Con} \mathcal{S} \con{\Gamma
      \seq A^{n+k}, B^{n+\ell}, \Delta}$ for some $k,\ell \geq 0$.
  \item If
    $\entails_{\LNS_{\mathsf{E}\mathcal{A}}\Con} \mathcal{S} \con{\Gamma,
      (A\land B)^n \seq \Delta}$, then also
    $\entails_{\LNS_{\mathsf{E}\mathcal{A}}\Con} \mathcal{S} \con{\Gamma,
      A^{n+k}, B^{n+\ell} \seq \Delta}$ for some $k,\ell \geq 0$.
  \item If
    $\entails_{\LNS_{\mathsf{E}\mathcal{A}}\Con} \mathcal{S} \con{\Gamma
      \seq (A\land B)^n, \Delta}$, then also
    $\entails_{\LNS_{\mathsf{E}\mathcal{A}}\Con} \mathcal{S} \con{\Gamma
      \seq A^{n+k}, \Delta}$ and \\
    $\entails_{\LNS_{\mathsf{E}\mathcal{A}}\Con} \mathcal{S} \con{\Gamma
      \seq B^{n+\ell}, \Delta}$ for some $k,\ell \geq 0$.
  \end{enumerate}
  Moreover, both the depth of the derivation and the minimal level of
  the active components of rule applications are preserved.
\end{lemma}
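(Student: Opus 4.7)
The plan is to prove all eight clauses simultaneously by induction on the depth of the given derivation, freely using the depth-preserving admissibility of weakening just established (Lemma~\ref{weakening}) to pad occurrences where needed. The existentially quantified slacks $k,\ell$ are crucial: they let us absorb discrepancies arising from different premises of multi-premise rules and from duplications introduced by contraction, without committing to an exact count in advance.

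For the base case the derivation ends in a zero-premiss rule. Since $\init$, $\bot_L$, $\top_R$ introduce only atomic or constant principal formulas, none of the compound target occurrences $\neg A$, $A\iimp B$, $A\lor B$, $A\land B$ can be principal here. Hence the same axiom instance witnesses the required conclusion after deleting the $n$ target occurrences and weakening in the desired $A^{n+k}$, $B^{n+\ell}$.

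For the inductive step, let $R$ be the last rule applied. Three sub-cases arise. (a)~If $R$ is not principal for any target occurrence, the occurrences pass unchanged into each premise; apply the appropriate IH to each premise, pad with weakening so the slacks agree across the premises, and reapply $R$. All modal rules of Fig.~\ref{fig:lns-classical-modal-logics} fall into this case with respect to a propositional target, since $\Box_R^\e$, $\Box_L^\e$, $\mathsf{N}$, $\mathsf{M}$, $\mathsf{C}$ are only principal for boxed formulas and their interior components sit under the binary nesting $\lnse$ where propositional rules are forbidden. (b)~If $R$ is principal for one of the $n$ target occurrences (e.g., $R=\lor_L$ acting on an $(A\lor B)$ in clause~5), then one occurrence is already decomposed in the premise(s) of $R$; apply the IH with parameter $n-1$ to the remaining target occurrences of each premise, and combine the resulting derivations using $\ConL,\ConR$ so that the total count of the decomposed copies reaches $n$, absorbing any excess into the slacks. (c)~If $R$ is a contraction duplicating a target occurrence, the premise has $n+1$ target occurrences; invoke the IH with parameter $n+1$, and again the existential slacks swallow the surplus.

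Depth-preservation follows since, apart from the weakenings (which are depth-preserving by Lemma~\ref{weakening}) and possibly a few bottommost contractions in case (b), no new rule applications are stacked above the original derivation, so the depth bound $n$ is maintained. Level-preservation holds because every transformation we perform acts at the level of the target occurrences themselves, which is $\geq k$ by hypothesis, so no active component of any rule application is ever lowered below $k$. The main technical obstacle is the bookkeeping of the slacks across case (a) when $R$ is a rule with several premises (notably $\iimp_L$, $\land_R$, $\lor_L$, and $\Box_L^\e$), since the IH applied to different premises may return different $k_i,\ell_i$; this is handled uniformly by first taking the pointwise maximum and padding the remaining premises with admissible weakening before reapplying $R$.
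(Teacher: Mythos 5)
Your proof is correct and follows essentially the same route as the paper's: induction on the depth of the derivation with a case split on the last applied rule, using depth-preserving admissibility of weakening to harmonise the existential slacks across the premisses of multi-premiss rules, and handling a contraction on the target by invoking the hypothesis with parameter $n+1$ so the surplus is absorbed into the slack. The only cosmetic difference is that you spell out the case where the last rule is principal on a target occurrence (which the paper glosses over); note that the bottommost contractions you mention there are in fact unnecessary, since the induction hypothesis at parameter $n-1$ together with the already-decomposed copy yields exactly $n+k$ occurrences, so depth preservation is not threatened.
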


\begin{proof}
  By induction on the depth of the derivation, distinguishing cases
  according to the last applied rule. E.g., for the rule $\iimp_R$ we
  have: if $\mathcal{S}\con{\Gamma \seq (A \iimp B)^n, \Delta}$ is an
  initial sequent or the conclusion of one of the rules $\bot_L$ or
  $\top_R$, then so is the linear nested sequent
  $\mathcal{S}\con{\Gamma, A^{n+k} \seq B^{n+\ell}, \Delta}$ for any
  $k,\ell \geq 0$. If the last applied rule was not a contraction
  rule, we apply the induction hypothesis to its premiss(es), followed
  by the same rule. E.g., if the last applied rule was the rule $\C$,
  and the component containing $(A \iimp B)^n$ is the penultimate one,
  we have a derivation ending in
  \[
    \infer[\C]{\mathcal{G} \lns \Gamma', \Box C \seq (A \iimp B)^n
      \Delta \lnse (\Sigma \seq \Pi; \Omega \seq \Theta)
    }
    {\infer*{\mathcal{G} \lns \Gamma' \seq (A \iimp B)^n, \Delta \lnse (\Sigma,
      C \seq \Pi; \Omega \seq \Theta)}{} \quad& \infer*{\mathcal{G} \lns \Gamma' \seq (A
      \iimp B)^n, \Delta \lns \Omega \seq C, \Theta}{}
    }
  \]
  Using the induction hypothesis, for some $i,j,k,\ell$ we obtain
  derivations of
  $\mathcal{G} \lns \Gamma', A^{n+i} \seq B^{n+j}, \Delta \lnse
  (\Sigma, C \seq \Pi; \Omega \seq \Theta) $ and
  ${\cal G} \lns \Gamma', A^{n+k} \seq B^{n+\ell}, \Delta \lns \Omega
  \seq C, \Theta$ and admissibility of weakening
  (Lemma~\ref{weakening}) followed by an application of $\C$ yields
  the desired
  $\mathcal{G} \lns \Gamma', \Box C, A^{n+\max\{i,k\}} \seq
  B^{n+\max\{j,\ell\}}, \Delta \lnse (\Sigma \seq \Pi; \Omega \seq
  \Theta)$. Finally, if the last applied rule was the contraction
  rule, we simply apply the induction hypothesis to its premiss. E.g.,
  if the contracted formula is $A \iimp B$ and we have a derivation
  ending in
  \[
    \infer[\ConR]{\mathcal{S}\con{\Gamma \seq (A \iimp B)^n, \Delta}
    }
    {\mathcal{S}\con{\Gamma \seq (A \iimp B)^{n+1}, \Delta}
    }
  \]
  we use the induction hypothesis to obtain $\mathcal{S}\con{\Gamma, A^{n+1+k}
    \seq B^{n+1+\ell}, \Delta}$ for some $k,\ell \geq 0$. 
\end{proof}

Of course, setting $n = 1$ in the statement of the previous lemma and
(possibly) applying a number of contractions to the result recovers
standard invertibility of the propositional rules, albeit not the
depth-preserving version.

Using this we first obtain soundness of the full calculus with
contraction with respect to the end-active variant.

\begin{lemma}\label{lem:end-activeness-LNSEA}
  If a linear nested sequent $\Gamma \seq \Delta$ is derivable in
  $\LNS_{\mathsf{E}\mathcal{A}}\Con$, then it is derivable in the
  end-active variant of $\LNS_{\mathsf{E}\mathcal{A}}\Con$.
\end{lemma}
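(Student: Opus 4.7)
The plan is to transform any derivation of the flat sequent $\Gamma \seq \Delta$ in $\LNS_{\mathsf{E}\mathcal{A}}\Con$ into one in which every rule application is end-active. By inspection of Fig.~\ref{fig:lns-classical-modal-logics}, the modal rules are automatically end-active, since they introduce or act on the rightmost (one or two) components of the linear nested sequent. Hence only propositional rule applications may fail end-activeness, and they do so precisely when their active component is neither the rightmost nor the second-rightmost of the surrounding structure.

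The key step is a permutation-of-rules argument, analogous to the one for levelled derivations in~\cite[p.~241,~Prop.~2]{Masini:1992}. Consider a non-end-active propositional rule application $r$ sitting directly above a modal rule application $m$ in the derivation. Since $r$ is non-end-active, its active component $c$ is at least third-rightmost in the premiss of $m$; and since $m$ extends the nesting to the right (either by $\lns$ or by $\lnse$), the very same component $c$ already occurs in the conclusion of $m$, where it is now among the two rightmost. We can therefore swap the two applications: apply $r$ first to the conclusion of $m$---where it is now end-active---and then apply $m$. For branching modal rules such as $\Box_L^\e$ or $\mathsf{C}$, the branch not originally containing $r$ must be re-derived to account for the modification to the shared context, which is achieved by combining Lemma~\ref{lem:m-invertibility} (multi-invertibility) with Lemma~\ref{weakening} (admissibility of weakening). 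The ``$n+k$'' bound of Lemma~\ref{lem:m-invertibility} is what lets us uniformly handle any contractions on the principal formula of $r$ that may appear along the adjusted branch.

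Iterating such swaps strictly decreases the sum, over all non-end-active propositional rule applications, of their distance to the bottom of the derivation, so the process terminates in an end-active derivation of $\Gamma \seq \Delta$. The main obstacle will be the case analysis against each modal rule in Fig.~\ref{fig:lns-classical-modal-logics}: in particular, one must verify that no swap places a propositional rule application inside the auxiliary nesting $\lnse$, since this would violate the restriction on the calculus, and each branching rule requires care in re-deriving the untouched premiss via multi-invertibility. These verifications are conceptually routine but numerous.
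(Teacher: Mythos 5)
Your overall strategy---pushing the offending applications downwards past the modal rules with the help of multi-invertibility and admissibility of weakening---is the same as the paper's, which performs the transformation ``in bulk'' per modal block rather than one swap at a time. As stated, however, your argument has two genuine gaps. First, you claim that only propositional rule applications can violate end-activeness, but the calculus is $\LNS_{\mathsf{E}\mathcal{A}}\Con$ and applications of \emph{contraction} can be non-end-active as well; the paper's proof spends explicit effort deleting such contractions (their effect having been absorbed into the ``$n+k$'' multiplicities of Lemma~\ref{lem:m-invertibility}) and reintroducing them as end-active contractions at the point where the relevant component is last. Without treating contraction your procedure does not output a derivation in the end-active variant. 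Relatedly, for a propositional rule end-activeness requires the active component to be \emph{the rightmost} component of the premiss, not merely among the two rightmost (cf.\ the $\land_L$ example following Def.~\ref{def:end-active}), so a single swap past $\Box_L^\e$, $\mathsf{C}$ or $\mathsf{M}$---none of which changes the number of components---makes no progress towards end-activeness of $r$, contrary to your claim that $r$ ``is now end-active'' after one swap; only $\Box_R^\e$ and $\mathsf{N}$ shorten the structure.

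Second, your termination measure breaks down precisely at the branching rules. To re-derive the second premiss of $\Box_L^\e$ or $\mathsf{C}$ after a swap you invoke Lemma~\ref{lem:m-invertibility}, which only yields the constituents with multiplicities $n+k$; to recover the original conclusion you must then weaken the other premiss to match and contract the surplus copies away in a component that is \emph{not} the rightmost. Each such swap therefore trades one non-end-active application at distance $d$ from the root for several non-end-active applications (the reinstalled propositional rule plus the new contractions) at distance $d-1$, and the sum of distances can increase. Termination can be rescued with a multiset ordering on these distances, or---as the paper does---by applying multi-invertibility once to the entire derivation above the bottom-most offending modal block, deleting all low-level contractions in one pass, and reintroducing the connectives and contractions only below the block, at the point where the component in question is still the last one (which exists because the endsequent has a single component).
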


\begin{proof}
  Due to the nature of the modal rules it is clear that in a
  derivation only applications of the propositional rules and
  contraction can violate the end-activeness condition. We then
  successively transform a derivation of $\Gamma \seq \Delta$ into an
  end-active derivation as follows. Take the bottom-most block of
  modal rules such that there is an application of a propositional
  rule or contraction above it with level of the active component
  smaller than the maximal level of the active components in the modal
  block. Since the modal rules only apply to formulae in the last
  component, all such applications of propositional rules introduce a
  propositional connective which in the conclusion of the modal block
  is not under a modality. Using multi-invertibility of the
  propositional connectives (Lemma~\ref{lem:m-invertibility}) we
  replace every such formula in the conclusion of the modal block by
  its constituents, possibly with multiplicity more than one. E.g, if
  the conclusion of the modal block has the form
  \[
    \infer[\Box_R^\e]{\mathcal{G} \lns \Gamma \seq A \iimp B, \Delta \lns
      \mathcal{H} \lns \Sigma \seq \Box C, \Pi
    }
    {\infer*{\mathcal{G} \lns \Gamma \seq A \iimp B, \Delta \lns
      \mathcal{H} \lns \Sigma \seq \Pi \lnse (\;\seq C; C \seq \;)}{}
    }
  \]
  with the formula $A\iimp B$ introduced above the modal block, using
  m-invertibility we obtain
  \[
    \infer[]{\mathcal{G} \lns \Gamma, A^{1+k} \seq B^{1+\ell}, \Delta
      \lns \mathcal{H} \lns \Sigma \seq \Box C, \Pi
    }
    {\infer*[\mathcal{D}]{\mathcal{G} \lns \Gamma, A^{1+k} \seq B^{1+\ell}, \Delta
      \lns \mathcal{H} \lns \Sigma \seq \Pi \lnse (\;\seq C; C \seq \;)
      }
      {
      }
    }
  \]
  Then we delete every application of the contraction rule with active
  component of level smaller than the maximal level of the active
  components in the modal block from the derivation, possibly using
  Lemma~\ref{weakening} to ensure that the contexts in two-premiss
  rules are the same. From the proof of
  Lemma~\ref{lem:m-invertibility} it can be seen that afterwards the
  minimal level of the active components in rule applications in the
  derivation up to the conclusion of the modal block is at least the
  maximal level of the active components in the modal block
  itself. Finally, we use end-active applications of contraction to
  remove unwanted duplicates followed by end-active applications of
  the propositional rules to reintroduce the propositional connectives
  in the right place, i.e., when the component containing the
  constituent formulae is the last one. Since the conclusion of the
  original derivation contained only a single component, this is
  always possible.
\end{proof}

From this we obtain soundness of the full calculus by first
translating derivations into derivations in the end-active variant,
then into derivations in the corresponding sequent calculus:

\begin{theorem}[Soundness]
  If a sequent $\Gamma \seq \Delta$ is derivable in
  $\LNS_{\mathsf{E}\mathcal{A}}\Con$ for
  $\mathcal{A}\subseteq \{\mathsf{N},\mathsf{M},\mathsf{C}\}$, then it
  is derivable in the corresponding sequent calculus
  $\G_{\mathsf{E}\mathcal{A}}\Con$. Hence if
  $\entails_{\LNS_{\mathsf{E}\mathcal{A}}\Con\W}\;\seq A$, then
  $A \in \mathsf{E}\mathcal{A}$.
\end{theorem}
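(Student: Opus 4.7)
The plan is to invoke Lemma~\ref{lem:end-activeness-LNSEA} to reduce to end-active derivations, and then to prove by induction on derivation depth the following strengthened statement: any end-active derivation of a linear nested sequent $\mathcal{H} \lns \Sigma \seq \Pi$ in $\LNS_{\mathsf{E}\mathcal{A}}\Con$ yields a sequent derivation of $\Sigma \seq \Pi$ in $\G_{\mathsf{E}\mathcal{A}}\Con$. Specialising to $\mathcal{H}$ empty and $\Sigma \seq \Pi = \Gamma \seq \Delta$ gives the first claim of the theorem. For the second claim, since weakening is admissible by Lemma~\ref{weakening}, derivability in $\LNS_{\mathsf{E}\mathcal{A}}\Con\W$ collapses to derivability in $\LNS_{\mathsf{E}\mathcal{A}}\Con$, and then standard semantic soundness of the sequent calculi $\G_{\mathsf{E}\mathcal{A}}\Con\W$ with respect to the corresponding non-normal logics (as established in the cited literature) delivers $A \in \mathsf{E}\mathcal{A}$.

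The easy cases of the induction are the zero-premiss rules, the propositional rules, and contraction. Since end-activeness forces every such application to act on the last component $\Sigma \seq \Pi$, an initial sequent or application of $\bot_L, \top_R$ remains an initial sequent at the sequent level, and a propositional or contraction step is translated to the corresponding sequent rule applied directly to $\Sigma \seq \Pi$, invoking the induction hypothesis on each premiss.

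The heart of the proof is the modal case, which I would handle by identifying a \emph{maximal modal block} sitting immediately above the current step. Reading such a block bottom-up it must begin with either $\mathsf{N}$ (in which case the block is a single rule whose premiss is $\mathcal{H} \lns \Sigma \seq \Pi \lns \;\seq B$) or with $\Box_R^\e$, introducing an auxiliary nesting $\lnse(\;\seq B; B \seq \;)$. Because the propositional rules of $\LNS_\G$ are forbidden inside $\lnse$, the only rules applicable while the auxiliary nesting persists are $\mathsf{M}$, $\mathsf{C}$ and $\Box_L^\e$; and once the auxiliary nesting is destroyed by $\Box_L^\e$ the block ends. A routine inspection then shows that every such block matches exactly one of the simulation patterns for $(\mathsf{E}n), (\mathsf{M}n), (\mathsf{N})$ exhibited in the completeness proof: $\C$ applications accumulate principal left boxes $\Box A_1,\dots,\Box A_n$ while generating ``backwards direction'' side premisses with simple nesting, $\mathsf{M}$ trivialises the ``backwards direction'' via $\bot_L$, and the final $\Box_L^\e$ (when present) produces two sub-derivations each with a simple last nesting. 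Applying the induction hypothesis to each exit-premiss of the block yields sequent derivations of the sequents appearing as their last components, and one application of the corresponding sequent rule $(\mathsf{E}n)$, $(\mathsf{M}n)$ or $(\mathsf{N})$, together with admissible weakening to align contexts, concludes $\Sigma \seq \Pi$.

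The main obstacle will be precisely this block analysis: verifying that no other rule shape can occur inside a maximal modal block, that every block terminates in the specified way (either by $\Box_L^\e$, or by a zero-premiss rule whose principal atom sits in the final component), and that the collected principal $\Box$-formulae and premiss sequents match the side-condition pattern of some rule actually present in $\G_{\mathsf{E}\mathcal{A}}\Con$ for the chosen $\mathcal{A}$. In particular, the availability of $\mathsf{C}$ and $\mathsf{M}$ in the block must be checked to correspond to the presence of $\mathsf{C}$, $\mathsf{M}$ in $\mathcal{A}$, so that only legal sequent rules are ever invoked; the $\mathsf{N}$ case similarly requires $\mathsf{N} \in \mathcal{A}$. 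Modulo this careful bookkeeping, the translation is straightforward.
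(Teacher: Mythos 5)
Your proposal follows essentially the same route as the paper: reduce to the end-active variant via Lemma~\ref{lem:end-activeness-LNSEA}, then translate the derivation by discarding all but the last component and collapsing each maximal modal block (forced into shape by the ban on propositional rules inside $\lnse$) into a single application of $(\mathsf{E}n)$, $(\mathsf{M}n)$ or $(\mathsf{N})$, finishing with admissibility of weakening and soundness of $\G_{\mathsf{E}\mathcal{A}}\Con$. The one piece of ``bookkeeping'' you defer is made explicit in the paper and is worth noting: since $\mathsf{M}$ may fire anywhere while the $\lnse$ nesting is open, a block need not literally match the simulation pattern, and the paper first permutes $\mathsf{M}$ downwards over $\mathsf{C}$ (closing the superfluous premisses with $\bot_L$) to bring every block into the canonical $\Box_R^\e$--$\mathsf{M}$--$\mathsf{C}^n$--$\Box_L^\e$ form before reading off the sequent rule.
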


\begin{proof}
  From the previous lemma we obtain that if a sequent
  $\Gamma \seq \Delta$ is derivable in
  $\LNS_{\mathsf{E}\mathcal{A}}\Con$, then it is derivable in the
  end-active variant of $\LNS_{\mathsf{E}\mathcal{A}}\Con$.  A
  derivation of the latter form then is translated into a
  $\G_{\mathsf{E}\mathcal{A}}\Con$ derivation, discarding everything
  apart from the last component of the linear nested sequents, and
  translating blocks of modal rules into the corresponding modal
  sequent rules. E.g., a block consisting of an application of
  $\Box_L^\e$ followed by $n$ applications of $\mathsf{C}$ and an
  application of $\Box_R^\e$ is translated into an application of the
  rule $(\mathsf{E}n)$. In the monotone case we use the fact that the
  rule $\M$ permutes down over the rule $\C$, i.e., a modal block
  \[
  \infer[\Box_R^\e]{\mathcal{G} \lns \Gamma, \Box A_1,\dots, \Box A_n
    \seq \Box B, \Delta
  }
  {\infer[\C]{\mathcal{G} \lns \Gamma, \Box A_1,\dots, \Box A_n
    \seq \Delta \lnse (\;\seq B; B \seq \;)
    }
    {\infer*{\mathcal{G} \lns \Gamma, \Box A_2,\dots, \Box A_n
    \seq \Delta \lnse (A_1\seq B; B \seq \;)
      }
      {\infer[\M]{\mathcal{G} \lns \Gamma, \Box A_{k+1},\dots, \Box A_n
    \seq \Delta \lnse (A_1, \dots, A_k\seq B; B \seq \;)
        }
        {\infer*{\mathcal{G} \lns \Gamma, \Box A_{k+1},\dots, \Box A_n
    \seq \Delta \lnse (A_1, \dots, A_k\seq B; B,\bot \seq \;)
          }
          {\infer[\Box_L^\e]{\mathcal{G} \lns \Gamma, A_n
              \seq \Delta \lnse (A_1, \dots, A_{n-1}\seq B; B,\bot
          \seq \;)
            }
            {\deduce
            {\mathcal{G} \lns \Gamma
              \seq \Delta \lns A_1, \dots, A_n\seq B}{}
            &
            \infer[\bottom_L]{\mathcal{G} \lns \Gamma, A_n
              \seq \Delta \lns B, \bot \seq A_n
            }{}
          }
        }
      }}
    &\hspace{-2em}
    \mathcal{G} \lns \Gamma, \Box A_2,\dots, \Box A_n
    \seq \Delta \lns B\seq A_1
    }
  }
  \]
  is first turned into the following block by permuting the rule $\M$
  downwards and closing the derivations of the superfluous premisses
  using the $\bot_L$ rule:
  \[
  \infer[\Box_R^\e]{\mathcal{G} \lns \Gamma, \Box A_1,\dots, \Box A_n
    \seq \Box B, \Delta
  }
  {\infer[\M]{\mathcal{G} \lns \Gamma, \Box A_1,\dots, \Box A_n
    \seq \Delta \lnse (\;\seq B; B \seq \;)
    }
  {\infer[\C]{\mathcal{G} \lns \Gamma, \Box A_1,\dots, \Box A_n
    \seq \Delta \lnse (\;\seq B; B,\bot \seq \;)
    }
    {\infer*{\mathcal{G} \lns \Gamma, \Box A_2,\dots, \Box A_n
    \seq \Delta \lnse (A_1\seq B; B,\bot \seq \;)
      }
      {\infer[\Box_L^\e]{\mathcal{G} \lns \Gamma, A_n
          \seq \Delta \lnse (A_1, \dots, A_{n-1}\seq B; B,\bot
          \seq \;)
        }
        {\mathcal{G} \lns \Gamma
          \seq \Delta \lns A_1, \dots, A_n\seq B
          &
          \infer[\bot_L]{\mathcal{G} \lns \Gamma, A_n
            \seq \Delta \lns B, \bot \seq A_n}{}
        }
      }
    &\hspace{-2.5em}
    \infer[\bot_L]{\mathcal{G} \lns \Gamma, \Box A_2,\dots, \Box A_n
    \seq \Delta \lns B,\bottom\seq A_1}{}
    }
    }
  }
  \]
  The resulting modal block then is translated into an application of
  the rule $(\mathsf{M}n)$ with premiss $A_1,\dots, A_n \seq B$ and
  conclusion $\Gamma, \Box A_1,\dots, \Box A_n \seq \Box B,
  \Delta$. The propositional rules only work on the last component,
  never inside the nesting $\lnse$ and are translated easily by the
  corresponding sequent rules.

  Soundness of the system $\LNS_{\mathsf{E}\mathcal{A}}\Con\W$ then
  follows from this using admissibility of weakening
  (Lem.~\ref{weakening}) and soundness of the sequent system
  $\G_{\mathsf{E}\mathcal{A}}\Con$.  \qed
\end{proof}

Note that due to the following lemma for the logics of the non-normal
cube we could have avoided the complications arising from including
the contraction rules in the calculi. However, in view of the calculi
in later sections and the fact that the original sequent systems
include contraction explicitly or implicitly in the structure of
sequents as defined by sets instead of multisets we chose the given
more general method for proving soundness.

\begin{lemma}[Admissibility of Contraction]\label{contraction}
  Contraction is admissible in the calculus
  $\LNS_{\mathsf{E}\mathcal{A}}$, that is, if there is a derivation
  $\mathcal{D}$ of $\mathcal{S}\con{\Gamma, A, A \seq \Delta}$
  (resp. $\mathcal{S}\con{\Gamma \seq \Delta, A, A}$) in
  $\LNS_{\mathsf{E}\mathcal{A}}$, then there is a derivation
  $\mathcal{D}'$ of $\mathcal{S} \con{\Gamma, A \seq \Delta}$ (resp.
  $\mathcal{S} \con{\Gamma\seq \Delta, A}$) in
  $\LNS_{\mathsf{E}\mathcal{A}}$.
\end{lemma}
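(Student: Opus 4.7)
The plan is to proceed by induction on the depth $n$ of the derivation of $\mathcal{S}\con{\Gamma, A, A \seq \Delta}$ (the right case is dual), combined with an outer induction on the complexity of the contracted formula $A$. In the base case, the derivation ends in $\init$, $\bot_L$, or $\top_R$, and $\mathcal{S}\con{\Gamma, A \seq \Delta}$ is still an instance of the same rule, since the atomic or structural active data of these rules is unaffected by dropping a duplicate of $A$ from the context.

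For the inductive step I distinguish cases on the last applied rule $\rho$. If neither copy of $A$ is principal in $\rho$, both copies persist in the context of all premisses, so the inner IH on depth applies to each premiss and we reapply $\rho$ to obtain the contracted conclusion. If one copy of $A$ is principal and $\rho$ is a propositional rule, I first establish depth-preserving invertibility of the propositional rules of $\LNS_{\mathsf{E}\mathcal{A}}$ (the $n=1$ analogue of Lemma~\ref{lem:m-invertibility} for the calculus without $\Con$, proved in the standard way), then apply this invertibility to the non-principal copy of $A$ to bring its immediate subformulas into the premiss, contract the resulting duplicates of these strictly smaller subformulas via the outer IH on complexity, and reapply $\rho$.

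The modal cases $A = \Box B$ are the most delicate. The right rules $\Box_R^\e$ and $\mathsf{N}$ leave a duplicate $\Box B$ in the right-hand side of the unique premiss: the inner IH on depth applies there, and a reapplication of the rule yields the contracted conclusion. The rule $\mathsf{M}$ only introduces $\bot$, so contraction of any other formula falls under the non-principal case. The hard subcase is $\rho \in \{\Box_L^\e, \mathsf{C}\}$: here the duplicate $\Box B$ sits in the left context of a component that is not adjacent to the terminal $\lnse$, and a direct IH on the premisses does not produce matching premisses for a contracted reapplication, because the two premisses carry a single $\Box B$ in their context while the schema for the ``contracted'' application of the rule would require none. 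I would resolve this by establishing an invertibility-style property for $\Box_L^\e$ and $\mathsf{C}$ with respect to their left principal formula, exploiting the structural restriction that $\lnse$ occurs only at the end of an $\LNS$ and that propositional rules cannot act inside $\lnse$; the justification proceeds by a permutation-of-rules analysis analogous to the one in Lemma~\ref{lem:end-activeness-LNSEA}. Combining this inversion (applied to both occurrences of $\Box B$ in the original conclusion, producing duplicate $B$'s inside the nested component) with the outer IH on the strictly smaller formula $B$, followed by a reapplication of $\rho$, closes the subcase.

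The main obstacle is precisely this last subcase: making the invertibility of $\Box_L^\e$ and $\mathsf{C}$ precise in the presence of other rules ($\mathsf{C}$ and $\mathsf{M}$) that also produce $\lnse$-structures requires a careful case analysis of the possible last rules of a derivation whose conclusion ends in $\lnse$, together with controlled permutations to commute the desired inversion past each of them. The restriction that propositional rules never cross the $\lnse$ boundary is what makes such an analysis feasible.
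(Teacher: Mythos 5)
Your strategy is essentially the paper's: a double induction on the complexity of the contracted formula and the depth of the derivation, with propositional contracted formulas dispatched by invertibility plus the outer induction hypothesis, and the genuinely hard case --- $\Box B$ principal on the left in $\Box_L^\e$ or $\mathsf{C}$ --- resolved by pushing both boxed occurrences into the nested component so that two copies of the strictly smaller formula $B$ appear side by side there, contracting them by the outer induction hypothesis, and reapplying the rule once. The paper realises your ``invertibility-style property'' by tracing the second $\Box B$ upward through the modal block: either it is never active (and can be excised), or it is consumed by a later $\mathsf{C}$/$\Box_L^\e$, in which case the block is replayed after contracting $B,B$ at the point where the $\lnse$-nesting is discharged. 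Your diagnosis of why this works --- the $\lnse$ discipline confines the modal rules to a block that propositional rules cannot enter --- is exactly the right one.

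One subcase fails as written, however: for $\Box_R^\e$ (and $\mathsf{N}$) the premiss does \emph{not} contain a duplicate $\Box B$. The principal occurrence is consumed and replaced by the nesting $\lnse(\;\seq B; B \seq \;)$, so only the single non-principal copy survives, now sitting in the right-hand side of a component that is no longer last; the inner induction hypothesis therefore has nothing to contract. The correct observation --- the same one the paper makes for the second premiss of $\mathsf{C}$ --- is that this residual occurrence is inert: since components are never removed, that component never again becomes the last one, and no rule of $\LNS_{\mathsf{E}\mathcal{A}}$ can make a right-hand formula of a non-final component principal. Hence the occurrence can simply be deleted from every sequent of the subderivation without invalidating any rule instance or axiom, after which a single application of $\Box_R^\e$ yields the contracted conclusion. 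With that repair (and with the $n=1$, $\Con$-free invertibility lemma you already propose to prove), your argument goes through.
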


\begin{proof}
  The proof is for both statements simultaneously by double induction
  on the complexity of the contracted formula and the depth of the
  derivation. In case the main connective of the contracted formula is
  a propositional connective, we use invertibility of the
  propositional rules (Lemma~\ref{lem:m-invertibility}) followed by
  the (outer) induction hypothesis on the complexity of the contracted
  formula.  The cases where $A$ is a modal formula and not principal
  in the last applied rule are dealt with in the standard way by
  appealing to the (inner) induction hypothesis on the depth of the
  derivation. If the contracted formula is a modal formula and
  principal in the last applied rule, we distinguish cases according
  to the last applied rule.  Suppose, e.g., that
  $\mathcal{S}\con{\Gamma, \Box A, \Box A \seq \Delta}$ has a
  derivation of the shape
\[
\infer[\mathsf{C}]{\mathcal{G}\lns\Gamma,\Box A,\Box A\seq\Delta \lnse(\Sigma \seq \Pi; \Omega \seq \Theta)}
{\deduce{\mathcal{G} \lns \Gamma,\Box A \seq \Delta
    \lnse(\Sigma, A \seq \Pi; \Omega \seq \Theta)}{\pi_1} &
    \deduce{\mathcal{G} \lns \Gamma,\Box A \seq \Delta \lns \Omega \seq A, \Theta}{\pi_2}}
\]
Note that the $\Box A$ in the penultimate component of the conclusion
of $\pi_2$ will be necessarily weakened, since no logical rules can
act on it.  In the derivation $\pi_1$, either $\Box A$ is never
active, in which case it can be weakened, or it is active via one of
the rules $\mathsf{C}$ or $\Box_L^\e$. Let's consider the first case:
\[
\infer[\mathsf{C}]{\mathcal{G} \lns \Gamma,\Box A \seq \Delta
    \lnse(\Sigma, A \seq \Pi; \Omega \seq \Theta)}
{\deduce{\mathcal{G} \lns \Gamma \seq \Delta
    \lnse(\Sigma, A, A \seq \Pi; \Omega \seq \Theta)}{\pi_1'} &
    \deduce{\mathcal{G} \lns \Gamma \seq \Delta \lns \Omega \seq A,  \Theta}{\pi_1''}}
\]
Observe that the modal block in $\pi_1 '$ will eventually end by
producing leaves of the form
$\mathcal{G}\lns \Gamma'\seq \Delta'\lns \Sigma',A,A\seq \Pi'$ and
$\Omega'\seq \Theta'$. By induction hypothesis, for every derivation
for a sequent of the first form there is a derivation of
$\mathcal{G}\lns \Gamma'\seq \Delta'\lns \Sigma',A\seq \Pi'$. Hence,
starting from such leaves and applying the same sequence of rules as
in the modal block of $\pi_1'$, we have a derivation $\pi$ of
$\mathcal{G} \lns \Gamma \seq \Delta \lnse(\Sigma, A \seq \Pi; \Omega
\seq \Theta)$. Thus
\[
\infer[\mathsf{C}]{\mathcal{G} \lns \Gamma,\Box A \seq \Delta
    \lnse(\Sigma \seq \Pi; \Omega \seq \Theta)}
{\deduce{\mathcal{G} \lns \Gamma \seq \Delta
    \lnse(\Sigma, A \seq \Pi; \Omega \seq \Theta)}{\pi} &
    \deduce{\mathcal{G} \lns \Gamma \seq \Delta \lns \Omega \seq A,  \Theta}{\pi_1''}}
\]
The other cases are similar and simpler.
\end{proof}

It is worth noting that modular calculi for the logics in the
non-normal cube were also given in~\cite{Gilbert:2015} using the
framework of labelled sequents. The calculi presented there are very
much semantically motivated and are based on a translation of
non-normal modal logics into normal modal logics. The complexity of
the resulting semantic conditions then is captured using \emph{systems
  of rules}~\cite{Negri:2016}.


\section{Structural Variants and the Modal Tesseract}
\label{sec:dosens-principle}

The systems for the non-normal logics introduced in the last section
make use of different \emph{logical} rules, but sometimes it is
preferable to change logics only by modifying the \emph{structural}
rules of the system, i.e., the rules governing the behaviour of the
\emph{structural connective} $\lns$.  In particular, for sequent
systems varying the structural rules instead of the logical rules
often results in higher modularity, since cut elimination proofs are
usually less affected by additional structural rules. This has also
been called \emph{Do\v{s}en's Principle} in~\cite{Wansing:2002fk}. We
will now apply this idea to obtain modular calculi for a number of
extensions of monotone modal logic $\M$ (see also~\cite{Hansen:2003}
for a semantic treatment not only of these logics). In order to do so,
we first simplify the calculus for monotone modal logic.  As the avid
reader might have noticed, there is quite a lot of redundancy in this
calculus. In particular, after applying the rule $\M$, the second
premiss of the following applications of $\C$ or $\Box_L^\e$ become
trivially derivable. Hence for the present purpose we might as well
omit these premisses and the corresponding component of the nesting
operator, replacing the binary operator $\lnse$ with the unary
operator $\lnsm$\,. Linear nested sequents for monotone modal logics
then are given by:
\[
\LNS_\m ::= \Gamma \seq \Delta \mid \Gamma \seq \Delta \lnsm \Sigma
\seq \Pi \mid \Gamma \seq \Delta \lns \LNS_\m
\]
The rules $\Box_R^\e$ and $\Box_L^\e$ in the monotone setting then are
simplified to the rules $\Box_R^\m$ and $\Box_L^\m$ of
Fig.~\ref{fig:struct-monotone-LNS}, which now only need to carry
information about one direction of the premisses. The additional rules
for the axioms $\C$ and $\M$ (shown in the same figure) now are given
in their structural variants, permitting to switch from the ``finished
rule'' marker $\lns$ to the ``unfinished rule'' marker $\lnsm$ and
back. Obviously, adding both rules $\mathsf{N}$ and $\C$ collapses
both nesting operators into one, and essentially brings us to the
linear nested sequent calculus for modal logic $\K$ from
Fig.~\ref{fig:lns-K}, as should be the case since $\K$ is precisely
the logic $\M\mathsf{N}\C$. Finally, observe that applying rule $\C$
allows propositional rules to be applied between modal phases.
\begin{figure}[t]
  \hrule
  \[
  \infer[\Box_R^\m]{\mathcal{G} \lns \Gamma \seq \Box B,
    \Delta}{\mathcal{G} \lns \Gamma \seq \Delta \lnsm \;\seq
    B}
  \qquad
  \infer[\Box_L^\m]{\mathcal{G} \lns \Gamma, \Box A \seq \Delta \lnsm
    \Sigma \seq \Pi}{\mathcal{G} \lns \Gamma \seq \Delta
    \lns \Sigma, A \seq \Pi}
  \qquad
  \infer[\mathsf{C}]{\mathcal{G} \lns \Gamma\seq \Delta}{\mathcal{G}
    \lnsm \Gamma\seq \Delta }
  \qquad
  \infer[\mathsf{N}]{\mathcal{G} \lnsm \Gamma\seq \Delta}{\mathcal{G}
    \lns \Gamma\seq \Delta }
  \]
  \[
  \LNS_{\M\mathcal{A}} \quad \{\, \Box_R^\m, \Box_L^\m \, \} \cup
  \mathcal{A} \quad \text{ for }\mathcal{A} \subseteq \{ \mathsf{C}, \mathsf{N} \}
  \]
  \hrule
  \caption{The structural variants of the linear nested systems for monotone modal logics}
  \label{fig:struct-monotone-LNS}
\end{figure}

The main benefit of capturing the axioms $\C$ and $\mathsf{N}$ by
structural rules instead of logical rules is that it is now possible
to give calculi for further extensions in a uniform way, independent
of normality or non-normality of the base logic. The further axioms we
are going to consider are (using the terminology
of~\cite{Hansen:2003}):
\[
  \PP\; \neg \Box \bot
  \qquad
  \D\; \neg (\Box A \land \Box \neg A)
  \qquad
  \mathsf{T}\; \Box A \to A
  \qquad
  \mathsf{4}\; \Box A \to \Box \Box A
  \qquad
  \mathsf{5}\; \Box A \lor \Box \neg \Box A
\]
Note that we included both the two axioms $\PP$ and $\D$ which are
usually taken to be two different formulations of the axiom for
seriality. This is due to the fact that in the non-normal setting the
two formulations are not equivalent: while $\PP$ is derivable from
$\D$, the opposite does not hold in logics not validating the axiom
$\C$.  The reason for why we here only consider extensions of monotone
modal logic with these axioms instead of extensions of classical modal
logic $\mathsf{E}$ is that obtaining cut-free sequent calculi for many
of these extensions seems to be problematic, see
e.g.~\cite{Indrzejczak:2011}.

\begin{definition}[Sequent calculi]
  The sequent rules for extensions of monotone modal logic with axioms
  from $\{\PP,\D,\T,\4,\5\}$ are given in
  Fig.~\ref{fig:sequent-rules-ext-mon}.  Let
  $\mathcal{A} \subseteq \{ \mathsf{N}, \PP, \D, \mathsf{T}, \4, \5
  \}$. The sequent system $\G_{\mathsf{M}\mathcal{A}}$ contains the
  standard propositional rules of $\G$ (see
  Def.~\ref{def:seq-simplpy-dep}) as well as the following modal
  rules:
  \begin{itemize}
  \item $\{ \M \} \cup \mathcal{A}$
  \item $\D\4$ if $\{ \D, \4 \} \subseteq \mathcal{A}$
  \item $\D\5$ if $\{ \D,\5 \} \subseteq \mathcal{A}$.
  \end{itemize}
  The sequent system $\G_{\mathsf{M}\C\mathcal{A}}$ contains the
  standard propositional rules together with the additional rules
  \begin{itemize}
  \item $\{\C\} \cup \mathcal{A}$
  \item $\C\D$ if $\PP \in \mathcal{A}$ or $\D \in \mathcal{A}$
  \item $\C\4$ if $\4 \in \mathcal{A}$
  \item $\C\D\4$ if $\{\PP,\4\} \subseteq \mathcal{A}$ or $\{\D,\4\} \subseteq \mathcal{A}$
  \item $\K\4$ if $\{\mathsf{N},\4\}\subset \mathcal{A}$
  \item $\K\4\5$ if $\{\mathsf{N},\4,\5\} \subseteq \mathcal{A}$
  \item $\K\D\4\5$ if $\{\mathsf{N},\PP,\4,\5\} \subseteq\mathcal{A}$
    or $\{\mathsf{N},\D,\4,\5\} \subseteq \mathcal{A}$
  \end{itemize}
\end{definition}

\begin{figure}[t]
  \hrule\smallskip
  \[
  \infer[\M]{\Box A \seq \Box B}{A \seq B}
  \qquad
  \infer[\mathsf{N}]{\;\seq\Box A}{\;\seq A}
  \qquad
  \infer[\PP]{\Box A \seq \;}{A \seq \;}
  \qquad
  \infer[\D]{\Box A, \Box B \seq \;}{A,B \seq \;}
  \qquad
  \infer[\mathsf{T}]{\Gamma, \Box A \seq \Delta}{\Gamma, A \seq
    \Delta}
  \]
  \[
  \qquad
  \infer[\mathsf{4}]{\Box A \seq \Box B}{\Box A \seq B}
  \qquad
  \infer[\mathsf{5}]{\;\seq \Box A, \Box B}{\;\seq A, \Box B}
  \qquad
  \infer[\D\4]{\Box A, \Box B \seq \;}{A, \Box B \seq \;}
  \qquad
  \infer[\D\5]{\Box A \seq \Box B}{A \seq \Box B}
  \]
  \[
  \begin{array}{c}
  \infer[\C]{\Box \Gamma \seq \Box A}{\Gamma \seq A}\\
  (|\Gamma| \geq 1)
  \end{array}
  \qquad
  \infer[\C\D]{\Box \Gamma \seq \;}{\Gamma \seq \;}
  \qquad
  \begin{array}{c}
  \infer[\C\4]{\Box \Gamma, \Box \Sigma \seq \Box B}{\Box \Gamma,
    \Sigma \seq B}\\
  (|\Gamma,\Sigma| \geq 1)
  \end{array}
  \qquad
  \infer[\C\D\4]{\Box \Gamma, \Box \Sigma \seq \;}{\Box \Gamma,
    \Sigma \seq \;}
  \]
  \[
  \infer[\K\4]{\Box \Gamma, \Box \Sigma \seq \Box A}{\Box \Gamma, \Sigma \seq A}
  \qquad
  \infer[\K\4\5]{\Box \Gamma, \Box \Sigma \seq \Box A, \Box
    \Delta}{\Box \Gamma, \Sigma \seq A, \Box \Delta}
  \qquad
  \infer[\K\D\4\5]{\Box \Gamma, \Box \Sigma \seq \Box
    \Delta}{\Box \Gamma, \Sigma \seq \Box \Delta}
  \]
  \hrule
  \caption{Sequent rules for extensions of monotonic logics. We
    slightly abuse notation and write the same letters for axioms and
    the corresponding rules.}
  \label{fig:sequent-rules-ext-mon}
\end{figure}

Note that the modal sequent rules of
Fig.~\ref{fig:sequent-rules-ext-mon} do not absorb weakening into the
conclusion. Indeed, the weakening and contraction rules are not
admissible in most of these calculi. While of course the modal rules
could be modified as to make the structural rules admissible, to stay
closer to the literature we consider the calculi with explicit
structural rules. The additional sequent rules stipulated in the above
definition are required for cut elimination. In particular, this means
that modularity fails almost completely for these systems. Decomposing
the rules yields the linear nested sequent rules and rule sets
$\LNS_{\M\mathcal{A}}$ given in Fig.~\ref{fig:lns-ext-mon}. Note in
particular that we do not need to include additional rules at all, and
hence the calculi are completely modular.
\begin{figure}[t]
  \hrule\medskip
  \[
  \infer[\PP]{\mathcal{G} \lns \Gamma \seq \Delta}{\mathcal{G} \lns
    \Gamma \seq \Delta \lnsm \;\seq \;}
  \qquad
  \infer[\D]{\mathcal{G} \lns \Gamma, \Box A \seq \Delta}{\mathcal{G} \lns
    \Gamma \seq \Delta \lnsm A\seq \;}
  \qquad
  \infer[\mathsf{T}]{\mathcal{G} \lns \Gamma, \Sigma \seq \Delta,
    \Pi}{\mathcal{G} \lns \Gamma \seq \Delta \lnsm \Sigma \seq \Pi}
  \]
  \[
  \infer[\mathsf{4}]{\mathcal{G} \lns \Gamma, \Box A \seq \Delta \lnsm \Sigma
    \seq \Pi}{\mathcal{G} \lns \Gamma \seq \Delta \lns \Sigma,\Box A
    \seq\Pi}
  \qquad
  \infer[\mathsf{5}]{\mathcal{G} \lns \Gamma \seq \Delta, \Box A \lnsm \Sigma
    \seq \Pi}{\mathcal{G} \lns \Gamma \seq \Delta \lns \Sigma
    \seq\Pi,\Box A}
  \qquad
  \]
  \[
  \LNS_{\M\mathcal{A}} \quad \{ \Box_R^\m, \Box_L^\m\} \cup
  \mathcal{A} \quad \text{ for }\mathcal{A} \subseteq \{ \mathsf{C},
  \mathsf{N}, \PP, \D, \4,\5\}
  \]
  \hrule
  \caption{Linear nested sequent rules for extensions of monotonic modal logics}
  \label{fig:lns-ext-mon}
\end{figure}
As for normal modal logics, extensions of $\M$ including the axiom
$\5$ are not as well behaved as those without it. We first consider
logics not including $\5$. Most of the following results can be found
in the literature.

\begin{proposition}\label{prop:cut-elim-mon}
  For
  $\mathcal{A} \subseteq \{\mathsf{N},\mathsf{C},\PP,\D,\mathsf{4}\}$
  the sequent calculus $\G_{\mathsf{M}\mathcal{A}}\Con\W$ is sound and
  complete for the logic $\mathsf{M}\mathcal{A}$, i.e., for every
  formula $A$ we have $A \in \mathsf{M}\mathcal{A}$ if and only if
  $\entails_{\G_{\mathsf{M}\mathcal{A}}\Con\W}\;\seq A$.
\end{proposition}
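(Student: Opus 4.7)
The plan is to follow the standard pattern for cut-free sequent systems for modal logics: first establish soundness semantically, then derive the Hilbert-style axioms and modus-ponens in the sequent system with cut, and finally eliminate cuts. Since the various axioms considered correspond to well-known semantic conditions on monotone neighbourhood frames (see, e.g.,~\cite{Hansen:2003,Chellas:1980fk}), soundness is the easier direction; cut-elimination is where all the actual work lies, and this is exactly why the additional rules $\C\D$, $\C\4$, $\C\D\4$ and $\K\4$ had to be added to the system.

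For soundness, I would proceed by induction on the depth of a derivation, showing that each rule preserves validity in the appropriate class of monotone neighbourhood frames. For the basic rule $\M$ this is precisely monotonicity of the neighbourhoods; $\mathsf{N}$ corresponds to the unit being a neighbourhood of every world; $\C$ to closure under finite intersection; $\PP$ and $\D$ to the two versions of seriality (non-emptiness and closure under binary intersection to a nonempty set); $\mathsf{T},\4$ to the familiar reflexivity/transitivity-like conditions, and the combined rules $\C\D$, $\C\4$, $\C\D\4$, $\K\4$ to the obvious conjunctions of these properties, all of which hold in the relevant frame class. The structural rules $\ConL,\ConR,\W_L,\W_R$ are sound trivially.

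For completeness, the strategy is to show that every theorem of $\mathsf{M}\mathcal{A}$ is derivable in $\G_{\mathsf{M}\mathcal{A}}\Con\W$ extended with the (multi)cut rule, and then appeal to cut elimination. Derivability of the axioms is immediate: the axiom $\M$, for instance, is an instance of $\init$ followed by rule $\M$, and similarly for each of $\PP,\D,\mathsf{T},\4$, while the derived combined axioms such as $\C\4$ are derivable from $\C$ and $\4$ using (multi)cut. Necessitation is simulated by rule $\mathsf{N}$ applied to a derivation of $\seq A$, and modus ponens by a single cut. Hence every Hilbert-derivable formula has a derivation in $\G_{\mathsf{M}\mathcal{A}}\Con\W$ with cut.

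The hard part is cut elimination itself, which proceeds by the usual double induction on the complexity of the cut formula and the sum of the depths of the derivations of the two premises of the cut (as in the proof for the simply dependent multimodal case above). The critical cases are those where the cut formula $\Box A$ is principal on both sides and both premises were obtained by modal rules: here the reduced cuts must be pushed under the modal rules on both sides, and this is only possible if the combined rule needed to re-apply the modal step in the reduct is present in the system. This is precisely the point where the combined rules $\C\D$, $\C\4$, $\C\D\4$ and $\K\4$ become indispensable: e.g., a cut between a $\C$-conclusion $\Box\Gamma\seq\Box A$ and a $\4$-conclusion $\Box A\seq \Box B$ must be reduced, after a cut on the simpler formula $A$, to a single modal step of type $\C\4$, and analogously for the other combinations. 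One checks case by case that the closure of $\mathcal{A}$ under the clauses in the definition of $\G_{\mathsf{M}\mathcal{A}}$ is exactly what is needed to close all these critical cut reductions, so that cut elimination terminates, yielding cut-free completeness; most of these reductions, as noted in the text, are already spelled out in the literature and I would invoke them rather than reproduce them in full.
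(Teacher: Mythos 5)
Your proposal is correct and follows essentially the same route as the paper: soundness by rule-wise validity preservation, completeness by simulating the Hilbert axioms and modus ponens with (multi)cut, and then cut elimination by double induction on the complexity of the cut formula and the sum of the depths of the premiss derivations, with the combined rules $\C\D$, $\C\4$, $\C\D\4$, $\K\4$ being exactly what is needed to close the critical modal-versus-modal reductions (the paper likewise cites the known cases from the literature and carries out the remaining reductions, via the multicut rule, in Appendix~\ref{app:cut-elim-mon}). Your identification of the $\C$-versus-$\4$ cut as a representative critical case requiring the combined rule matches the paper's treatment.
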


\begin{proof}
  For the extensions of normal modal logic $\K = \M\mathsf{N}\C$, see
  e.g.~\cite{Wansing:2002fk}.  For the logic $\M\C\T$ the result is
  shown in~\cite{Ohnishi:1957}, for the extensions of $\M$ with axioms
  from $\{\mathsf{N}, \C\}$ see~\cite{Lavendhomme:2000}. The result
  for the logics $\M\PP$ and $\M\C\PP = \M\C\D$ can be found
  in~\cite{Orlandelli:2014}. The majority of the results for the
  non-normal logics are due to~\cite{Indrzejczak:2005}, namely the
  calculi for all extensions of $\M$ with axioms from
  $\{\mathsf{N}, \D, \T, \4 \}$. The remaining calculi for the logics
  $\M\PP, \M\PP\4, \M\mathsf{N}\PP, \M\mathsf{N}\PP\4, \M\C\4,
  \M\C\PP\4 = \M\C\D\4$ and $\M\T\4$ can be 
  constructed using methods similar to the ones
  in~\cite{Lellmann:2013fk,Lellmann:2013}. The cut elimination proof
  for these calculi essentially is an extension of the cut elimination
  proof given in~\cite{Indrzejczak:2005}. Since it is not central to
  the topic of this paper we relegate it to
  Appendix~\ref{app:cut-elim-mon}.
\end{proof}

This gives modular linear nested sequent systems for every combination
of $\mathsf{C},\mathsf{N},\PP,\D, \mathsf{T}, \mathsf{4}$.

\begin{theorem}[Soundness and Completeness]\label{thm:non-normal-lns-compl}
  Let $\mathcal{A}$ be a subset of
  $\{\mathsf{C}, \mathsf{N}, \PP, \D, \mathsf{T}, \mathsf{4}\}$. Then
  the linear nested sequent calculus
  $\LNS_{\mathsf{M}\mathcal{A}}\Con\W$ is sound and complete for the
  logic $\mathsf{M}\mathcal{A}$, i.e., for every formula $A$ we have
  $A \in \mathsf{M}\mathcal{A}$ if and only if
  $\entails_{\LNS_{\mathsf{M}\mathcal{A}}\Con\W} \;\seq A$.
\end{theorem}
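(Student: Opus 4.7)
The plan is to use the cut-free sequent calculi $\G_{\mathsf{M}\mathcal{A}}\Con\W$ established in Prop.~\ref{prop:cut-elim-mon} as a bridge, proving the theorem by translating in both directions between sequent derivations and $\LNS$ derivations of the same sequent. For the ``only if'' direction (completeness of $\LNS_{\mathsf{M}\mathcal{A}}\Con\W$) I would translate a $\G_{\mathsf{M}\mathcal{A}}\Con\W$ derivation of $\;\seq A$ into an $\LNS_{\mathsf{M}\mathcal{A}}\Con\W$ derivation of the same sequent; for the ``if'' direction (soundness) I would translate in the opposite direction, so that soundness of $\LNS_{\mathsf{M}\mathcal{A}}\Con\W$ reduces to soundness of the sequent calculus already contained in Prop.~\ref{prop:cut-elim-mon}.

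For completeness I would process a sequent derivation bottom-up and replace each modal sequent rule application by a block of $\LNS$ rules, in the style of the earlier simulations for $\LNS_\K$ and $\LNS_{\mathsf{E}\mathcal{A}}\Con\W$. The basic block $\Box_R^\m$ followed by $\Box_L^\m$ simulates the sequent rule $\M$; $\Box_R^\m$ followed by the structural rule $\mathsf{N}$ of Fig.~\ref{fig:struct-monotone-LNS} simulates the sequent rule $\mathsf{N}$; the sequent rules $\PP$ and $\D$ are simulated by their homonymous $\LNS$ rules of Fig.~\ref{fig:lns-ext-mon} followed by $\Box_L^\m$; the sequent rule $\4$ is mimicked by $\Box_R^\m$ followed by the $\LNS$ rule $\4$; and the sequent rule $\T$ is handled by applying the $\LNS$ rule $\mathsf{T}$ of Fig.~\ref{fig:lns-ext-mon} with $\Sigma$ chosen as the left context minus the principal $\Box A$ and $\Pi$ chosen as the entire right context, followed by $\Box_L^\m$ on $\Box A$, which leaves exactly the sequent premise $\Gamma, A \seq \Delta$ as the rightmost component of the resulting $\LNS$. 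The composite sequent rules of Fig.~\ref{fig:sequent-rules-ext-mon}, such as $\C$, $\C\D$, $\C\4$, $\C\D\4$, $\D\4$ and $\K\4$, are then built by iterating $\Box_L^\m$, $\4$ and the structural rule $\C$ of Fig.~\ref{fig:struct-monotone-LNS}, using $\C$ between successive $\Box_L^\m$ or $\4$ applications to toggle back from the ``finished'' marker $\lns$ to the ``in-progress'' marker $\lnsm$, so that several modal principal formulae can be processed within a single block.

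For soundness I would follow the pattern used for the classical non-normal cube in Section~\ref{sec:non-normal-modal}. First I would establish depth-preserving admissibility of weakening and m-invertibility of the propositional rules along the lines of Lems.~\ref{weakening} and~\ref{lem:m-invertibility}; since the modal rules of Figs.~\ref{fig:struct-monotone-LNS} and~\ref{fig:lns-ext-mon} neither introduce nor destroy propositional connectives, those inductive arguments carry over verbatim. Using these I would show, as in Lem.~\ref{lem:end-activeness-LNSEA}, that every derivation in $\LNS_{\mathsf{M}\mathcal{A}}\Con\W$ can be transformed into an end-active one, in which propositional and contraction rules act only on the rightmost component and modal/structural rules touch only the two rightmost components. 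An end-active derivation would then be translated into a $\G_{\mathsf{M}\mathcal{A}}\Con\W$ derivation by discarding all but the last component of every linear nested sequent and collapsing each maximal modal block --- i.e.\ a sequence of modal/structural rule applications starting and ending at a $\lns$ marker and passing through $\lnsm$ states --- into a single application of the corresponding modal sequent rule of Fig.~\ref{fig:sequent-rules-ext-mon}.

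The main obstacle is this final block-translation step: with the six possible modal/structural rules $\mathsf{C}, \mathsf{N}, \PP, \D, \mathsf{T}, \4$ available inside a modal block, many internal shapes are possible and one must check that every admissible shape is matched by a concrete rule of $\G_{\mathsf{M}\mathcal{A}}\Con\W$. This is precisely what the composite rules $\C\D, \C\4, \C\D\4, \D\4$ and $\K\4$ in the definitions of $\G_{\mathsf{M}\mathcal{A}}$ and $\G_{\mathsf{M}\C\mathcal{A}}$ are designed to absorb, and the case analysis closely parallels the cut-elimination reasoning already outlined for Prop.~\ref{prop:cut-elim-mon} and relegated to Appendix~\ref{app:cut-elim-mon}. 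Once every admissible block shape has been matched to a sequent rule, soundness of $\LNS_{\mathsf{M}\mathcal{A}}\Con\W$ reduces to soundness of $\G_{\mathsf{M}\mathcal{A}}\Con\W$ from Prop.~\ref{prop:cut-elim-mon}, which concludes the argument.
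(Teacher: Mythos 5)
Your proposal is correct and follows essentially the same route as the paper: completeness by simulating each modal sequent rule as a block of $\LNS$ rules in the last component (using $\C$ to toggle $\lns$ back to $\lnsm$ when chaining several $\Box_L^\m$/$\4$ steps), and soundness by adapting the weakening, m-invertibility and end-activeness lemmas of Section~\ref{sec:non-normal-modal}, then collapsing each modal block into the matching composite sequent rule. The individual simulations you describe for $\M$, $\mathsf{N}$, $\PP$, $\D$, $\T$, $\4$ and the composite rules coincide with the paper's.
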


\begin{proof}
  We first show completeness by simulating sequent derivations in the
  last component. Here we only show how to simulate the modal rules.
  First, the rules $(\mathsf{M}n)$ for monotone logics including the
  axiom $\C$ are simulated by
  \[
  \vcenter{\infer[(\mathsf{M}n)]{\Gamma, \Box A_1,\dots,\Box A_n \seq
      \Box B, \Delta}{A_1, \dots, A_n
      \seq B}}
  \quad\leadsto\quad
  \vcenter{
    \infer[\Box_R^\m]{\Gamma, \Box A_1, \dots, \Box A_n \seq \Box B, \Delta
    }
    {\infer[\Box_L^\m]{\Gamma, \Box A_1,\dots, \Box A_n \seq \Delta \lnsm \;\seq B
      }
      {\infer[\C]{\Gamma, \Box A_2,\dots, \Box A_n \seq \Delta
          \lns \;A_1 \seq B
        }
        {\infer*[]{\Gamma, \Box A_2,\dots, \Box A_n \seq \Delta \lnsm
            A_1\seq B
          }
          {\infer[\Box_L^\m]{\Gamma, \Box A_n \seq \Delta \lnsm
            A_1,\dots, A_{n-1}\seq B
            }
            {\Gamma \seq \Delta \lns A_1,\dots, A_n \seq B
            }
          }
        }
      }
    }
  }
  \]
  The case for $n = 1$ gives the simulation of the rule $\M$. The
  necessitation rule $N$ is simulated by:
  \[
  \vcenter{
    \infer[\mathsf{N}]{\Gamma \seq \Box B, \Delta}{\;\seq B}
  }
  \quad\leadsto\quad
  \vcenter{
  \infer[\Box_R^\m]{\Gamma \seq \Box B, \Delta
  }
  {\infer[\mathsf{N}]{\Gamma \seq \Delta \lnsm \;\seq B
    }
    {\Gamma \seq \Delta \lns \;\seq B
    }
  }
  }
  \]
  The simulations of the rules from Fig.~\ref{fig:lns-ext-mon} then
  are (omitting the general context $\mathcal{G}$):
  \[
  \vcenter{\infer[\PP]{\Box A \seq \;}{A \seq \;}}
  \quad\leadsto\quad
  \vcenter{\infer[\PP]{\Box A \seq \;
  }
  {\infer[\Box_L^\m]{\Box A \seq \; \lnsm \;\seq \;
    }
    {\;\seq\; \lns A \seq \;
    }
  }}
  \qquad\qquad
  \vcenter{\infer[\D]{\Box A, \Box B \seq \;}{A, B \seq \;}}
  \quad\leadsto\quad
  \vcenter{\infer[\D]{\Box A, \Box B \seq \;
  }
  {\infer[\Box_L^\m]{\Box A \seq \; \lnsm\, B \seq \;
    }
    {\;\seq\; \lns A, B \seq \;
    }
  }}
  \]
  \[
  \vcenter{\infer[\mathsf{T}]{\Gamma, \Box A \seq \Delta}{\Gamma, A
      \seq \Delta}}
  \quad\leadsto\quad
  \vcenter{
    \infer[\mathsf{T}]{\Gamma, \Box A \seq \Delta
    }
    {\infer[\Box_L^\m]{\Box A \seq \; \lnsm\; \Gamma \seq \Delta
      }
      {\;\seq\; \lns \Gamma, A \seq \Delta
      }
    }
  }
  \qquad\qquad
  \vcenter{
  \infer[\4]{\Box A \seq \Box B}{\Box A \seq B}
  }
  \quad\leadsto\quad
  \vcenter{\infer[\Box_R^\m]{\Box A \seq \Box B
    }
    {\infer[\4]{\Box A \seq \; \lnsm \;\seq B
      }
      {\;\seq\; \lns \Box A \seq B
      }
    }}
  \]
  \[
  \vcenter{
    \infer[\D\4]{\Box A, \Box B \seq \;}{\Box A, B \seq \;}
  }
  \quad  \leadsto \quad
  \vcenter{
  \infer[\D]{\Gamma, \Box A, \Box B \seq \Delta
  }
  {\infer[\4]{\Gamma, \Box B \seq \Delta \lnsm A \seq \;
    }
    {\Gamma \seq \Delta \lns  A, \Box B \seq \;
    }
  }
  }
  \]
  In the presence of $\mathsf{C}$ we use the rule $\mathsf{C}$ to move
  additional formulae on the left hand side. E.g., for the system
  containing the axioms $\mathsf{C},\mathsf{D}$ and $\4$ we would
  have:
  \[
  \vcenter{
  \infer[\mathsf{CD4}]{\Box B, \Box A_1, \Box A_2 \seq\;
    \;}{\Box B, A_1, A_2 \seq \;}
  }
  \qquad\leadsto\qquad
  \vcenter{
  \infer[\D]{\Box B, \Box A_1, \Box A_2 \seq \;
  }
  {\infer[\4]{\Box B,  \Box A_2 \seq \;\lnsm\, A_1\seq\;
    }
    {\infer[\mathsf{C}]{   \Box A_2 \seq \;\lns \Box B,A_1\seq\;
      }
      {\infer[\Box_L^\m]{  \Box A_2 \seq \;\lnsm\, \Box B,A_1\seq\;
        }
        {\; \seq \;\lnsm\,  \Box B, A_1,A_2\seq\;
        }
      }
    }
  }
  }
  \]
  If the logic contains $\PP$ but not $\D$, the application of the
  rule $\D$ above is replaced by an application of $\PP$ followed by
  applications of $\Box_L^\m$ and $\C$.  The cases of the remaining
  rules $\C, \C\D, \C\4, \K\4$ are analogous.

  To show soundness we first observe that applications of the
  propositional rules in the last component can be permuted above
  blocks of applications between the rules $\mathsf{C}$ and
  $\Box_L^\m$, i.e., above blocks of rule applications where the last
  nesting is $\lnsm$ . This is due to the fact that only modal rules
  can be applied inside the nesting $\lnsm$ , no modal rule creates a
  new nesting after a $\lnsm$ nesting, and every modal rule keeps all
  the formulae occurring under the nesting $\lnsm$ in the conclusion
  at the same place. Hence we may assume that in a derivation in
  $\LNS_{\mathsf{M}\mathcal{A}}$ all the modal rules occur in a
  block. Then, analogously to Lemma~\ref{lem:end-activeness-LNSEA} of
  the previous section, and using the formulations of
  Lemma~\ref{weakening} and~\ref{lem:m-invertibility} for the
  present calculi (the proofs of which are completely analogous), we
  convert the derivation into a derivation where all the applications
  of rules are end-active. Such a derivation then is converted into a
  sequent derivation. In particular, every modal block then can be
  translated into one or more modal rules in the sequent system:
  whenever we have a block
  \[
  \infer{\mathcal{G} \lns \Gamma, \Sigma' \seq \Delta,
    \Pi'}{\mathcal{G} \lns \Gamma \seq \Delta \lns \Sigma \seq \Pi}
  \]
  consisting only of modal rules, then the sequent rule
  \[
  \infer[]{\Gamma, \Sigma' \seq \Delta, \Pi'}{\Sigma \seq \Pi}
  \]
  is derivable in the corresponding sequent system. The
  transformations are essentially the backwards directions of the
  transformations given above.
\end{proof}

Thus we obtain modular nested sequent calculi for all the logics in
what could be called the \emph{modal tesseract}
(Fig.~\ref{fig:modal-tesseract}), hence repairing the bridge between
non-normal and normal modal logics. Note that the modal tesseract
includes one side of the standard (normal) modal cube, see
e.g.~\cite{Brunnler:2009kx}.
\begin{figure}[t]
  \hrule\bigskip
  \begin{center}
    \begin{tikzpicture}
      \node (M) at (0,0) {$\M$};
      \node (MD1) at (0,1.5) {$\PP$};
      \node (MD2) at (0,2.5) {$\D$};
      \node (MT) at (0,4) {$\T$};
      \node (M4) [color=magenta] at (1,1) {$\4$};
      \node (MD14) [color=magenta] at (1,2.5) {$\PP\4$};
      \node (MD24) [color=magenta] at (1,3.5) {$\D\4$};
      \node (MT4) [color=magenta] at (1,5) {$\T\4$};
      \draw [color=magenta] (M) -- (M4);
      \draw [color=magenta] (MD1) -- (MD14);
      \draw [color=magenta] (MD2) -- (MD24);
      \draw [color=magenta] (MT) -- (MT4);
      \draw (M) -- (MD1) -- (MD2) -- (MT);
      \draw (M4) -- (MD14) -- (MD24) -- (MT4);
      \node (C) at (-2,1) {$\C$};
      \node (CD) at (-2,3) {$\C\D$};
      \node (CT) at (-2,5) {$\C\T$};
      \node (C4) [color=magenta] at (-1,2) {$\C\4$};
      \node (CD4) [color=magenta] at (-1,4) {$\C\D\4$};
      \node (CT4) [color=magenta] at (-1,6) {$\C\T\4$};
      \draw [color=magenta] (C) -- (C4);
      \draw [color=magenta] (CD) -- (CD4);
      \draw [color=magenta] (CT) -- (CT4);
      \draw (C) -- (CD) -- (CT);
      \draw (C4) -- (CD4) -- (CT4);
      \node (N) at (4,1) {$\mathsf{N}$};
      \node (ND1) at (4,2.5) {$\mathsf{N}\PP$};
      \node (ND2) at (4,3.5) {$\mathsf{N}\D$};
      \node (NT) at (4,5) {$\mathsf{N}\T$};
      \node (N4) [color=magenta] at (5,2) {$\mathsf{N}\4$};
      \node (ND14) [color=magenta] at (5,3.5) {$\mathsf{N}\PP\4$};
      \node (ND24) [color=magenta] at (5,4.5) {$\mathsf{N}\D\4$};
      \node (NT4) [color=magenta] at (5,6) {$\mathsf{N}\T\4$};
      \draw [color=magenta] (N) -- (N4);
      \draw [color=magenta] (ND1) -- (ND14);
      \draw [color=magenta] (ND2) -- (ND24);
      \draw [color=magenta] (NT) -- (NT4);
      \draw (N) -- (ND1) -- (ND2) -- (NT);
      \draw (N4) -- (ND14) -- (ND24) -- (NT4);
      \node (K) at (2,2) {$\K$};
      \node (KD) at (2,4) {$\K\D$};
      \node (KT) at (2,6) {$\K\T$};
      \node (K4) [color=magenta] at (3,3) {$\K\4$};
      \node (KD4) [color=magenta] at (3,5) {$\K\D\4$};
      \node (KT4) [color=magenta] at (3,7) {$\K\T\4$};
      \draw [color=magenta] (K) -- (K4);
      \draw [color=magenta] (KD) -- (KD4);
      \draw [color=magenta] (KT) -- (KT4);
      \draw (K) -- (KD) -- (KT);
      \draw (K4) -- (KD4) -- (KT4);
      \draw [color=blue,dashed] (M) to [bend right=15] (C);
      \draw [color = blue,dashed] (C) to [bend right=10] (K);
      \draw [color=blue,dashed] (K) to [bend left=15] (N);
      \draw [color=blue,dashed] (N) to [bend left=10] (M);
      \draw [color=blue,dashed] (MT) to [bend right=15] (CT);
      \draw [color=blue,dashed] (CT) to [bend right=10] (KT);
      \draw [color=blue,dashed] (KT) to [bend left=15] (NT);
      \draw [color=blue,dashed] (NT) to [bend left=10] (MT);
      \draw [color=blue,dashed] (M4) to [bend right=15] (C4);
      \draw [color=blue,dashed] (C4) to [bend right=10] (K4);
      \draw [color=blue,dashed] (K4) to [bend left=15] (N4);
      \draw [color=blue,dashed] (N4) to [bend left=10] (M4);
      \draw [color=blue,dashed] (MT4) to [bend right=15] (CT4);
      \draw [color=blue,dashed] (CT4) to [bend right=10] (KT4);
      \draw [color=blue,dashed] (KT4) to [bend left=15] (NT4);
      \draw [color=blue,dashed] (NT4) to [bend left=10] (MT4);
    \end{tikzpicture}
  \end{center}
  \hrule
  \caption{The modal tesseract}
  \label{fig:modal-tesseract}
\end{figure}
  
For logics including the axiom $\5$ the situation is a bit more
complicated, since not all of these (in particular $\K\5$ and $\Sfi$)
have cut-free sequent calculi. However, while in this case we do not
obtain full modularity, we still obtain calculi for a number of
logics. The number of logics we need to consider in this case is
greatly reduced by the following simple observation.

\begin{lemma}
  The axiom $\mathsf{N}$ is derivable in any extension of $\mathsf{M}$
  including an axiom of the form $\Box A_1 \lor \dots \lor \Box A_n$. In
  particular, the axiom $\mathsf{N}$ is derivable in $\M\5$.
\end{lemma}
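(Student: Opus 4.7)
The plan is to use monotonicity together with the fact that $A_i \iimp \top$ is a propositional tautology for each $i$. From the monotonicity rule (available in any extension of $\M$, since it is just the rule form of axiom $\M$), we obtain $\Box A_i \iimp \Box \top$ for every $i$. Combining these with propositional reasoning gives
\[
\Box A_1 \lor \dots \lor \Box A_n \iimp \Box \top\;.
\]
Since by assumption the formula $\Box A_1 \lor \dots \lor \Box A_n$ is an axiom of the logic, modus ponens then yields $\Box \top$, which is exactly axiom $\mathsf{N}$.

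For the particular case of $\M\5$, the axiom $\5$ has the form $\Box A \lor \Box \neg \Box A$, which fits the assumed shape with $n = 2$, $A_1 = A$ and $A_2 = \neg \Box A$. So the same argument applies directly: monotonicity gives $\Box A \iimp \Box \top$ and $\Box \neg \Box A \iimp \Box \top$, hence $\Box A \lor \Box \neg \Box A \iimp \Box \top$, and axiom $\5$ together with modus ponens gives $\Box \top$.

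The argument is essentially a single line of Hilbert-style reasoning; there is no real obstacle, only a minor subtlety about which formulation of monotonicity one invokes. Since we work in an extension of $\M$, the rule ``from $A \iimp B$ infer $\Box A \iimp \Box B$'' is available (it follows from axiom $\M$ as stated, namely $\Box(A\land B)\iimp (\Box A\land \Box B)$, together with classical propositional reasoning, or directly from the congruence rule applied to the tautology $A \leftrightarrow A \land \top$). This is the only ingredient beyond classical propositional logic that the proof actually needs, so the argument goes through uniformly for every logic in the scope of the lemma.
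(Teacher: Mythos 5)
Your proof is correct and follows exactly the paper's argument, which likewise derives $\Box A_i \iimp \Box\top$ from the tautology $A_i \iimp \top$ and monotonicity, and then discharges the disjunctive axiom to obtain $\Box\top$. The extra care you take in justifying the monotonicity rule from axiom $\M$ and congruence is a fine (and harmless) elaboration of what the paper leaves implicit.
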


\begin{proof}
  Using the fact that $A_i \iimp \top$ is a tautology for every $A_i$
  and monotonicity of the box operator.
\end{proof}

Hence the lattice of extensions of $\M\5$ with axioms from
$\{\mathsf{N}, \C, \PP,\D, \T, \4 \}$ collapses to the 12 logics shown
in Fig.~\ref{fig:ext-m5} (the \emph{house of $\M\5$} -- not all
corners of it are safe, i.e., cut-free, though). In particular, the
extensions of $\M\C\5$ are the same as the extensions of normal modal
logic $\K\5$. Again, most of the following results are found in the
literature.
\begin{figure}[t]
  \hrule
  \begin{center}\smallskip
  \begin{tikzpicture}
    \node (N5) at (0,0) {$\M\5$};
    \node (aux) at (.85,0) {$=\M\mathsf{N}\5$};
    \node (ND15) at (0,1.25) {$\M\PP\5$};
    \node (ND25) at (0,2.25) {$\M\D\5$};
    \node (NT5) at (.75,4) {$\M\T\5$};
    \node (N45) at (1.5,.5) {$\M\4\5$};
    \node (ND145) at (1.5,1.75) {$\M\PP\4\5$};
    \node (ND245) at (1.5,2.75) {$\M\D\4\5$};

    \draw (N5) -- (ND15) -- (ND25) -- (NT5) -- (ND245) -- (ND145) --
    (N45) -- (N5);
    \draw (ND15) -- (ND145);
    \draw (ND25) -- (ND245);

    \node (K5) at (-3,.75)  {$\K\5$};
    \node (aux2) at (-3.85,.75) {$\mathsf{MC5} = $};
    \node (KD5) at (-3,3) {$\K\D\5$};
    \node (KT5) at (-2.25,4.75) {$\K\T\5$};
    \node (K45) at (-1.5,1.25) {$\K\4\5$};
    \node (KD45) at (-1.5,3.5) {$\K\D\4\5$};
    \draw (K5) -- (KD5) -- (KT5) -- (KD45) -- (K45) -- (K5);
    \draw (KD5) -- (KD45);

    \draw (NT5) -- (KT5);
    \draw (ND25) -- (KD5);
    \draw (ND245) -- (KD45);
    \draw (ND15) to [bend right=10] (KD5);
    \draw (ND145) to [bend right=10] (KD45);
    \draw (N5) -- (K5);
    \draw (N45) -- (K45);
  \end{tikzpicture}
  \end{center}
  \hrule
  \caption{The extensions of modal logic $\M\5$}
  \label{fig:ext-m5}
\end{figure}

\begin{proposition}\label{prop:cut-elim-five}
  Let $\mathcal{L}$ be one of the logics
  \[
  \{ \M\5, \M\PP\5, \M\45,\M\PP\4\5, \M\D\4\5, \K\4\5, \K\D\4\5 \}
  \]
  Then $\G_\mathcal{L}\Con\W$ is sound and complete for $\mathcal{L}$,
  i.e., for every formula $A$ we have $A \in \mathcal{L}$ if and only
  if $\entails_{\LNS_{\mathcal{L}}} \;\seq A$.
\end{proposition}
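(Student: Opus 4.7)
My plan is to follow exactly the same two-step strategy as in the proof of Prop.~\ref{prop:cut-elim-mon}: first establish soundness semantically, then obtain completeness by deriving the Hilbert-style axioms in the sequent system with multicut, and finally show that the multicut rule can be eliminated. For soundness, for the normal logics $\K\4\5$ and $\K\D\4\5$ I would use the standard Kripke semantics over frames whose accessibility relation is transitive and euclidean (plus serial in the $\D$ case) and check that each rule in Fig.~\ref{fig:sequent-rules-ext-mon} preserves validity; for the non-normal logics in the list I would instead use neighbourhood semantics of the monotone kind, verifying that each structural extension (closure under supersets, nonemptiness for $\PP$, the $\4$-condition on nested neighbourhoods, and the $\5$-condition) is respected by the corresponding modal rule.

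For completeness, the derivation of the Hilbert axioms and modus ponens (via cut) is routine and essentially identical to the argument sketched for Prop.~\ref{prop:cut-elim-mon}; the only interesting step is that the interaction axioms between $\4$ and $\5$ force us to include the compound rules $\D\4$, $\D\5$, $\K\4\5$ and $\K\D\4\5$, which is precisely why the list of logics in the statement is shorter than the full collection of extensions of $\M\5$. Since every axiom in the Hilbert system corresponds directly to one of the compound rules (or combinations thereof), this step is straightforward.

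The main obstacle, as usual for $\5$-logics, is cut elimination. The standard cut-rank induction by itself fails: when the cut formula $\ibox{} A$ is introduced on the right by $\5$ (or $\K\4\5$, or $\K\D\4\5$), the principal formula is carried along in the context $\ibox{}\Delta$ on the right of the premiss, so the usual reduction does not lower the depth/rank pair. I would address this in the same style as the cut elimination proof in the appendix alluded to in Prop.~\ref{prop:cut-elim-mon} and as in~\cite{Indrzejczak:2005}: measure the cut by the complexity of the cut formula together with the sum of derivation depths of the two premisses, and when reducing a principal cut on $\ibox{} A$ between a $\5$-style right rule and any modal left rule, perform two multicuts (the upper one on the original $\ibox{} A$ still present in the $\ibox{}\Delta$ context, eliminated by the inner induction on depth, and the lower one on the smaller formula $A$, eliminated by the outer induction). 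The compound rules $\K\4\5$, $\K\D\4\5$, $\D\4$, $\D\5$ are exactly what is needed to keep the lower inference permissible: the boxed context on both sides can be reintroduced with the same rule, because the set of logics in the statement is closed under the required interaction.

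Finally, once cut elimination holds, completeness of the cut-free system $\G_\mathcal{L}\Con\W$ follows at once from the completeness of $\G_\mathcal{L}\Con\W + \mcut$ established in the first step. The remaining non-principal and commutative cases of cut reduction are handled exactly as in the proof sketched for Prop.~\ref{prop:cut-elim-mon}, and the detailed case analysis will be relegated to the appendix (together with the proof of Prop.~\ref{prop:cut-elim-mon}).
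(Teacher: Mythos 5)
Your proposal is correct and follows essentially the same route as the paper: the detailed argument in Appendix~\ref{app:cut-elim-mon} is precisely the multicut elimination by double induction on (complexity, sum of depths) with cross-cuts to handle the $\5$-style rules whose boxed principal formulae persist in the premiss context. The only difference is presentational: the paper's proof of the proposition itself delegates $\K\4\5$, $\K\D\4\5$ to~\cite{Shvarts:1989} (modulo deriving their extra rule via a cut on $\Box\Box A \seq \Box A$) and $\M\5$, $\M\4\5$, $\M\D\4\5$ to~\cite{Indrzejczak:2005}, reserving the new work for $\M\PP\5$ and $\M\PP\4\5$, whereas you propose the uniform from-scratch treatment that the appendix in fact carries out.
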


\begin{proof}
  For the logics $\K\4\5$ and $\K\D\4\5$ this was shown
  in~\cite{Shvarts:1989}, but note that the calculi for $\K\4\5$ and
  $\K\D\4\5$ considered here are slight variations of the ones
  in~\emph{op.~cit.} In particular, there also the rule
  \[
  \infer[]{\Box \Gamma, \Box \Sigma \seq \Box \Delta}{\Box \Gamma,
    \Sigma \seq \Box \Delta}
  \]
  with nonempty $\Delta$ is included in the rule set for
  $\K\4\5$. This rule can be derived in the calculus considered here
  using the rule $\K\4\5$ together with a cut on the derivable sequent
  $\Box \Box A \seq \Box A$. Equivalence of the cut-free systems then
  follows from cut elimination for $\G_{\K\4\5}$. The latter follows
  from the general criteria of~\cite[Thm.~2.3.16]{Lellmann:2013}, but
  is also given explicitly in Appendix~\ref{app:cut-elim-mon}.
  
  In the non-normal case, for the logics $\M\5, \M\4\5$ and $\M\D\4\5$
  the result can be found in~\cite{Indrzejczak:2005}. The result for
  the remaining two logics $\M\PP\5$ and $\M\PP\4\5$ follows similarly
  to the results of Prop.~\ref{prop:cut-elim-mon} from the cut
  elimination proof in Appendix~\ref{app:cut-elim-mon}.
\end{proof}

For all the other cases there are counterexamples to cut
elimination. In particular, for the non-normal logic $\M\D\5$ this is
given e.g.\ by the formula $\Box p \iimp \Box \Diam p$, which is
derivable using the instance $\Box p \iimp \Diam p$ of axiom $\D$ and
the instance $\Diam p \iimp \Box \Diam p$ of axiom $\5$, but not
cut-free derivable in $\G_{\M\D\5}\Con\W$. Interestingly, this formula is
not a theorem of $\M\PP\5$, and hence not a counterexample to cut
elimination for $\G_{\M\PP\5}\Con\W$.\footnote{For the reader familiar with
  neighbourhood semantics~\cite{Chellas:1980fk,Hansen:2003}: The
  $\M\PP\5$-model $(\{a,b\},\eta,\sigma)$ with
  $\eta(a) = \eta(b) = \{ \{a\}, \{b\}, \{a,b\} \}$ and
  $\llbracket p \rrbracket = \{a \}$ witnesses satisfiability of the
  negation of this formula.}

\begin{theorem}
  Let $\mathcal{L}$ be one of the logics
  \[
  \{ \M\5, \M\PP\5, \M\45,\M\PP\4\5, \M\D\4\5, \K\4\5, \K\D\4\5 \}
  \]
  Then $\LNS_\mathcal{L}\Con\W$ is sound and complete for
  $\mathcal{L}$, i.e., for every formula $A$ we have
  $A \in \mathcal{L}$ if and only if
  $\entails_{\LNS_{\mathcal{L}}\Con\W} \;\seq A$.
\end{theorem}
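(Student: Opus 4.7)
The plan is to follow the two-step strategy of the proof of Theorem~\ref{thm:non-normal-lns-compl}, using Proposition~\ref{prop:cut-elim-five} as a bridge: for each of the seven logics $\mathcal{L}$ in the list, the sequent system $\G_\mathcal{L}\Con\W$ is already sound and complete, so it suffices to establish (i) a simulation of $\G_\mathcal{L}\Con\W$-derivations by $\LNS_\mathcal{L}\Con\W$-derivations acting on the two rightmost components (for completeness) and (ii) a translation of $\LNS_\mathcal{L}\Con\W$-derivations back into sequent derivations via an end-active normal form (for soundness).

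For completeness, the sequent rules inherited from the $\5$-free fragment are simulated as in Theorem~\ref{thm:non-normal-lns-compl}, so the genuinely new rules to handle are $\5$, $\D\5$, $\K\4\5$ and $\K\D\4\5$. I would simulate $\5$ (with conclusion $\;\seq\Box A,\Box B$) by first opening a $\lnsm$-nesting via $\Box_R^\m$ on $\Box A$ and then applying the LNS $\5$-rule on $\Box B$; the rule $\D\5$ is simulated analogously but with the LNS $\D$-rule in place of $\Box_R^\m$ to open the nesting on some boxed formula on the left.

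The hard case will be the simulation of $\K\4\5$ and $\K\D\4\5$: their conclusions may carry arbitrarily many boxed principal formulae on each side, while each of the LNS rules $\4$, $\5$, $\Box_L^\m$ moves only one formula at a time and switches the nesting from $\lnsm$ to $\lns$. The key observation is that for $\K\4\5$ and $\K\D\4\5$ both $\C$ and $\mathsf{N}$ belong to the LNS rule set, and the $\C$-rule, read bottom-up, reconverts $\lns$ back into $\lnsm$. Thus the simulation of $\K\4\5$ starting from $\mathcal{G}\lns \Box\Gamma,\Box\Sigma \seq \Box A,\Box\Delta$ applies $\Box_R^\m$ on $\Box A$ and then alternates $\5$ with $\C$ to move each $\Box D\in\Box\Delta$ to the right of the inner component, alternates $\C$ with $\Box_L^\m$ for each $\Box S\in\Box\Sigma$, and alternates $\C$ with $\4$ for each $\Box G\in\Box\Gamma$, terminating with one of $\4,\5,\Box_L^\m$ so that the leaf of the block has nesting $\lns$ and last component precisely the premise $\Box\Gamma,\Sigma \seq A,\Box\Delta$. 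The simulation of $\K\D\4\5$ is identical, save that the initial $\lnsm$-nesting is opened by the LNS $\D$-rule on some $\Box S\in\Box\Sigma$ (necessarily non-empty for the rule to be non-trivial) instead of by $\Box_R^\m$.

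For soundness I would reproduce the structure of the soundness argument of Theorem~\ref{thm:non-normal-lns-compl}: the analogues of Lemma~\ref{weakening} and Lemma~\ref{lem:m-invertibility} transfer to the present calculi without conceptual change, which in turn enables the end-activeness argument of Lemma~\ref{lem:end-activeness-LNSEA}, yielding an end-active derivation for any $\LNS_\mathcal{L}\Con\W$-derivation by permuting propositional rules and contractions upwards using multi-invertibility and re-introducing them on the rightmost components. In such an end-active derivation modal rules occur in blocks manipulating only the two rightmost components, and each block is translated into a single modal rule of $\G_\mathcal{L}\Con\W$ by inverting the simulations above; the only subtlety is that an alternation of $\4,\5,\Box_L^\m$ with $\C,\mathsf{N}$ inside a block must be collapsed into a single sequent rule from Fig.~\ref{fig:sequent-rules-ext-mon}, which follows by case inspection using the fact that applications of $\C$ and $\mathsf{N}$ only adjust the nesting symbol without changing the content of the components. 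Soundness of $\LNS_\mathcal{L}\Con\W$ then follows from the soundness of $\G_\mathcal{L}\Con\W$ provided by Proposition~\ref{prop:cut-elim-five}.
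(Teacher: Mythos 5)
Your proposal is correct and follows essentially the same route as the paper: completeness by simulating the new sequent rules $\5$, $\D\5$, $\K\4\5$, $\K\D\4\5$ in the last components (the paper displays the $\5$ and $\D\5$ simulations exactly as you describe and dispatches $\K\4\5$/$\K\D\4\5$ as "similar to $\C\D\4$", i.e., your alternation of $\Box_L^\m$/$\4$/$\5$ with $\C$), and soundness by transferring the weakening, multi-invertibility and end-activeness lemmas and translating modal blocks back into derivable sequent rules. Your sketch in fact spells out the $\K\4\5$/$\K\D\4\5$ simulations in more detail than the paper does, which is a welcome elaboration rather than a deviation.
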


\begin{proof}
  Analogous to the proof of Thm.~\ref{thm:non-normal-lns-compl}. The
  missing transformations from sequent rules into linear nested
  sequent derivations are:
  \[
  \vcenter{\infer[\5]{\;\seq \Box A, \Box B}{\;\seq A, \Box B}}
  \quad\leadsto\quad
  \vcenter{\infer[\Box_R^\m]{\; \seq\Box A, \Box B
    }
    {\infer[\5]{ \;\seq \Box B \lnsm \;\seq A
      }
      {\;\seq\; \lns \; \seq  A,\Box B
      }
    }}
  \qquad\qquad
  \vcenter{
    \infer[\D\5]{\Box A \seq \Box B}{A \seq \Box B}
  }
  \quad  \leadsto \quad
  \vcenter{
  \infer[\D]{\Gamma, \Box A \seq \Box B, \Delta
  }
  {\infer[\5]{\Gamma \seq \Box B, \Delta \lnsm A \seq \;
    }
    {\Gamma \seq \Delta \lns A \seq \Box B
    }
  }
  }
  \]
  The rules $\K\4\5$ and $\K\D\4\5$ are transformed similar to the
  case of $\C\D\4$.

  The soundness proof is analogous to the one for the cases not
  involving the axiom $\5$.
\end{proof}


\section{Reconciling sequents with nested sequents}\label{sec:focused}

We have presented a modular way of proposing several different modal
systems.  The beauty in this is that all systems share the same core,
where modal rules can be plugged in and/or mixed together.  For that,
we refined sequent rules, exposing their behaviour locally. The price
to pay for this modularity is, of course, efficiency, since there are
more rules which could have been applied to derive a given sequent. In
particular, the propositional rules could be applied in any component,
giving rise to a great number of derivations which should be
identified modulo bureaucracy. This alone could be taken care of by
simply restricting the calculi to their end-active variants, so that
the propositional rules are applied only in the last
component. However, doing so would still leave open the possibility of
mixing propositional and modal rules, e.g., applying (bottom-up) a
rule $\ibox{i}_R$ followed by a propositional rule in the last
component, and then a rule $\ibox{ji}_L$. This as well is a potential
source of inefficiency when compared to the sequent framework, where
we have blocks of propositional rules alternating with single modal
rules.

In this section, we will show how auxiliary nesting operators can be
used in order to guarantee a notion of \emph{normal form} for $\LNS$
derivations that mimic the respective sequent ones, hence reducing the
proof search space and optimizing proof search.  As noted in the last
sections, all the systems in this work can be restricted to their
end-active versions without losing completeness. However, as mentioned
above this is not enough to ensure that a $\LNS$ derivation
corresponds directly to a sequent derivation, since the propositional
rules could be applied between to applications of, e.g., the rule
$\Box_L$. Fortunately, rules for the propositional connectives permute
over box left rules, though. This allows modal rules to be restricted
so that they occur \emph{in a block}.

\begin{definition}\label{block}
  A $\LNS$ derivation is in \emph{block form} if, whenever a modal
  rule occurs directly above a propositional rule, then that modal
  rule creates a new component.
\end{definition}

In the following we use block form as the normal form of $\LNS$
derivations.  Considering first the simply dependent normal multimodal
logics of Section~\ref{sec:simply-depend-modal}, in
Fig.~\ref{fig:FLS-K} we present $\FLNS_{(N,\cless,F)}$, an end-active
version for $\LNS_{(N,\cless,F)}$
(Fig.~\ref{fig:lns-rules-simply-dep-mult}) where all derivations are
necessarily in block form: this is assured by an auxiliary nesting
operator $\lnsb[i]$ for each $i\in N$. This operator behaves much in
the same way as the ``unfinished rule marker'' in the systems for
non-normal modal logics. However, here we explicitly include the rule
$\cls$, which intuitively marks a sequent rule as finished.

This implies that, \emph{modulo the order of application of
  $\ibox{ij}_L$ and $\mathsf{d}_{ij}$ rules}, there is a 1-1
correspondence between derivations in the end-active variant of the
$\LNS$ system $\FLNS_{(N,\cless,F)}\Con\W$ and in the sequent system
$\G_{(N,\cless,F)}\Con\W$ (see Fig.~\ref{fig:simply-dep-multimodal}).
In this way, sequent rules can be seen as \emph{macro rules} for
linear nested rules.

Since every $\FLNS_{(N,\cless,F)}\Con\W$-derivation can be translated
into a $\LNS_{(N,\cless,F)}\Con\W$-derivation by replacing the nesting
$\lnsb[i]$ everywhere by $\lns[i]$ and omitting every application of
the rule $\cls$ we immediately obtain soundness of the system
$\FLNS_{(N,\cless,F)}\Con\W$. Completeness follows as mentioned above
from permuting propositional and structural rules below modal rules in
derivations in the end-active variant of $\LNS_{(N,\cless,F)}\Con\W$.

Observe that the \emph{normal} modal logics presented in this paper
form a particular case of simply dependent multimodal logics (with $N$
being a singleton). Hence all derivations in the end-active variant of
the correspondent $\FLNS_{(N,\cless,F)}\Con\W$ system will be in block
form.
\begin{figure}[t]
  \hrule
  \smallskip
  \[
  \infer[{\ibox{ij}}_L]{\mathcal{G}\lns[k] \Gamma, \ibox{i} A \seq \Delta
    \lnsb[j] \Sigma \seq \Pi}{ \mathcal{G}\lns[k]\Gamma \seq \Delta
    \lnsb[j] \Sigma,A \seq \Pi}
  \qquad
  \infer[{\ibox{i}}_R]{\mathcal{G}\lns[k]  \Gamma \seq \Delta, \ibox{i}
    A}{\mathcal{G}\lns[k]  \Gamma \seq \Delta\, \lnsb[i] \;\seq A}
     \qquad
  \infer[\cls]{\mathcal{G} \lnsb[k] \Gamma\seq \Delta}{\mathcal{G}
    \lns[k] \Gamma\seq \Delta }
  \]
  \[
  \infer[\mathsf{d}_{ij}]{\mathcal{G}\lns[k] \Gamma, \ibox{i} A \seq
    \Delta}{\mathcal{G} \lns[k] \Gamma \seq \Delta \lnsb[j] A \seq \;}
  \qquad
  \infer[\mathsf{t}_i]{\mathcal{G} \lns[k] \Gamma, \ibox{i} A \seq
    \Delta}{\mathcal{G} \lns[k]\Gamma, A \seq \Delta}
  \qquad
  \infer[{\4_{ij}}]{\mathcal{G} \lns[k] \Gamma, \ibox{i} A \seq \Delta
    \lnsb[j] \Sigma \seq \Pi}{\mathcal{G} \lns[k]\Gamma \seq \Delta
    \lnsb[j] \Sigma,\ibox{i}A \seq \Pi}
  \]
   \hrule
   \caption{Modal rules for $\FLNS_{(N,\cless,F)}$, where $k, i,j$ are
     as in Figure~\ref{fig:lns-rules-simply-dep-mult}.  The
     propositional rules are the same as in Fig.~\ref{fig:lns-G},
     restricted to the last component.}
  \label{fig:FLS-K}
\end{figure}

\begin{example}
The block form derivation for the normality axiom is  as follows
\[
\infer=[\iimp_R]{ \seq\Box(p \iimp q) \iimp (\Box p
    \iimp \Box q)
}
    {\infer[\Box_R]{ \Box(p \iimp q), \Box p
    \seq \Box q
      }
      {\infer=[\Box_L]{ \Box(p \iimp q), \Box p\seq \cdot
    \lnsb\;\cdot\seq q
        }
        {\infer[\cls]{\cdot \seq \cdot\lnsb\;p \iimp q, p\seq
    q}
          {\infer[\iimp_L]{\cdot\seq\cdot\lns p \iimp q, p \seq q
            }
              {\infer[\init]{\cdot\seq\cdot\lns p\seq p,q}{}
            &
              \infer[\init]{\cdot\seq\cdot\lns p,q\seq q}{}
            }
          }
        }
      }
      }
\]
\end{example}

All the systems presented for \emph{non-normal} modal logics in the
previous sections are end-active and in most systems the partial
nesting operator already forces that all valid derivations are in
block form. The exception are the systems containing the $\C$ rule
(Fig.~\ref{fig:struct-monotone-LNS}). In fact, this rule allows a
partial nesting to begin anywhere in the derivation, not only after an
application of a modal rule.

An alternative set of rules for these systems is obtained by adding
the nesting operator $\lnsb$ (so that the modal rules have two levels
of partial processing), together with the $\cls$ rule, that forces the
modal block to end. We illustrate this in Fig.~\ref{fig:FLNS} for the
system $\FLNS_{\M\C}$. Again, soundness and completeness follow as
above.
\begin{figure}[t]
  \hrule
  \[
  \infer[\Box_R^\m]{\mathcal{G} \lns \Gamma \seq \Box B,
    \Delta}{\mathcal{G} \lns \Gamma \seq \Delta \;\lnsm \seq
    B}
  \quad
  \infer[\Box_L^\m]{\mathcal{G} \lns \Gamma, \Box A \seq \Delta \;\lnsm
    \Sigma \seq \Pi}{\mathcal{G} \lns \Gamma \seq \Delta
  \;  \lnsr \Sigma, A \seq \Pi}
  \quad
  \infer[\mathsf{C}]{\mathcal{G} \;\lnsr \Gamma\seq \Delta}{\mathcal{G}
   \; \lnsm \Gamma\seq \Delta }
  \quad
  \infer[\mathsf{close}]{\mathcal{G}\; \lnsr \Gamma\seq \Delta}{\mathcal{G}
    \lns \Gamma\seq \Delta }
  \]
  \hrule
  \caption{System $\FLNS_{\M\C}$.}
  \label{fig:FLNS}
\end{figure}

\subsection{Block forms versus focused derivations}\label{sec:foc}
In~\cite{Andreoli:1992}, a notion of normal form for cut-free
derivations in linear logic was introduced. This normal form is given
by a focused proof system organised around two ``phases'' of proof
construction: the \emph{negative phase} for invertible inference rules
and the \emph{positive phase} for non-necessarily-invertible inference
rules.  Due to invertibility, when searching for a derivation it is
always safe to apply, reading bottom-up, a rule for a negative
formula, so these may be applied at any time. On the other hand, rules
for positive formulae may require a choice or restriction on the
application of rules.  Hence, in the focusing discipline, negative
formulae are decomposed eagerly until only positive formulae are left,
then one of them is non-deterministically chosen to be focused
on. Thus focused derivations alternate negative and positive phases.

Focused nested systems for modal logics were first considered
in~\cite{DBLP:conf/fossacs/ChaudhuriMS16}, where a focused variant for
all modal logics of the classical $\Sfi$ cube were proposed. This
approach was extended to the intuitionistic case
in~\cite{DBLP:conf/rta/ChaudhuriMS16}.

Since block form derivations entail a notion of normal forms in $\LNS$
by alternating modal and propositional blocks, it is natural to ask if
there is any relationship between focusing in nested systems and modal
blocks in linear nested systems.
 
In (plain) nested systems, the rule $\Box_R$ is invertible (negative),
since it basically implements the semantical description for the box
operator, described by a \emph{forall} connective (which has a
negative behaviour). On the other hand, $\Box_L$ is a non-invertible
rule (positive), since its application should be preceded by an
instance of a $\Box_R$ rule.
 
However, on passing to linear nesting systems, the dualities of
polarities for modal connectives is lost. In fact, $\Box_L$ and
$\Box_R$ rules are both non-invertible in $\LNS_{\K}$: while the left
rule can be applied only after a right rule, for the right rule a
boxed formula has to be chosen in order to be processed.  Hence there
seems to be no natural way of polarising the modal connectives
presented in this paper.

Let's take a closer look at the sequent rule $\mathsf{k}$
 \[
\infer[\mathsf{k}]{\Gamma',\Box \Gamma\seq\Box A,\Delta}{\Gamma\seq A} 
\]
and its interpretations in nested and linear nested systems.

In the \emph{nested} system proposed
in~\cite{DBLP:conf/fossacs/ChaudhuriMS16} all the existing right boxes
can be processed in parallel (and this is invertible) and then the
left boxes can be transferred one by one to all the nestings. A
derivation then proceeds by running all the possible traces in
parallel, and finish whenever one or more of them succeed. Although
considering the box left a positive connective leads to a complete
proof system, it has an inherited negative behaviour that is ignored
when adopting such polarisation. Also, in the sequent rule
$\mathsf{k}$, the box right should be chosen, which gives it a
positive behaviour, also not taken into account in the focused system
proposed in~\cite{DBLP:conf/fossacs/ChaudhuriMS16}.

In contrast, this positive/negative behaviour of box left and right
rules is present in $\LNS_{\K}$. In fact, while $\Box_R$ is not
invertible, proposing a focused version of this rule would render the
resulting system incomplete. And, as mentioned before, the $\Box_L$
rule has the restriction that it can be applied only after a $\Box_R$
rule is applied, hence it has a positive behaviour. But once the new
component is created by the $\Box_R$ rule, moving the left boxed
formulae can be done in any order and this action is invertible, hence
negative.

Thus, although modal blocks do not correspond to focusing, it produces
a normal form that mimics the sequential behaviour and preserves the
inherent positive/negative flavor of the box modality. Focusing, on
the other side, produces normal forms that do not correspond to
sequent derivations, hence the proof space is much bigger in (focused)
nested systems than in (block form) linear nested systems.


\section{Labelled line sequent systems and bipoles}\label{sec:labelled}

A logical framework is a meta-language used for the specification of
deductive systems.  Embedding systems into frameworks allows for
determining/analysing/proving meta-level properties of the object
level specified systems. And, since logical frameworks often come with
automated procedures, the meta-level machinery can often be used for
proving properties of the embedded systems automatically.

Restricting our attention to logical systems, since a specific logic
gives rise to specific sets of rules in different calculi, it is very
important to: choose a suitable, general logical framework, able to
specify a representative class of systems/calculi; and determine
whether there is a general and adequate methodology for embedding
deductive systems into the chosen logical framework, so that
object-level properties can be uniformly proven.

In this section, we will show that linear logic
($\LL$)~\cite{Girard:1987uq} is a general and adequate framework for
specifying a paramount subset of linear nested systems presented in
this paper. This is a fundamental results which opens the possibility
of exploring meta-level properties for such logical systems by
extending similar results obtained for sequent
systems~\cite{miller02tableaux,miller04lc,pimentel05lpar,DBLP:journals/tcs/MillerP13,nigam14jlc}.

One of the main advantages of the $\LNS$ calculi over the standard
sequent calculi is that the modal operators have separate left and
right rules, and that the number of principal formulae in the modal
rules is bounded. While the better control on moving formulae on
nested sequents facilitates the suggestion of a general method for
embedding $\LNS$ systems, the locality of the rules makes the quest of
proving \emph{adequacy} of the encodings harder. In fact, determining
adequate embedding maps on linear nested sequents that smoothly extend
existing ones on sequents is a non trivial task, as we will show next.

We start by reformulating the $\LNS$ structure in the language of
\emph{labelled sequents} (see,
e.g.,~\cite{Vigano:2000,Negri:2005p878,Negri:2011}), using an
extension of the correspondence between nested sequents and labelled
tree sequents in~\cite{DBLP:conf/aiml/GoreR12}. We then show how to
use such labelled systems in order to generate bipole clauses in
linear logic which adequately correspond to $\LNS$ modal rules.

\subsection{Labelled systems}
Let $\sv$ a countable infinite set of \emph{state variables} (denoted
by $x, y, z, \ldots$), disjoint from the set of propositional
variables. A \emph{labelled formula} has the form $x : A$ where
$x\in\sv $ and $A$ is a formula.  If $\Gamma = \{A_1,\ldots, A_n\}$ is
a multiset of formulae, then $x :\Gamma$ denotes the multiset
$\{x : A_1,\ldots , x : A_n\}$ of labelled formulae.  A (possibly
empty) set of \emph{relation terms} (\ie\ terms of the form $xRy$,
where $x,y\in\sv$) is called a \emph{relation set}. For a relation set
$\Rscr$, the \emph{frame} $Fr(\Rscr)$ defined by $\Rscr$ is given by
$(|\Rscr|,\Rscr)$ where
$|\Rscr| = \{x\; |\; xRy \in \Rscr \mbox{ or } yRx \in \Rscr \mbox{
  for some }y\in\sv\}$.  We say that a relation set $\Rscr$ is
\emph{treelike} if the frame defined by $\Rscr$ is a tree or $\Rscr$
is empty. A treelike relation set $\Rscr$ is called \emph{linelike} if
each node in $\Rscr$ has at most one child.

\begin{definition}\label{def:lls} 
  A \emph{labelled line sequent} $\LLS$ is a labelled sequent
  $\Rscr,X\seq Y$ where
\begin{enumerate}
\item $\Rscr$ is linelike;
\item if $\Rscr=\emptyset$ then $X$ has the form $x_0\lab
  A_1,\ldots,x_0\lab A_n$ and $Y$ has the form $x_0 \lab B_1,\ldots, x_0
  \lab B_m$ for some  $x_0\in\sv$;
\item if $\Rscr\not =\emptyset$ then every state variable $x$ that
  occurs in either $X$ or $Y$ also occurs in $\Rscr$.
\end{enumerate}
  A \emph{labelled line sequent calculus} is a labelled sequent
  calculus whose initial sequents and inference rules are constructed
  from $\LLS$.
\end{definition}

Observe that, in $\LLS$, if $xRy\in\Rscr$ then $uRy\notin\Rscr$
and $xRv\notin\Rscr$ for any $u,v\in\sv$ such that $u\neq x$ and $v\neq y$. 

Since linear nested sequents form a particular case of nested
sequents, the algorithm given in~\cite{DBLP:conf/aiml/GoreR12} can be
used for generating $\LLS$ from $\LNS$, and vice versa.  However, one
has to keep the linearity property invariant through inference
rules. For example, the following labelled sequent rule
\[
\infer[\Box_R']{\Rscr,X,\seq Y,x\lab \Box A}{\Rscr, xRy,
  X\seq Y, y\lab A}
\]
where $y$ is fresh, is not adequate w.r.t.\ the system $\LNS_{\K}$,
since there may exist $z\in|\Rscr|$ such that $xRz\in\Rscr$. That is,
for labelled sequents in general, freshness alone is not enough for
guaranteeing unicity of $x$ in $\Rscr$.  And it does not seem to be
trivial to assure this unicity by using logical rules without side
conditions.  To avoid this problem, we slightly modify the framework
by restricting $\Rscr$ to singletons, that is, $\Rscr=\{xRy\}$ will
record only the two last components, in this case labelled by $x$ and
$y$, and by adding a base case $\Rscr=\{x_0Rx_1\}$ for $x_0,x_1$
different state variables when there are no nested components. The
rule for introducing $\Box_R$ then is
\[
\infer[\Box_R]{zRx,X,\seq Y,x\lab \Box A}{xRy,
  X\seq Y, y\lab A}
\]
with $y$ fresh. Note that this solution corresponds to recording the
history of the proof search up to the last two steps similar to what
is outlined in~\cite{Pfenning:2015TalkLFMTP}, hence 
we are adopting an end-active  version of $\LLS$.  

\begin{definition}
  An \emph{end-active $\LLS$} is a singleton relation set
  $\mathcal{R}$ together with a sequent $X \seq Y$ of labelled
  formulae, written $\mathcal{R},X \seq Y$.  The rules of an
  \emph{end-active $\LLS$ calculus} are constructed from end-active
  labelled line sequents such that the active formulae in a premiss
  $xRy,X \seq Y$ are labelled with $y$ and the labels of all active
  formulae in the conclusion are in its relation set.
\end{definition}

From now on, we will use the end-active version of the propositional
rules (see Fig.~\ref{fig:lls-eG}).
\begin{figure}[t]
  \hrule
  \[
  \infer[\mathsf{init}]{zRx,X, x\lab p \seq x\lab p, Y}{}
  \quad
   \infer[\bottom_L]{zRx,X, x\lab \bottom \seq  Y}{}
  \quad
    \infer[\top_R]{zRx,X \seq x\lab \top, Y}{} 
    \]
    \[
      \infer[\neg_L]{zRx,X, x\lab \neg A, \seq  Y}{zRx,X \seq  Y, x\lab A}
  \quad
  \infer[\neg_R]{zRx,X \seq  Y,  x\lab\neg A}{zRx,X, x\lab A, \seq  Y} \quad
    \infer[\lor_L]{zRx,X,x\lab A\lor B \seq  Y}{zRx,X, x\lab A \seq  Y &
  zRx,X, x\lab B \seq  Y}  
  \]
  \[
   \infer[\lor_R]{zRx,X \seq  Y,  x\lab A\lor B}{zRx,X \seq Y, x\lab A, x\lab B}
\quad
  \infer[\land_L]{zRx,X, x\lab A\land B \seq  Y}{zRx,X, x\lab A, x\lab B \seq Y}
  \quad
  \infer[\land_R]{zRx,X \seq  Y,x\lab A\land B}{zRx,X \seq  Y,x\lab A &
  zRx,X \seq  Y,x\lab B}  
  \]
  \[
   \infer[\iimp_L]{zRx,X, x\lab A\iimp B \seq Y}
   {zRx,X \seq Y, x\lab A\quad zRx,X, x\lab  B \seq Y}
   \quad
    \infer[\iimp_R]{zRx,X \seq Y, x\lab A\iimp B}
   {zRx,X,x\lab A \seq Y, x\lab B}
    \]
      \hrule
\caption{The end-active version of $\LLS_{\G}$. In rule $\init$, $p$ is atomic.}\label{fig:lls-eG}
\end{figure}

We will now show how to automatically generate $\LLS$ from
$\LNS$. This is possible since the key property of end-active $\LNS$
calculi is that rules can only move formulae ``forward'', that is,
either an active formula produces other formulae in the same component
or in the next one.

\begin{definition}
  For a state variable $x$, define the mapping $\mathbb{TL}_{x}$ from
  $\LNS$ to end-active $\LLS$ as follows
\[
\begin{array}{lcl}
\mathbb{TL}_{x_1}(\Gamma_1 \seq \Delta_1)&=&x_0Rx_1,x_1\lab \Gamma_1 \seq x_1\lab \Delta_1\\
\mathbb{TL}_{x_n}(\Gamma_1 \seq \Delta_1\lns\ldots\lns \Gamma_n \seq \Delta_n) &=& x_{n-1}Rx_n, x_1\lab \Gamma_1,\ldots,x_n\lab \Gamma_n \seq x_1\lab \Delta_1,\ldots,
x_n\lab \Delta_n\quad n>0
\end{array}
\]
with all state variables pairwise distinct.
\end{definition}

We can use $\mathbb{TL}_{x}$ in order to construct a $\LLS$ inference
rule from an inference rule of an end-active $\LNS$ calculus.  The
procedure, that can also be automatised, is the same as the one
presented in~\cite{DBLP:conf/aiml/GoreR12}, as we shall illustrate
here.

\begin{example} 
  Consider the following application of the rule $\Box_R$ of
  Fig.~\ref{fig:lns-K}:
  \[
    \infer[\Box_R]{
  \Gamma_1 \seq \Delta_1 \lns \dots \lns \Gamma_{n-1} \seq \Delta_{n-1}
  \lns{\Gamma_{n}\seq \Delta_{n},\Box A}}
{
  \Gamma_1 \seq \Delta_1 \lns \dots \lns \Gamma_{n-1} \seq \Delta_{n-1}
      \lns \Gamma_{n}\seq\Delta_{n} \lns\;\seq A}
  \]
  Applying $\mathbb{TL}_x$ to the conclusion we obtain
  $x_{n-1}Rx_{n}, X \seq Y, x_{n}\lab \Box A$, where
  $X=x_1\lab \Gamma_1, \dots, x_{n}\lab \Gamma_{n} $ and
  $Y = x_1\lab \Delta_1, \dots, x_{n}\lab \Delta_{n}$.  Applying
  $\mathbb{TL}_x$ to the premise we obtain
  $x_nRx_{n+1}, X \seq Y, x_{n+1}\lab A$.  We thus obtain an
  application of the $\LLS$ rule
  \[
    \infer[\mathbb{TL}_x(\Box_R)]{x_{n-1}Rx_n,X\seq Y,x_n\lab \Box
      A}{x_nRx_{n+1},X\seq Y,x_{n+1}\lab A}
  \]
  Fig.~\ref{fig:LLS-K} presents the end-active labelled line sequent
  calculus $\LLS_\K$ for $\K$.
\end{example}

The following result follows readily by transforming derivations
bottom-up.

\begin{theorem}\label{trans}
  $\Gamma\seq\Delta$ is provable in a certain end-active $\LNS$
  calculus if and only if $\mathbb{TL}_{x_1}(\Gamma\seq\Delta)$ is
  provable in the corresponding end-active $\LLS$ calculus.
\end{theorem}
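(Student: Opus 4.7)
The plan is to prove both directions by induction on the height of derivations, using the fact that the translation procedure $\mathbb{TL}_x$ applied to end-active $\LNS$ rules yields exactly the corresponding $\LLS$ rules (as illustrated for $\Box_R$ in the example preceding the statement). The base cases for zero-premiss rules $\init, \bot_L, \top_R$ are immediate from inspecting the definitions in Fig.~\ref{fig:lls-eG}: a linear nested sequent whose rightmost component is an axiom translates to a labelled line sequent that matches the corresponding $\LLS$ axiom, and conversely.

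For the forward direction ($\LNS \Rightarrow \LLS$), I would proceed by induction on the depth of the given end-active $\LNS$ derivation $\mathcal{D}$ of $\Gamma \seq \Delta$. Given an inductive rule application with conclusion $\mathcal{G}$ and premisses $\mathcal{G}_1,\dots,\mathcal{G}_k$, end-activeness guarantees that the active components lie among the two rightmost components. If the $\LNS$ rule does not create a new nesting, all premisses share the same list length as $\mathcal{G}$, and one applies $\mathbb{TL}_{x_n}$ uniformly (so the same fresh variables label the same components) and invokes the corresponding $\LLS$ rule. If the rule creates a new nesting (e.g.\ $\Box_R$, $\Box_R^\m$, $\Box_R^\e$, $\mathsf{d}_{ij}$, $\PP$, $\D$), then the premiss has one more component, and we use $\mathbb{TL}_{x_{n+1}}$ with a fresh $x_{n+1}$; the corresponding $\LLS$ rule introduces exactly this fresh variable in place of the new nesting, matching the translation.

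For the backward direction ($\LLS \Rightarrow \LNS$), I would again induct on the depth. The key observation is that by construction every end-active $\LLS$ rule was obtained via $\mathbb{TL}_x$ from an end-active $\LNS$ rule, and by the end-active restriction every active formula in a premiss is labelled by the ``current'' variable $y$ in its singleton relation set $\{xRy\}$, while every active formula in the conclusion sits inside its singleton relation set. Hence any labelled line sequent that appears in the derivation arises, up to renaming of state variables and reconstruction of the chain, as $\mathbb{TL}_{x_n}(\mathcal{G})$ for a uniquely determined $\LNS$ $\mathcal{G}$: the line structure of the relation set together with the labels on formulae determine the component assignment. One then translates each $\LLS$ rule application back to the corresponding $\LNS$ rule, preserving depth.

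The main obstacle I expect is the backward direction, specifically bookkeeping the state variables so that the line structure of $\LNS$ is faithfully recovered. Since an end-active $\LLS$ carries only the singleton $\{xRy\}$, the full chain of labels of past components is only implicit in the history of the derivation; one must track, when descending through the derivation, how successive singleton relation sets compose into the line. A clean way to do this is to strengthen the induction hypothesis to a statement about labelled line sequents in the sense of Def.~\ref{def:lls} (with the full linelike $\Rscr$), and observe that freshness of the variable introduced by each component-creating $\LLS$ rule guarantees that the resulting relation set remains linelike. With that invariant in place, the translation back to $\LNS$ is deterministic and the induction step goes through for each rule by direct case analysis, exactly reversing the construction illustrated for $\Box_R$.
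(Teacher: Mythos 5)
Your proof is correct and takes essentially the same route as the paper, which simply states that the result ``follows readily by transforming derivations bottom-up''; your rule-by-rule induction, including the observation that the line structure must be recovered from the labels and the derivation history in the backward direction, is exactly the elaboration of that transformation.
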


Note that, in an end-active $\LLS$, state variables might occur in the
sequent and not in the relation set. Such formulae will remain
inactive towards the leaves of the derivation and absorbed by the
initial sequents in systems where weakening is admissible.
\begin{figure}[t]
  \hrule
  \smallskip
  \[
    \infer[\mathbb{TL}_x(\Box_L)]{xRy,X,x\lab \Box A \seq Y}{xRy,X, y\lab A \seq Y}
     \qquad
       \infer[\mathbb{TL}_x(\Box_R)]{zRx,X\seq Y,x\lab \Box A}{xRy,X\seq Y,y\lab A}
 \]
  \hrule
  \caption{The modal rules of $\LLS_\K$.
The variable $y$ in rule $\Box_R$ is fresh.}
  \label{fig:LLS-K}
\end{figure}

The concepts of $\LLS$ and $\mathbb{TL}_{x}$ can be extended in order
to handle the extensions $\LNS_\m$ and $\LNS_\e$ by adding the
relations $R_\m\subseteq\sv \times\sv$ and
$R_\e\subseteq\sv\times(\sv\times\sv)$, respectively, and defining

\vspace{.2cm}
\resizebox{.98\textwidth}{!}{
$
\begin{array}{lcl}
\mathbb{TL}_{x_n}^{\m}(\Gamma_1 \seq \Delta_1\lns\ldots \lnsm \Gamma_n\seq \Delta_n) &=& x_{n-1}R_{\m}x_n, x_1\lab \Gamma_1,\ldots,x_n\lab \Gamma_n \seq x_1\lab \Delta_1,\ldots,
x_n\lab \Delta_n\\ \\
\mathbb{TL}_{x_n}^{\e}(\Gamma_1 \seq \Delta_1\lns\ldots \lnse( \Sigma\seq \Pi; \Omega
    \seq \Theta)) &=& x_{n-1}R_{\e}(x_n,y_n), x_1\lab \Gamma_1,\ldots,x_n\lab \Sigma, y_n\lab \Omega \seq x_1\lab \Delta_1,\ldots,
x_n\lab \Pi,y_n\lab \Theta
\end{array}
$}

\vspace{.2cm}
\noindent The corresponding $\LLS$ rules for these systems are
depicted in Figs.~\ref{fig:non-normal-LLS},~\ref{fig:monotone-LLS}
and~\ref{fig:monotone-extensions-LLS}.  Observe that this is a
generalisation of the algorithm in~\cite{DBLP:conf/aiml/GoreR12}, with
the careful remark that, in the case of non-normal systems, the
algorithm generates premisses that are weakened w.r.t.  the ones
presented in Fig.~\ref{fig:non-normal-LLS}.  Thus, Theorem~\ref{trans}
is also valid for all the $\LLS_{\m}$ and $\LLS_{\e}$ systems
presented in this work. Finally, it is worth noticing that the
definition of the mapping $\mathbb{TL}_x$ for auxiliary nesting
operators is the same as the respective final nesting operators.
\begin{figure}[t]
  \hrule
  \[
  \infer[\mathbb{TL}_{x}^{\e}(\Box_L^\e)]{x_{n-1}R_\e (x_{n},y_n),X, x_{n-1}\lab \Box A\seq Y}{x_{n-1}R x_{n},X,x_n\lab A\seq Y\quad x_{n-1}R y_{n},X\seq Y,y_n\lab A}
\]
\[ \infer[\mathbb{TL}_{x}^{\e}(\Box_R^\e)]{x_{n-1}Rx_n,X\seq Y, x_n\lab \Box B}{x_nR_\e (x_{n+1},y_{n+1}),X,y_{n+1}\lab  B\seq Y, x_{n+1}\lab  B}
  \quad
  \infer[\mathbb{TL}_{x}^{\e}(\mathsf{N})]{x_{n-1}Rx_n,X\seq Y,x_n\lab\Box B}{x_{n}R x_{n+1},X\seq Y,x_{n+1}\lab B}
 \]
 \[
 \infer[\mathbb{TL}_{x}^{\e}(\mathsf{M})]{x_{n-1}R_\e (x_n,y_n),X\seq Y}{x_{n-1}R_\e (x_n,y_n),X,y_n\lab\bottom \seq Y}
\quad
  \infer[\mathbb{TL}_{x}^{\e}(\mathsf{C})]{x_{n-1}R_\e (x_{n},y_n),X,x_{n-1}\lab\Box A \seq Y}{x_{n-1}R_\e(x_{n},y_n),X,
  x_{n}\lab A \seq Y \quad
 x_{n-1}Ry_{n},X
   \seq Y,y_{n}\lab A}
  \]
  \hrule
  \caption{System $\LLS_{\e}$ for non-normal labelled systems}
  \label{fig:non-normal-LLS}
\end{figure}

\begin{figure}[t]
  \hrule
  \[
  \infer[\mathbb{TL}_{x}^{\m}(\Box_R^\m)]{x_{n-1}Rx_n,X\seq Y, x_n\lab \Box B}{x_nR_\m x_{n+1},X\seq Y, x_{n+1}\lab  B}
  \qquad
  \infer[\mathbb{TL}_{x}^{\m}(\Box_L^\m)]{x_{n-1}R_\m x_{n},X, x_{n-1}\lab \Box B\seq Y}{x_{n-1}R x_{n},X,x_n\lab B\seq Y}
\]
\[
  \infer[\mathbb{TL}_{x}^{\m}(\mathsf{C})]{x_{n-1}Rx_n,X\seq Y}{x_{n-1}R_\m x_n,X\seq Y}
 \qquad
 \infer[\mathbb{TL}_{x}^{\m}(\mathsf{N})]{x_{n-1}R_\m x_n,X\seq Y}{x_{n-1}R x_n,X\seq Y}
 \]
   \hrule
  \caption{$\LLS_{\m}$  for monotone labelled systems}
  \label{fig:monotone-LLS}
\end{figure}

\begin{figure}[t]
  \hrule
  \[
  \infer[\mathbb{TL}_{x}^{\m}(\mathsf{P})]{x_{n-1}Rx_n,X\seq Y}{x_nR_\m x_{n+1},X\seq Y}
  \qquad
  \infer[\mathbb{TL}_{x}^{\m}(\mathsf{T})]{x_{n-1}R_\m x_{n}, X,x_{n}\lab \Sigma\seq Y,x_{n}\lab \Pi}{x_{n}R x_{n+1}, X,x_{n+1}\lab \Sigma\seq Y,x_{n+1}\lab \Pi}
\]
   \hrule
  \caption{Labelled systems for extensions of monotonic modal logics}
  \label{fig:monotone-extensions-LLS}
\end{figure}


\subsection{Bipoles}\label{sec:ps}
In this section we exploit the above mentioned fact that $\LNS$
systems often have separate left and right introduction rules for
modalities in order to present a systematic way of representing
labelled line nested rules as \emph{bipole clauses}.  For that, we
will use (focused) linear logic ($\LLF$), not only because it extends
the works in, e.g.,~\cite{DBLP:journals/tcs/MillerP13,nigam14jlc}, but
also since this is the basis for using the rich linear logic
meta-level theory in order to reason about the specified systems. It
is worth noticing, though, that our approach is general enough for
specifying inference rules in other frameworks, like $\LKF$
(\cite{DBLP:conf/lpar/MillerV15,DBLP:conf/aiml/MarinMV16}).
  
The set of \emph{formulae} of $\LLF$ is given by the following grammar:
\[
 F ::= p \mid p^\perp \mid \one \mid \zero \mid\top \mid \bot  \mid F_1 \tensor F_2 \mid F_1 \lpar 
  F_2 \mid F_1 \with F_2 
  \mid F_1 \oplus F_2
  \mid \exists x. F \mid \forall x. F
  \mid \quest F \mid \bang F
\]
The connectives $\bot, \top, \with, \lpar, \forall, \quest$ are taken
to be \emph{negative}, the connectives
$\one,\zero,\tensor,\oplus,\exists,\bang$ are considered to be
\emph{positive}. The notions of negative and positive polarities are
extended to formulae in the natural way by considering the outermost
connective. Formulae are taken to be in \emph{negation normal form}
using the standard classical linear logic dualities, e.g.,
$(A \tensor B)^\perp \equiv A^\perp \lpar B^\perp$. Sequents in
(one-sided) linear logic are multisets of linear logic
formulae. \emph{Focused Linear Logic} $\LLF$ then adds a focusing
mechanism to this structure (see Sec.~\ref{sec:foc}). We refer the
reader to~\cite{Girard:1987uq} for the rules of unfocused linear logic
and to~\cite{Andreoli:1992,DBLP:journals/tcs/MillerP13} for the
focused versions.



\subsubsection{Specifying sequents} We briefly recapitulate the basic
concepts of the specification of sequent-style calculi in $\LLF$
from~\cite{DBLP:journals/tcs/MillerP13}. Let $\obj$ be the type of
object-level formulae and let $\lft{\cdot}$ and $\rght{\cdot}$ be two
meta-level predicates on these, i.e., both of type
$\obj\rightarrow o$, where $o$ is a primitive type denoting formulas.
Object-level sequents of the form
${B_1,\ldots,B_n}\seq{C_1,\ldots,C_m}$ (where $n,m\ge0$) are specified
as the multiset
$\lft{B_1},\ldots,\lft{B_n},\rght{C_1},\ldots,\rght{C_m}$ within the
LLF proof system.  The $\lft{\cdot}$ and $\rght{\cdot}$ predicates
identify which object-level formulas appear on which side of the
sequent -- brackets down for left (useful mnemonic: $\lfloor$ for
``left'') and brackets up for right. Finally, a binary relation $R$ is
specified by a meta-level atomic formula of the form $R(\cdot,\cdot)$.

\subsubsection{Specifying inference sequent rules}
Inference rules are specified by a re-writing clause that replaces the
active formulae in the conclusion by the active formulae in the
premises. The linear logic connectives indicate how these object level
formulae are connected: contexts are copied ($\with$) or split
($\otimes$), in different inference rules ($\oplus$) or in the same
sequent ($\lpar$).  For example, the specification of the rules of
$\LLS_\K$ (Fig. \ref{fig:LLS-K}) is
 \[
\begin{array}{l@{\quad}ll}
 (\Box_R) & \rght{x\lab \Box A}^\perp \otimes R(z,x)^\perp&\otimes \ \forall y.(\rght{y\lab A}\lpar R(x,y)) \\
 (\square_L)&  \lft{x\lab  \square A}^{\bot}
 \otimes R(x,y)^\perp& \tensor \ \lft{y\lab  A}\lpar R(x,y)\\
\end{array}
 \]
where all the variables are bounded by an outermost existential quantifier.

The correspondence between focusing on a formula and an induced
big-step inference rule is particularly interesting when the focused
formula is a \emph{bipole}. 

\begin{definition}\label{monopoles}
  A \emph{monopole} formula is a linear logic formula that is built up
  from atoms and occurrences of the negative connectives, with the
  restriction that $\quest$ has atomic scope.  A \emph{bipole} is a
  positive formula built from monopoles and negated atoms using only
  positive connectives, with the additional restriction that $\bang$
  can only be applied to a monopole.
\end{definition}

Roughly speaking, bipoles are positive formulae in which no positive
connective can be in the scope of a negative one. Focusing on such a
formula will produce a single positive and a single negative
phase. This two-phase decomposition enables the adequate capturing of
the application of an object-level inference rule by the meta-level
logic.  For example, focusing on the bipole clause $ (\Box_R) $ will
produce the derivation
\[
\infer=[{[\exists,\otimes]}]{\Down{\Psi}{\Delta}{\exists A,x,z.\rght{x\lab \Box A}^\perp\otimes R(z,x)^\perp\otimes\forall y.(\rght{y\lab A}\lpar R(x,y))}}
{\quad\pi_1\quad&\quad\pi_2\quad&\quad\infer=[{[R\Uparrow,\forall,\lpar]}]{\Down{\Psi}{\Delta'}{\forall y.(\rght{y\lab A}\lpar R(x,y))}}
{\deduce{\Up{\Psi}{\Delta',\rght{y\lab A},R(x,y))}}{}}\quad}
\]
where $\Delta = \rght{x\lab \Box A}\cup R(z,x)\cup\Delta'$, and
$\pi_1$ and $\pi_2$ are, respectively,
\[
\infer[I_1]{\Down{\Psi}{\rght{x\lab \Box A}}{\rght{x\lab \Box A}^\perp}}{}\qquad\quad
\infer[{[\exists,I_1]}]{\Down{\Psi}{R(z,x)}{R(z,x)^\perp}}{}
\]
This one-step focused derivation will: (a) consume
$\rght{x\lab \Box A}$ and $R(z,x)$; (b) create a fresh label $y$; and
(c) add $\rght{y\lab A}$ and $R(x,y)$ to the context. Observe that
this matches \emph{exactly} the application of the object-level rule
$\mathbb{TL}_x(\Box_R)$.

When specifying a system (logical, computational, etc) into a
meta-level framework, it is desirable and often mandatory that the
specification is \emph{faithful}, that is, one step of computation on
the object level should correspond to one step of logical reasoning in
the meta level.  This is what is called
\emph{adequacy}~\cite{nigam10jar}.

\begin{definition}
  A specification of an object sequent system is \emph{adequate} if
  provability is preserved for (open) derivations, such as inference
  rules themselves.
\end{definition}

Clearly not every sequent rule can be (adequately) specified in
$\LLF$. As an example, the rule $\mathbb{TL}_{x}^{\m}(\mathsf{T})$
(Fig.~\ref{fig:monotone-extensions-LLS}) cannot be properly specified
in our setting, since it lacks a principal formula.  All other $\LLS$
rules derived from $\LNS$ systems presented in this paper can be
adequately specified.

As an example, Fig.~\ref{fig:spec-EC} shows adequate specifications in
LLF of the labelled systems for the logic $\mathsf{EC}$.  These
specifications can be used for automatic proof search as illustrated
by the following theorem which is shown readily using the methods
in~\cite{DBLP:journals/tcs/MillerP13}.
\begin{figure}[t]
  \hrule
  \[
    \begin{array}{l@{\quad}l}
    (\Box_R^\e) & \rght{x:\Box B}^\perp\otimes R(w,x)^\perp\otimes \forall y \forall
                  z.(\rght{y:B}\lpar\lft{z:B}\lpar R_\e(x,(y,z)))\\
                  \\
    (\Box_L^\e) & \lft{x: \Box A}^{\bot} \otimes R_\e(x,
(y,z))^{\bot}\tensor (\lft{y:  A}\lpar R(x,y))
    \otimes (\rght{z:  A}\lpar R(x,z))
  \\
  \\
    (\mathsf{C}) & \lft{x: \Box A}^{\bot} \tensor  R_\e(x,
(y,z))^{\bot}\otimes (\lft{y:
       A}\lpar R_{\e}(x,(y,z)))
    \otimes(\rght{z:
     A}\lpar R(x,z))
  \end{array}
  \]
\hrule
  \caption{The LLF specification of the modal rules of 
    $\LLS_{\mathsf{EC}}$ for the logic $\mathsf{EC}$.}
   \label{fig:spec-EC}
\end{figure}

\begin{theorem}
  Let $L$ be a $\LLS$ system.  A sequent $\Rscr,\Gamma\seq\Delta$ is
  provable in $L$ if and only if there is a finite $L_0\subseteq L$
  with $\mathcal{L}_0$ the theory given by the clauses of an adequate
  specification of the inference rules of $L_0$ such that
  $\Up{\Lscr_0}{\Rscr}{\lft{\Gamma},\rght{\Delta}}$ is provable in
  LLF.
\end{theorem}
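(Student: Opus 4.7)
The plan is a standard bipole-adequacy argument, established by induction in both directions and hinging on the fact that focusing on a bipole clause corresponds to exactly one application of the associated object-level rule.

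For the ``only if'' direction, I proceed by induction on the height of the $L$-derivation $\pi$ of $\Rscr,\Gamma\seq\Delta$. Let $L_0\subseteq L$ be the finite set of rules actually instantiated in $\pi$, and let $\Lscr_0$ be the associated bipole theory. In the base case, $\pi$ is an instance of an axiom (e.g.\ $\init$, $\bot_L$, $\top_R$); the corresponding bipole is a monopole whose focused decomposition closes immediately against the identifying atoms in $\lft{\Gamma},\rght{\Delta}$ and in $\Rscr$. In the inductive step, suppose the last rule applied in $\pi$ is an instance of some $r\in L_0$ with premises $\mathcal{S}_1,\dots,\mathcal{S}_k$. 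By the induction hypothesis each $\Up{\Lscr_0}{\Rscr_i}{\lft{\Gamma_i},\rght{\Delta_i}}$ is LLF-provable, and by the \emph{bipole adequacy} computation sketched in the paper for $(\Box_R)$ (which applies verbatim to every bipole clause, as all the clauses in Figs.~\ref{fig:spec-EC} and its analogues are of this shape), focusing on the clause for $r$ from $\Lscr_0$ consumes exactly the atoms encoding the principal formulae and relation terms in the conclusion and introduces exactly the atoms encoding the premises. Splicing those $k$ subproofs below a single bipole-focusing phase yields the desired LLF derivation.

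For the ``if'' direction, I induct on the height of the LLF focused derivation of $\Up{\Lscr_0}{\Rscr}{\lft{\Gamma},\rght{\Delta}}$. By the focusing discipline, after the mandatory negative phase (which simply stores the encoded sequent in the linear context) the only productive step is to decide on a formula from $\Lscr_0$, that is, on some bipole clause $C_r$ corresponding to an object-level rule $r\in L_0$. Because $C_r$ is a bipole, its decomposition consists of exactly one positive phase (matching the principal atoms and relation terms against the current context and thereby instantiating the existentials in $C_r$ with the principal labels and the fresh ones for eigenvariables) followed by one negative phase per premise (yielding the encodings of the premises of $r$). The residual LLF derivations on each branch have strictly smaller height, so the induction hypothesis provides $L$-derivations of the encoded premises, and one application of $r$ delivers a derivation of $\Rscr,\Gamma\seq\Delta$ in $L$ (since $L_0\subseteq L$).

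The main obstacle, as usual with this kind of result, lies in establishing the ``exactly one rule application per focusing phase'' match for every clause shape that arises from the translation $\mathbb{TL}_x$: one must check that the negative premise-monopoles never contain a stray $\quest$ on a non-atomic formula and that the positive part never splits a relation atom that needs to persist across premises, i.e.\ that the clauses really are bipoles in the sense of Def.~\ref{monopoles}. For the clauses in Figs.~\ref{fig:spec-EC} and the analogous specifications for the $\LLS_\K,\LLS_\m$ and $\LLS_\e$ rules (excluding $\mathbb{TL}_x^\m(\mathsf{T})$, which the paper has already flagged as not adequately specifiable), this is a syntactic inspection: each clause is an existential closure of a positive formula whose atomic positive leaves carry the principal encoded atoms and whose negative sub-monopoles carry the encoded premises, with $\bang$ used only where the theory $\Lscr_0$ is entered. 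Once this inspection is done, the bipole-to-rule correspondence is immediate and both directions of the induction close cleanly, giving the claimed equivalence between $L$-provability of $\Rscr,\Gamma\seq\Delta$ and LLF-provability of $\Up{\Lscr_0}{\Rscr}{\lft{\Gamma},\rght{\Delta}}$ for some finite $L_0\subseteq L$.
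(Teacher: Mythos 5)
Your argument is the standard bipole/focusing adequacy induction, which is exactly the method of Miller and Pimentel that the paper invokes (it gives no explicit proof, merely stating the result is "shown readily using the methods in" that reference). Your write-up correctly identifies the one genuine obligation — checking that each clause produced by $\mathbb{TL}_x$ really is a bipole so that one focused phase matches one object-level rule application — and correctly excludes $\mathbb{TL}_{x}^{\m}(\mathsf{T})$, so nothing further is needed.
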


It turns out that the encoding of $\LNS$ into $\LL$ enabled the
proposal of a general theorem prover.  The system (called POULE for
\emph{PrOver for seqUent and Labelled systEms} -- available at
\url{http://subsell.logic.at/nestLL/}) has an $\LLF$ interpreter that
takes specified $\LLS$ rules ($\LLF$ clauses -- the theory) and
sequents and outputs a proof of the sequent, if it is provable.

The prover is \emph{parametric} in the theory, hence it profits from
the modularity of the specified systems.  Indeed, since the core is a
prover for focused linear logic, POULE can be transformed into a
specific prover for each specified logic by parametrically adding the
encoded rules of the object system as theories in $\LL$. Hence, for
example, if the prover should validade theorems in $\K$, one only has
to add the encoding of the system (as a theory) to the prover.

It is well known that \emph{generality} often implies
\emph{inefficiency}, and POULE is no exception to that.  Hence we have
also done a direct implementation of $\LNS$ systems in Prolog,
parametric on the modal axioms, that can be found in
\url{https://logic.at/staff/lellmann/lnsprover/}. We have no intention
of comparing such implementations, since they are different in nature:
a direct prover built from axioms is more adequate for proving
\emph{theorems}, while the meta-level prover based in $\LL$ is
suitable for proving \emph{properties}.


\section{Concluding remarks and future work}\label{sec:conc}

Following~\cite{Masini:1992}, in~\cite{Lellmann:2015lns} linear nested
sequents were considered as an alternative presentation for modal
proof systems. Since locality often entails modularity, this enabled
modular presentations for different modal systems just by adding the
local rules related to the new modalities to already defined linear
nested sequent systems.  In~\cite{DBLP:conf/lpar/LellmannP15} we
continued the programme of representing modal proof systems in $\LNS$,
including suitable extensions of $\K$, a simply dependent bimodal
logic and some standard non-normal modal logics.

In this paper, we have generalised the works \emph{op. cit.},
presenting local systems for a family of simply dependent multimodal
logics as well as a large class of non-normal modal logics. All the
proposed systems were proven sound and complete w.r.t. the respective
sequent systems and, as a side effect, we proved that each $\LNS$
system presented in this work could be restricted to its end-active
version. This enabled a notion of normal forms for $\LNS$ derivations,
narrowing the proof search space and hence allowing the proposal of
more efficient local proof systems.  The possibility of restricting
systems to their end-active versions also entails an automatic
procedure for obtaining labelled sequent versions of $\LNS$
systems. Finally, we showed that the inference rules of such labelled
systems can be seen as bipoles, and hence are amenable to adequate
embeddings into linear logic, which enabled the implementation of a
general theorem prover, parametric in the modal axioms considered.

There are at least four future research directions that could be taken
from this work.

First, following the works
in~\cite{DBLP:journals/tcs/MillerP13,nigam14jlc}, it should be
possible to use some of the meta-theory of linear logic to draw
conclusions about the object-level $\LNS$ systems.  For example, the
problem of providing general procedures for guaranteeing cut
admissibility for nested systems is still little understood.  It turns
out that the cut rule has an inherent duality: the cut formula is both
a conclusion of a statement and an hypothesis of another. In sequent
systems, this duality is often an invariant, being preserved
throughout the cut elimination process. Developing general methods for
detecting such invariants enables the use of meta-level frameworks to
uniformly reason about object-level
properties. In~\cite{DBLP:journals/tcs/MillerP13}, bipoles and
focusing were enough for both: specifying sequent systems and
providing sufficient meta-level conditions for cut elimination. In
this work, we showed that the bipole-focusing approach can be extended
for specifying nested and labelled systems. However, the meta-level
characterisation of cut elimination invariants for these formalisms is
still an open problem.

Second, it would be interesting to see to what extend the labels in
$\LLS$ reflect the semantic models behind the studied logics. In the
labelled sequent framework, Kripke's relational
semantics\cite{kripke63} is explicitly added to sequents, so that
labels correspond to
\emph{worlds}~\cite{Vigano:2000,Negri:2005p878}. In $\LNS$ instead,
labels correspond to the \emph{depth} of the nesting.  Gor\'{e} and
Ramanayake in~\cite{DBLP:conf/aiml/GoreR12} presented a direct
translation between proofs in labelled and nested systems for some
normal modal logics, while Fitting in~\cite{Fitting:2014} showed how
to relate nestings with Kripke structures for intuitionistic logic.
We believe it is possible to extend these ideas for relating not only
(general) frames with (classical) labelled multi-modal logics, but
also neighbourhood semantics labelled systems~\cite{negri2017} with
(linear) nested systems for non-normal modal logics.

A next natural topic for investigation would be to study this proof
system/semantics problem in the substructural setting. The proof
theoretical problem of modalities over linear logic has been first
addressed in~\cite{DBLP:journals/igpl/GuerriniMM98}, where different
behaviours of the exponential connectives in linear logic were
studied. We have recently~\cite{DBLP:conf/lpar/LellmannOP17} extended
Guerrini's work in order to show how normal multi-modalities can be
added to linear logic, having, as a side effect, a notion of modality
that smoothly extends that of
subexponentials~\cite{danos93kgc,DBLP:journals/tcs/NigamOP17}.  Among
the many problems that are still open in this subject, two are of
particular interest: is it possible to extend substructural logics
with non-normal modalities in a general way, similar to what is done
in~\cite{Porello:2015}? If so, which is the semantical meaning of the
resulting logics?  We plan to investigate these problems under the
linear logic view.

Last but not least, concerning the construction of the $\LNS$ systems
themselves, a natural next step would be the investigation of general
methods for obtaining such systems from cut-free sequent systems, or
even directly from Hilbert-style axioms.

\paragraph{Acknowledgments}
We thank the anonymous reviewers for their valuable comments on an
earlier draft of this paper.


\bibliographystyle{ACM-Reference-Format-Journals}
\bibliography{biblioB}


\begin{thebibliography}{00}


\ifx \showCODEN    \undefined \def \showCODEN     #1{\unskip}     \fi
\ifx \showDOI      \undefined \def \showDOI       #1{{\tt DOI:}\penalty0{#1}\ }
  \fi
\ifx \showISBNx    \undefined \def \showISBNx     #1{\unskip}     \fi
\ifx \showISBNxiii \undefined \def \showISBNxiii  #1{\unskip}     \fi
\ifx \showISSN     \undefined \def \showISSN      #1{\unskip}     \fi
\ifx \showLCCN     \undefined \def \showLCCN      #1{\unskip}     \fi
\ifx \shownote     \undefined \def \shownote      #1{#1}          \fi
\ifx \showarticletitle \undefined \def \showarticletitle #1{#1}   \fi
\ifx \showURL      \undefined \def \showURL       #1{#1}          \fi

\bibitem[\protect\citeauthoryear{Achilleos}{Achilleos}{2016}]%
        {Achilleos:2016}
{Antonis Achilleos}. 2016.
\newblock \showarticletitle{Modal Logics with Hard Diamond-free Fragments}.
\newblock In {\em LFCS 2016}, {Sergei Artemov} {and} {Anil Nerode} (Eds.).
  LNCS, Vol. 9537. Springer International Publishing, 1--13.
\newblock
\showDOI{%
\url{http://dx.doi.org/10.1007/978-3-319-27683-0_1}}


\bibitem[\protect\citeauthoryear{Andreoli}{Andreoli}{1992}]%
        {Andreoli:1992}
{Jean{-}Marc Andreoli}. 1992.
\newblock \showarticletitle{Logic Programming with Focusing Proofs in Linear
  Logic}.
\newblock {\em J. Log. Comput.\/} {2}, 3 (1992), 297--347.
\newblock
\showDOI{%
\url{http://dx.doi.org/10.1093/logcom/2.3.297}}


\bibitem[\protect\citeauthoryear{Artemov}{Artemov}{2006}]%
        {Artemov:2006}
{Sergei Artemov}. 2006.
\newblock \showarticletitle{Justified common knowledge}.
\newblock {\em Theoretical Computer Science\/}  {357} (2006), 4--22.
\newblock
\showDOI{%
\url{http://dx.doi.org/10.1016/j.tcs.2006.03.009}}


\bibitem[\protect\citeauthoryear{Avron}{Avron}{1996}]%
        {Avron:1996kx}
{Arnon Avron}. 1996.
\newblock \showarticletitle{The method of hypersequents in the proof theory of
  propositional non-classical logics}.
\newblock In {\em Logic: From Foundations to Applications}, {Wilfrid Hodges},
  {Martin Hyland}, {Charles Steinhorn}, {and} {John Truss} (Eds.). Clarendon
  Press New York.
\newblock


\bibitem[\protect\citeauthoryear{Blackburn, de~Rijke, and Venema}{Blackburn
  et~al\mbox{.}}{2001}]%
        {Blackburn:2001fk}
{Patrick Blackburn}, {Maarten de Rijke}, {and} {Yde Venema}. 2001.
\newblock {\em Modal Logic}.
\newblock Cambridge University Press.
\newblock
\showDOI{%
\url{http://dx.doi.org/10.1017/CBO9781107050884}}


\bibitem[\protect\citeauthoryear{Br{\"u}nnler}{Br{\"u}nnler}{2009}]%
        {Brunnler:2009kx}
{Kai Br{\"u}nnler}. 2009.
\newblock \showarticletitle{Deep Sequent Systems for Modal Logic}.
\newblock {\em Arch. Math. Log.\/}  {48} (2009), 551--577.
\newblock
\showDOI{%
\url{http://dx.doi.org/10.1007/s00153-009-0137-3}}


\bibitem[\protect\citeauthoryear{Bull}{Bull}{1992}]%
        {Bull:1992}
{Robert~A. Bull}. 1992.
\newblock \showarticletitle{Cut elimination for propositional dynamic logic
  wihout *}.
\newblock {\em Zeitschr. f. math. Logik und Grundlagen d. Math.\/}  {38}
  (1992), 85--100.
\newblock


\bibitem[\protect\citeauthoryear{Chaudhuri, Marin, and
  Stra{\ss}burger}{Chaudhuri et~al\mbox{.}}{2016a}]%
        {DBLP:conf/fossacs/ChaudhuriMS16}
{Kaustuv Chaudhuri}, {Sonia Marin}, {and} {Lutz Stra{\ss}burger}. 2016a.
\newblock \showarticletitle{Focused and Synthetic Nested Sequents}. In {\em
  Proc. of {FOSSACS} 2016.} {\em (Lecture Notes in Computer Science)}, {Bart
  Jacobs} {and} {Christof L{\"{o}}ding} (Eds.), Vol. 9634. Springer, 390--407.
\newblock
\showISBNx{978-3-662-49629-9}
\showDOI{%
\url{http://dx.doi.org/10.1007/978-3-662-49630-5_23}}


\bibitem[\protect\citeauthoryear{Chaudhuri, Marin, and
  Stra{\ss}burger}{Chaudhuri et~al\mbox{.}}{2016b}]%
        {DBLP:conf/rta/ChaudhuriMS16}
{Kaustuv Chaudhuri}, {Sonia Marin}, {and} {Lutz Stra{\ss}burger}. 2016b.
\newblock \showarticletitle{Modular Focused Proof Systems for Intuitionistic
  Modal Logics}. In {\em 1st International Conference on Formal Structures for
  Computation and Deduction, {FSCD} 2016, June 22-26, 2016, Porto, Portugal}
  {\em (LIPIcs)}, {Delia Kesner} {and} {Brigitte Pientka} (Eds.), Vol.~52.
  Schloss Dagstuhl - Leibniz-Zentrum fuer Informatik, 16:1--16:18.
\newblock
\showISBNx{978-3-95977-010-1}
\showDOI{%
\url{http://dx.doi.org/10.4230/LIPIcs.FSCD.2016.16}}


\bibitem[\protect\citeauthoryear{Chellas}{Chellas}{1980}]%
        {Chellas:1980fk}
{Brian~F. Chellas}. 1980.
\newblock {\em Modal Logic}.
\newblock Cambridge University Press.
\newblock
\showDOI{%
\url{http://dx.doi.org/10.1017/CBO9780511621192}}


\bibitem[\protect\citeauthoryear{Ciabattoni and Ferrari}{Ciabattoni and
  Ferrari}{2000}]%
        {Ciabattoni:2000}
{Agata Ciabattoni} {and} {Mauro Ferrari}. 2000.
\newblock \showarticletitle{Hypertableau and Path-Hypertableau Calculi for some
  families of intermediate logics}.
\newblock In {\em TABLEAUX 2000}. LNAI, Vol. 1847. Springer, 160--175.
\newblock


\bibitem[\protect\citeauthoryear{Danos, Joinet, and Schellinx}{Danos
  et~al\mbox{.}}{1993}]%
        {danos93kgc}
{Vincent Danos}, {Jean-Baptiste Joinet}, {and} {Harold Schellinx}. 1993.
\newblock \showarticletitle{The Structure of Exponentials: Uncovering the
  Dynamics of Linear Logic Proofs}.
\newblock {\em LNCS\/}  {713} (1993), 159--171.
\newblock


\bibitem[\protect\citeauthoryear{Demri}{Demri}{2000}]%
        {Demri:2000}
{St{\'e}phane Demri}. 2000.
\newblock \showarticletitle{Complexity of Simple Dependent Bimodal Logics}.
\newblock In {\em TABLEAUX 2000}, {Roy Dyckhoff} (Ed.). LNCS, Vol. 1847.
  Springer, 190--204.
\newblock
\showDOI{%
\url{http://dx.doi.org/10.1007/10722086_17}}


\bibitem[\protect\citeauthoryear{Fitting}{Fitting}{2014}]%
        {Fitting:2014}
{Melvin Fitting}. 2014.
\newblock \showarticletitle{Nested Sequents for Intuitionistic Logics}.
\newblock {\em Notre Dame Journal of Formal Logic\/} {55}, 1 (2014), 41--61.
\newblock
\showDOI{%
\url{http://dx.doi.org/10.1215/00294527-2377869}}


\bibitem[\protect\citeauthoryear{Gentzen}{Gentzen}{1969}]%
        {gentzen35}
{Gerhard Gentzen}. 1969.
\newblock \showarticletitle{Investigations into Logical Deduction}.
\newblock In {\em {The Collected Papers of Gerhard Gentzen}}. 68--131.
\newblock


\bibitem[\protect\citeauthoryear{Gilbert and Maffezioli}{Gilbert and
  Maffezioli}{2015}]%
        {Gilbert:2015}
{David~R. Gilbert} {and} {Paolo Maffezioli}. 2015.
\newblock \showarticletitle{Modular Sequent Calculi for Classical Modal
  Logics}.
\newblock {\em Studia Logica\/} {103}, 1 (2015), 175--217.
\newblock
\showISSN{0039-3215}
\showDOI{%
\url{http://dx.doi.org/10.1007/s11225-014-9556-1}}


\bibitem[\protect\citeauthoryear{Girard}{Girard}{1987}]%
        {Girard:1987uq}
{Jean-Yves Girard}. 1987.
\newblock \showarticletitle{Linear Logic}.
\newblock {\em Theoret. Comput. Sci.\/}  {50} (1987), 1--102.
\newblock
\showDOI{%
\url{http://dx.doi.org/10.1016/0304-3975(87)90045-4}}


\bibitem[\protect\citeauthoryear{Gor{\'{e}} and Ramanayake}{Gor{\'{e}} and
  Ramanayake}{2012}]%
        {DBLP:conf/aiml/GoreR12}
{Rajeev Gor{\'{e}}} {and} {Revantha Ramanayake}. 2012.
\newblock \showarticletitle{Labelled Tree Sequents, Tree Hypersequents and
  Nested (Deep) Sequents}. In {\em AiML 9}. 279--299.
\newblock
\showURL{%
\url{www.aiml.net/volumes/volume9/Gore-Ramanayake.pdf}}


\bibitem[\protect\citeauthoryear{Guerrini, Martini, and Masini}{Guerrini
  et~al\mbox{.}}{1998}]%
        {DBLP:journals/igpl/GuerriniMM98}
{Stefano Guerrini}, {Simone Martini}, {and} {Andrea Masini}. 1998.
\newblock \showarticletitle{An Analysis of Linear Exponentials Based on
  Extended Sequents}.
\newblock {\em Logic Journal of the {IGPL}\/}  {6} (1998), 735--753.
\newblock


\bibitem[\protect\citeauthoryear{Guglielmi and Stra{\ss}burger}{Guglielmi and
  Stra{\ss}burger}{2001}]%
        {Guglielmi:2001}
{Alessio Guglielmi} {and} {Lutz Stra{\ss}burger}. 2001.
\newblock \showarticletitle{Non-commutativity and {MELL} in the Calculus of
  Structures}.
\newblock In {\em CSL 2001}, {Laurent Fribourg} (Ed.). LNCS, Vol. 2142.
  Springer, 54--68.
\newblock
\showISBNx{978-3-540-42554-0}
\showDOI{%
\url{http://dx.doi.org/10.1007/3-540-44802-0_5}}


\bibitem[\protect\citeauthoryear{Hansen}{Hansen}{2003}]%
        {Hansen:2003}
{Helle~Hvid Hansen}. 2003.
\newblock {\em Monotonic Modal Logics}.
\newblock Master's\ thesis. University of Amsterdam, ILLC.
\newblock
\showURL{%
\url{https://www.illc.uva.nl/Research/Publications/Reports/PP-2003-24.text.pdf}}


\bibitem[\protect\citeauthoryear{Indrzejczak}{Indrzejczak}{2005}]%
        {Indrzejczak:2005}
{Andrzej Indrzejczak}. 2005.
\newblock \showarticletitle{Sequent Calculi for Monotonic Modal Logics}.
\newblock {\em Bull. Sect. Log.\/}  {34} (2005), 151--164.
\newblock
Issue 3.


\bibitem[\protect\citeauthoryear{Indrzejczak}{Indrzejczak}{2011}]%
        {Indrzejczak:2011}
{Andrzej Indrzejczak}. 2011.
\newblock \showarticletitle{Admissibility of Cut in Congruent Modal Logics}.
\newblock {\em Logic and Logical Philosophy\/}  {21} (2011), 189--203.
\newblock
\showDOI{%
\url{http://dx.doi.org/10.12775/LLP.2011.010}}


\bibitem[\protect\citeauthoryear{Indrzejczak}{Indrzejczak}{2016}]%
        {Indrzejczak:2016}
{Andrzej Indrzejczak}. 2016.
\newblock \showarticletitle{Linear time in hypersequent framework}.
\newblock {\em Bulletin of Symbolic Logic\/}  {22} (2016), 121--144.
\newblock
Issue 1.


\bibitem[\protect\citeauthoryear{Kashima}{Kashima}{1994}]%
        {DBLP:journals/sLogica/Kashima94}
{Ryo Kashima}. 1994.
\newblock \showarticletitle{Cut-free sequent calculi for some tense logics}.
\newblock {\em Studia Logica\/} {53}, 1 (1994), 119--136.
\newblock
\showDOI{%
\url{http://dx.doi.org/10.1007/BF01053026}}


\bibitem[\protect\citeauthoryear{Kripke}{Kripke}{1963}]%
        {kripke63}
{S.~A. Kripke}. 1963.
\newblock \showarticletitle{Semantical Considerations on Modal Logic}.
\newblock {\em Acta Philosphica Fennica\/}  {16} (1963).
\newblock


\bibitem[\protect\citeauthoryear{Lavendhomme and Lucas}{Lavendhomme and
  Lucas}{2000}]%
        {Lavendhomme:2000}
{Ren{\'e} Lavendhomme} {and} {Thierry Lucas}. 2000.
\newblock \showarticletitle{Sequent Calculi and Decision Procedures for Weak
  Modal Systems}.
\newblock {\em Studia Logica\/}  {65} (2000), 121--145.
\newblock
\showDOI{%
\url{http://dx.doi.org/10.1023/A:1026753129680}}


\bibitem[\protect\citeauthoryear{Lellmann}{Lellmann}{2013}]%
        {Lellmann:2013}
{Bj{\"o}rn Lellmann}. 2013.
\newblock {\em Sequent Calculi with Context Restrictions and Applications to
  Conditional Logic}.
\newblock Ph.D. Dissertation. Imperial College London.
\newblock
\showURL{%
\url{http://hdl.handle.net/10044/1/18059}}


\bibitem[\protect\citeauthoryear{Lellmann}{Lellmann}{2015}]%
        {Lellmann:2015lns}
{Bj{\"o}rn Lellmann}. 2015.
\newblock \showarticletitle{Linear Nested Sequents, 2-Sequents and
  Hypersequents}.
\newblock In {\em TABLEAUX 2015}, {Hans De~Nivelle} (Ed.). LNAI, Vol. 9323.
  Springer International Publishing, 135--150.
\newblock
\showISBNx{978-3-319-24311-5}
\showDOI{%
\url{http://dx.doi.org/10.1007/978-3-319-24312-2_10}}


\bibitem[\protect\citeauthoryear{Lellmann, Olarte, and Pimentel}{Lellmann
  et~al\mbox{.}}{2017}]%
        {DBLP:conf/lpar/LellmannOP17}
{Bj{\"{o}}rn Lellmann}, {Carlos Olarte}, {and} {Elaine Pimentel}. 2017.
\newblock \showarticletitle{A uniform framework for substructural logics with
  modalities}. In {\em Proceedings LPAR-21}. 435--455.
\newblock


\bibitem[\protect\citeauthoryear{Lellmann and Pattinson}{Lellmann and
  Pattinson}{2013}]%
        {Lellmann:2013fk}
{Bj{\"o}rn Lellmann} {and} {Dirk Pattinson}. 2013.
\newblock \showarticletitle{Constructing Cut Free Sequent Systems With Context
  Restrictions Based on Classical or Intuitionistic Logic}.
\newblock In {\em ICLA 2013}. LNCS, Vol. 7750. Springer, 148--160.
\newblock
\showDOI{%
\url{http://dx.doi.org/10.1007/978-3-642-36039-8_14}}


\bibitem[\protect\citeauthoryear{Lellmann and Pimentel}{Lellmann and
  Pimentel}{2015}]%
        {DBLP:conf/lpar/LellmannP15}
{Bj{\"{o}}rn Lellmann} {and} {Elaine Pimentel}. 2015.
\newblock \showarticletitle{Proof Search in Nested Sequent Calculi}. In {\em
  Logic for Programming, Artificial Intelligence, and Reasoning - 20th
  International Conference, {LPAR-20} 2015, Suva, Fiji, November 24-28, 2015,
  Proceedings}. 558--574.
\newblock
\showDOI{%
\url{http://dx.doi.org/10.1007/978-3-662-48899-7_39}}


\bibitem[\protect\citeauthoryear{Liang and Miller}{Liang and Miller}{2009}]%
        {liang09tcs}
{Chuck Liang} {and} {Dale Miller}. 2009.
\newblock \showarticletitle{Focusing and Polarization in Linear,
  Intuitionistic, and Classical Logics}.
\newblock {\em Theoretical Computer Science\/} {410}, 46 (2009), 4747--4768.
\newblock
\showURL{%
\url{http://dx.doi.org/10.1016/j.tcs.2009.07.041}}


\bibitem[\protect\citeauthoryear{Marin, Miller, and Volpe}{Marin
  et~al\mbox{.}}{2016}]%
        {DBLP:conf/aiml/MarinMV16}
{Sonia Marin}, {Dale Miller}, {and} {Marco Volpe}. 2016.
\newblock \showarticletitle{A focused framework for emulating modal proof
  systems}. In {\em Advances in Modal Logic 11, proceedings of the 11th
  conference on "Advances in Modal Logic," held in Budapest, Hungary, August 30
  - September 2, 2016}, {Lev~D. Beklemishev}, {St{\'{e}}phane Demri}, {and}
  {Andr{\'{a}}s Mat{\'{e}}} (Eds.). College Publications, 469--488.
\newblock
\showISBNx{978-1-84890-201-5}
\showURL{%
\url{http://www.aiml.net/volumes/volume11/Marin-Miller-Volpe.pdf}}


\bibitem[\protect\citeauthoryear{Masini}{Masini}{1992}]%
        {Masini:1992}
{Andrea Masini}. 1992.
\newblock \showarticletitle{2-Sequent calculus: a proof theory of modalities}.
\newblock {\em Ann. Pure Appl. Logic\/}  {58} (1992), 229--246.
\newblock
\showDOI{%
\url{http://dx.doi.org/10.1016/0168-0072(92)90029-Y}}


\bibitem[\protect\citeauthoryear{McCarthy, Sato, Hayashi, and
  Igarishi}{McCarthy et~al\mbox{.}}{1978}]%
        {McCarthy:1978}
{J. McCarthy}, {M. Sato}, {T. Hayashi}, {and} {S. Igarishi}. 1978.
\newblock {\em On the model theory of knowledge}.
\newblock {T}echnical {R}eport.
\newblock
\showURL{%
\url{http://www-formal.stanford.edu/jmc/model/}}


\bibitem[\protect\citeauthoryear{Mendler and Scheele}{Mendler and
  Scheele}{2011}]%
        {Mendler:2011kxb}
{Michael Mendler} {and} {Stephan Scheele}. 2011.
\newblock \showarticletitle{Cut-free {G}entzen Calculus For Multimodal {CK}}.
\newblock {\em Inf. Comput. (IANDC)\/}  {209} (2011), 1465--1490.
\newblock
\showDOI{%
\url{http://dx.doi.org/10.1016/j.ic.2011.10.003}}


\bibitem[\protect\citeauthoryear{Miller and Pimentel}{Miller and
  Pimentel}{2002}]%
        {miller02tableaux}
{Dale Miller} {and} {Elaine Pimentel}. 2002.
\newblock \showarticletitle{Using linear logic to reason about sequent
  systems}. In {\em International Conference on Automated Reasoning with
  Analytic Tableaux and Related Methods} {\em (LNCS)}, {Uwe Egly} {and}
  {Christian~G. Ferm{\"u}ller} (Eds.), Vol. 2381. Springer, 2--23.
\newblock


\bibitem[\protect\citeauthoryear{Miller and Pimentel}{Miller and
  Pimentel}{2004}]%
        {miller04lc}
{Dale Miller} {and} {Elaine Pimentel}. 2004.
\newblock \showarticletitle{Linear logic as a framework for specifying sequent
  calculus}.
\newblock In {\em Logic Colloquium '99: Proceedings of the Annual European
  Summer Meeting of the Association for Symbolic Logic}, {Jan van Eijck},
  {Vincent van Oostrom}, {and} {Albert Visser} (Eds.). A K Peters Ltd,
  111--135.
\newblock


\bibitem[\protect\citeauthoryear{Miller and Pimentel}{Miller and
  Pimentel}{2013}]%
        {DBLP:journals/tcs/MillerP13}
{Dale Miller} {and} {Elaine Pimentel}. 2013.
\newblock \showarticletitle{A formal framework for specifying sequent calculus
  proof systems}.
\newblock {\em Theor. Comput. Sci.\/}  {474} (2013), 98--116.
\newblock
\showDOI{%
\url{http://dx.doi.org/10.1016/j.tcs.2012.12.008}}


\bibitem[\protect\citeauthoryear{Miller and Volpe}{Miller and Volpe}{2015}]%
        {DBLP:conf/lpar/MillerV15}
{Dale Miller} {and} {Marco Volpe}. 2015.
\newblock \showarticletitle{Focused Labeled Proof Systems for Modal Logic}. In
  {\em Logic for Programming, Artificial Intelligence, and Reasoning - 20th
  International Conference, {LPAR-20} 2015, Suva, Fiji, November 24-28, 2015,
  Proceedings}. 266--280.
\newblock
\showDOI{%
\url{http://dx.doi.org/10.1007/978-3-662-48899-7_19}}


\bibitem[\protect\citeauthoryear{Negri}{Negri}{2005}]%
        {Negri:2005p878}
{Sara Negri}. 2005.
\newblock \showarticletitle{Proof analysis in modal logic}.
\newblock {\em J. Philos. Logic\/}  {34} (2005), 507--544.
\newblock
\showDOI{%
\url{http://dx.doi.org/10.1007/s10992-005-2267-3}}


\bibitem[\protect\citeauthoryear{Negri}{Negri}{2016}]%
        {Negri:2016}
{Sara Negri}. 2016.
\newblock \showarticletitle{Proof analysis beyond geometric theories: From rule
  systems to systems of rules}.
\newblock {\em J. Logic Computation\/}  {26} (2016), 513--537.
\newblock
Issue 2.
\showDOI{%
\url{http://dx.doi.org/10.1093/logcom/exu037}}


\bibitem[\protect\citeauthoryear{Negri}{Negri}{2017}]%
        {negri2017}
{Sara Negri}. 2017.
\newblock \showarticletitle{Proof theory for non-normal modal logics: The
  neighbourhood formalism and basic results}.
\newblock {\em The IfCoLog Journal of Logics and their Applications\/}  {4}
  (2017), 1241--1286.
\newblock
Issue 4.


\bibitem[\protect\citeauthoryear{Negri and von Plato}{Negri and von
  Plato}{2011}]%
        {Negri:2011}
{Sara Negri} {and} {Jan von Plato}. 2011.
\newblock {\em Proof Analysis: A Contribution to Hilbert's Last Problem}.
\newblock Cambridge University Press.
\newblock
\showISBNx{9781107008953}


\bibitem[\protect\citeauthoryear{Nigam and Miller}{Nigam and Miller}{2010}]%
        {nigam10jar}
{Vivek Nigam} {and} {Dale Miller}. 2010.
\newblock \showarticletitle{A framework for proof systems}.
\newblock {\em J. of Automated Reasoning\/} {45}, 2 (2010), 157--188.
\newblock
\showDOI{%
\url{http://dx.doi.org/10.1007/s10817-010-9182-1}}


\bibitem[\protect\citeauthoryear{Nigam, Olarte, and Pimentel}{Nigam
  et~al\mbox{.}}{2017}]%
        {DBLP:journals/tcs/NigamOP17}
{Vivek Nigam}, {Carlos Olarte}, {and} {Elaine Pimentel}. 2017.
\newblock \showarticletitle{On subexponentials, focusing and modalities in
  concurrent systems}.
\newblock {\em Theor. Comput. Sci.\/}  {693} (2017), 35--58.
\newblock
\showDOI{%
\url{http://dx.doi.org/10.1016/j.tcs.2017.06.009}}


\bibitem[\protect\citeauthoryear{Nigam, Pimentel, and Reis}{Nigam
  et~al\mbox{.}}{2016}]%
        {nigam14jlc}
{Vivek Nigam}, {Elaine Pimentel}, {and} {Giselle Reis}. 2016.
\newblock \showarticletitle{An extended framework for specifying and reasoning
  about proof systems}.
\newblock {\em J. Log. Comput.\/} {26}, 2 (2016), 539--576.
\newblock
\showDOI{%
\url{http://dx.doi.org/10.1093/logcom/exu029}}


\bibitem[\protect\citeauthoryear{Ohnishi and Matsumoto}{Ohnishi and
  Matsumoto}{1957}]%
        {Ohnishi:1957}
{Masao Ohnishi} {and} {Kazuo Matsumoto}. 1957.
\newblock \showarticletitle{Gentzen Method in Modal Calculi}.
\newblock {\em Osaka Math. J.\/}  {9} (1957), 113--130.
\newblock
\showURL{%
\url{http://hdl.handle.net/11094/6054}}


\bibitem[\protect\citeauthoryear{Orlandelli}{Orlandelli}{2014}]%
        {Orlandelli:2014}
{Eugenio Orlandelli}. 2014.
\newblock \showarticletitle{Proof Analysis in Deontic Logics}.
\newblock In {\em DEON 2014}, {Fabrizio Cariani}, {Davide Grossi}, {Joke
  Meheus}, {and} {Xavier Parent} (Eds.). LNAI, Vol. 8554. Springer, 139--148.
\newblock
\showDOI{%
\url{http://dx.doi.org/10.1007/978-3-319-08615-6_11}}


\bibitem[\protect\citeauthoryear{Pfenning}{Pfenning}{2015}]%
        {Pfenning:2015TalkLFMTP}
{Frank Pfenning}. 2015.
\newblock Decomposing Modalities.  (2015).
\newblock
\showURL{%
\url{http://www.cs.cmu.edu/~fp/talks/lfmtp15-talk.pdf}}
\newblock
\shownote{Slides for an invited talk at LFMTP 2015, Berlin, Germany. Accessed
  online Feb 2017.}


\bibitem[\protect\citeauthoryear{Pimentel and Miller}{Pimentel and
  Miller}{2005}]%
        {pimentel05lpar}
{Elaine Pimentel} {and} {Dale Miller}. 2005.
\newblock \showarticletitle{On the specification of sequent systems}. In {\em
  LPAR 2005: 12th International Conference on Logic for Programming, Artificial
  Intelligence and Reasoning} {\em (LNAI)}. 352--366.
\newblock


\bibitem[\protect\citeauthoryear{Poggiolesi}{Poggiolesi}{2009}]%
        {Poggiolesi:2009vn}
{Francesca Poggiolesi}. 2009.
\newblock \showarticletitle{The method of Tree-hypersequents for Modal
  Propositional Logic}.
\newblock In {\em Towards Mathematical Philosophy}. Trends In Logic, Vol.~28.
  Springer, 31--51.
\newblock
\showDOI{%
\url{http://dx.doi.org/10.1007/978-1-4020-9084-4_3}}


\bibitem[\protect\citeauthoryear{Porello and Troquard}{Porello and
  Troquard}{2015}]%
        {Porello:2015}
{Daniele Porello} {and} {Nicolas Troquard}. 2015.
\newblock \showarticletitle{Non-normal modalities in variants of linear logic}.
\newblock {\em Journal of Applied Non-Classical Logics\/}  {25} (2015),
  229--255.
\newblock
Issue 3.
\showDOI{%
\url{http://dx.doi.org/10.1080/11663081.2015.1080422}}


\bibitem[\protect\citeauthoryear{Shvarts}{Shvarts}{1989}]%
        {Shvarts:1989}
{Grigori~F. Shvarts}. 1989.
\newblock \showarticletitle{Gentzen Style Systems for {K45} and {K45D}}.
\newblock In {\em Logic at Botik '89, Symposium on Logical Foundations of
  Computer Science}, {A.R. Meyer} {and} {M.A. Taitslin} (Eds.). LNCS, Vol. 363.
  Springer, 245--256.
\newblock
\showDOI{%
\url{http://dx.doi.org/10.1007/3-540-51237-3_20}}


\bibitem[\protect\citeauthoryear{Stra{\ss}burger}{Stra{\ss}burger}{2013}]%
        {Strasburger:2013}
{Lutz Stra{\ss}burger}. 2013.
\newblock \showarticletitle{Cut Elimination in Nested Sequents for
  Intuitionistic Modal Logics}.
\newblock In {\em FOSSACS 2013}, {Frank Pfenning} (Ed.). LNCS, Vol. 7794.
  Springer, 209--224.
\newblock
\showDOI{%
\url{http://dx.doi.org/10.1007/978-3-642-37075-5_14}}


\bibitem[\protect\citeauthoryear{Troelstra and Schwichtenberg}{Troelstra and
  Schwichtenberg}{2000}]%
        {Troelstra:2000fj}
{Anne~Sjerp Troelstra} {and} {Helmut Schwichtenberg}. 2000.
\newblock {\em Basic Proof Theory\/} (2 ed.). Cambridge Tracts In Theoretical
  Computer Science, Vol.~43.
\newblock Cambridge University Press.
\newblock


\bibitem[\protect\citeauthoryear{Vigan{\`o}}{Vigan{\`o}}{2000}]%
        {Vigano:2000}
{Luca Vigan{\`o}}. 2000.
\newblock {\em Labelled Non-Classical Logics}.
\newblock Kluwer Academic Publishers.
\newblock


\bibitem[\protect\citeauthoryear{Wansing}{Wansing}{2002}]%
        {Wansing:2002fk}
{Heinrich Wansing}. 2002.
\newblock \showarticletitle{Sequent Systems for Modal Logics}.
\newblock In {\em Handbook of Philosophical Logic, Vol.8}, {Dov~M. Gabbay}
  {and} {Franz Guenthner} (Eds.). Springer-Verlag Berlin Heidelberg.
\newblock
\showDOI{%
\url{http://dx.doi.org/10.1007/978-94-010-0387-2_2}}


\end{thebibliography}

\newpage

\appendix

\section{Proofs of Cut Elimination}
\label{app:cut-elim-mon}

Since the calculi include the contraction rules we follow the standard
method of eliminating the \emph{multicut} rule
\[
\infer[\mcut]{\Gamma,\Sigma \seq \Delta,\Pi}{\Gamma \seq \Delta, A^n &
A^m, \Sigma \seq \Pi}
\]
(with $n,m \geq 1$) instead of the standard cut rule. As usual, since
the latter is a specific instance of the multicut rule, this implies
cut elimination. For the sake of exposition we deviate slightly from
standard terminology in the following way.

\begin{definition}
  The \emph{main formula} of an application of a propositional rule or
  the modal rule $\T$ from Fig.~\ref{fig:sequent-rules-ext-mon} is the
  formula occurring in the conclusion with a greater multiplicity than
  in any of the premisses.  In particular, the \emph{main formulae} of
  an application of $\init$ or $\bot_L$ are all formulae occurring in
  the conclusion.  The \emph{main formulae} of an application of a
  modal rule from Fig.~\ref{fig:sequent-rules-ext-mon} apart from $\T$
  are all the formulae occurring in the conclusion. In an application
  of a structural rule, i.e., Weakening or Contraction, there are no
  main formulae.
\end{definition}

Hence, e.g., the formula $\Box A$ in would be a main formula in the
applications of rules $\D$ and $\D\4$ below left and centre, but not
in the application of rule $\T$ below right.
\[
  \infer[\D]{\Box A, \Box B \seq \;}{A, B \seq \;}
  \qquad
  \infer[\D\4]{\Box A,\Box B \seq\; }{\Box A, B \seq \;}
  \qquad
  \infer[\T]{\Box A, \Box B \seq \;}{\Box A, B \seq \;}
\]

In the statement of the cut elimination theorem we write
$\G_\logic\Con\W\mcut$ for the calculus $\G_\logic\Con\W$ with the
multicut rule $\mcut$.

\begin{theorem}
  Let $\logic$ be the logic $\M\mathcal{A}$ with
  $\mathcal{A} \subseteq \{ \mathsf{N}, \C, \PP, \D, \4 \}$ or one of
  the logics
  $\{ \M\5, \M\PP\5, \M\45,\M\PP\4\5, \M\D\4\5, \K\4\5, \K\D\4\5
  \}$. Then every derivation in $\G_\logic \Con\W\mcut$ can be
  converted into a derivation in $\G_\logic\Con\W$ with the same
  endsequent.
\end{theorem}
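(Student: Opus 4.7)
The proof will proceed by the standard double induction: the outer induction is on the complexity of the cut formula, the inner one on the sum of the heights of the two premiss derivations of the multicut. Using the multicut rule (rather than ordinary cut) is essential because the calculi contain explicit contraction, and multicut absorbs contractions on the cut formula in either premiss. I would split each reduction step according to whether the cut formula is main in the last rule applied to each premiss.

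In the \emph{secondary} cases, where the cut formula is not main in at least one premiss, I would permute the multicut upward past that rule and appeal to the inner induction hypothesis. For propositional rules, structural rules, and the non-modal parts of this is entirely routine; care is only needed when pushing a multicut upward past a modal rule whose conclusion has the cut formula in its side context, where admissibility of weakening is used to restore the context after the cut. The \emph{principal} propositional cases reduce to multicuts on immediate subformulae via the outer induction in the standard Troelstra--Schwichtenberg fashion.

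The real work is in the principal modal cases, where $\Box A$ is main in the last rule of both premisses. For each logic in the list, I enumerate the pairs (left-introducing rule, right-introducing rule) for $\Box A$ and exhibit a reduction that combines the two premisses via a multicut on $A$ (of strictly smaller complexity), together with possibly weakening, contraction, and a single application of an appropriate modal rule. For the extensions of $\M$ without $\5$, the purpose of the auxiliary rules $\C\D, \C\4, \C\D\4, \D\4$ in Fig.~\ref{fig:sequent-rules-ext-mon} is precisely to ensure that this concluding rule lies within the calculus: for example, combining $\D$ on the left with $\4$ on the right requires $\D\4$ to close the reduction. Thus each pair has a designated target rule, and I would list the principal reductions case by case; this is the main combinatorial content of the proof but in each case only the outer induction hypothesis is used.

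The hardest cases, which I expect to be the main obstacle, are those involving the axiom $\5$. Rules such as $\5$, $\D\5$, $\K\4\5$ and $\K\D\4\5$ preserve a \emph{boxed} formula on the right of the sequent when passing to the premiss, so when $\Box A$ occurs as such a preserved side formula and is principal in the other premiss, the reduction via a cut on $A$ leaves a residual occurrence of $\Box A$ that must be eliminated by a second multicut. This second multicut has lower cut-formula complexity, so the outer induction applies, but it must be positioned in such a way that the relevant modal rule (again one of the combined rules $\K\4\5$, $\K\D\4\5$, $\D\5$) is available to glue the two derivations together. I would verify that the enumerated list of logics is exactly the one for which every such residual reduction stays in the rule set; the logics not on the list (e.g.~$\M\D\5$, $\K\5$, $\Sfi$) fail precisely because the necessary combined rule is missing, matching the known counterexamples to cut elimination such as $\Box p \iimp \Box \Diam p$ in $\M\D\5$. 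Once the reduction of a topmost multicut is established in all cases, the theorem follows by iterating downward.
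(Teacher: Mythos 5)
Your proposal follows essentially the same route as the paper: a lexicographic induction on (cut-formula complexity, sum of premiss depths) for the multicut rule, a case split on whether the cut formula is main in each premiss, routine permutation in the non-principal cases, and an exhaustive enumeration of the principal modal pairs in which the combined rules $\D\4$, $\C\D$, $\C\4$, $\C\D\4$, $\D\5$, $\K\4\5$, $\K\D\4\5$ serve exactly the gluing role you describe, with the $\5$-family handled by cross-cuts on the residual boxed occurrence. One small correction: that residual multicut still has $\Box A$ as cut formula and hence the \emph{same} complexity --- it is discharged by the inner (depth) induction hypothesis, while it is the final multicut on $A$ that drops complexity and invokes the outer one.
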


\begin{proof}
  The proof of elimination of multicut is reasonably standard by
  induction on the tuples $(c,d)$ in the lexicographic ordering $\lo$,
  where $c$ is the \emph{complexity} of the application of multicut,
  i.e., the number of symbols in the cut formula, and $d$ is its
  \emph{depth}, i.e., the sum of the depths of the derivations of the
  two premisses of the application of multicut.

  So take a topmost application
\[
  \infer[\mcut]{\Gamma, \Sigma \seq \Delta, \Pi
  }
  {\infer[R_1]{\Gamma \seq \Delta, A^n}{\infer*{}{\mathcal{D}_1}}
    &\qquad
    \infer[R_2]{A^m,\Sigma \seq \Pi}{\infer*{}{\mathcal{D}_2}}
  }
\]
of multicut in a derivation in $\G_\logic \Con\W\mcut$. Assume that this
application of multicut is of complexity $c$ and depth $d$, that
$\mathcal{D}_1$ and $\mathcal{D}_2$ are the derivations of the two
premisses of this application, and that $R_1$ and $R_2$ are the two
last applied rules in $\mathcal{D}_1$ and $\mathcal{D}_2$
respectively. Furthermore, assume that we have shown the
statement for applications of multicut with complexity $c'$ and depth
$d'$ such that $(c',d') \lo (c,d)$. 

If $d = 0$, then both $R_1$ and $R_2$ are one of the rules $\init$ or
$\bot_L$ and the conclusion of the multicut is obtained directly by
one of these rules.

So suppose $d > 0$. We distinguish cases according to whether an
occurrence of the cut formula was a main formula in the last applied
rule in $\mathcal{D}_1$ and $\mathcal{D}_2$ respectively.
\begin{enumerate}
\item No occurrence of the cut formula is a main formula in $R_1$. In
  this case $R_1$ is a structural rule, the rule $\T$, or a propositional
  rule apart from $\init, \bot_L$. This case is handled as usual by pushing the
  multicut into the premiss(es) of $R_1$ and applying the induction
  hypothesis on the depth of the application of multicut. E.g., if
  $R_1$ is $\lor_L$, the derivation $\mathcal{D}_1$ ends in
  \[\infer[\lor_L]{\Gamma',B \lor C \seq \Delta, A^n}{\Gamma',B \seq
    \Delta, A^n & \quad \Gamma', C \seq \Delta, A^n}
  \]
  From this we obtain a new derivation ending in
  \[
    \infer[\lor_L]{\Gamma', B \lor C,\Sigma,\Sigma \seq \Delta, \Pi
     }
     {\infer[\mcut]{\Gamma',B, \Sigma \seq \Delta, \Pi
       }
       {\Gamma',B \seq \Delta, A^n
         &\quad
         \infer*[]{A^m,\Sigma\seq\Pi}{\mathcal{D}_2}
       }
     &\qquad
     \infer[\mcut]{\Gamma',C, \Sigma \seq \Delta, \Pi
       }
       {\Gamma',C \seq \Delta,A^n
         &\quad
        \infer*[]{A^m,\Sigma\seq\Pi}{\mathcal{D}_2}
       }
     }
  \]
  Now the two applications of multicut have complexity $c$ and depth
  less than $d$ and we are done using the induction hypothesis.

\item At least one occurrence of the cut formula is a main formula in
  $R_1$, but none of its occurrences is a main formula in $R_2$. This
  case is analogous to the previous case, but pushing the multicut
  into the premiss(es) of $R_2$ instead of $R_1$.

\item Some occurrences of the cut formula are main formulae both in
  $R_1$ and $R_2$. In this case we have $c > 1$, since for $c = 1$ the
  cut formula $A$ is a propositional variable or $\bot$, and since
  some of its occurrences are main formulae both in $R_1$ and $R_2$ we
  would have $d = 0$. So the rules $R_1,R_2$ must be propositional
  rules apart from $\init,\bot_L$ or modal rules. As usual we
  distinguish cases according to the last applied rules $R_1,R_2$, and
  first apply \emph{cross-cuts}, i.e., multicuts on the premiss(es) of
  $R_1$ and the conclusion of $R_2$ and vice versa to eliminate
  occurrences of the cut formula from the premisses of the two
  rules. These multicuts then have smaller depth and are eliminated
  using the induction hypothesis. Then we reduce the complexity of the
  multicut. Since the propositional cases are standard we only treat
  an exemplary case.
  \begin{enumerate}
  \item $R_1 = \lor_R$ and $R_2 = \lor_L$. Then the derivations
    $\mathcal{D}_1$ and $\mathcal{D}_2$ end in
    \[
      \infer[\lor_R]{\Gamma \seq \Delta,B \lor C^n
      }
      {\Gamma \seq \Delta, B \lor C^{n-1},B,C
    }
    \qquad
    \infer[\lor_L]{B \lor C^m,\Sigma \seq \Pi
    }
    {B, B \lor C^{m-1},\Sigma \seq \Pi
      &\quad
      C, B \lor C^{m-1},\Sigma \seq \Pi
    }
    \]
    From this we obtain derivations ending in
    \[
      \infer[\mcut]{\Gamma,\Sigma \seq \Delta,\Pi, B,C
      }
      {\Gamma \seq \Delta, B \lor C^{n-1}, B, C
        &\quad
       \infer[\lor_L]{B \lor C^m, \Sigma \seq \Pi
        }
        {B, B \lor C^{m-1}, \Sigma \seq \Pi
          &
          C, B \lor C^{m-1}, \Sigma \seq \Pi
        }
      }
    \]
    and
    \[
      \infer[\mcut]{D, \Gamma, \Sigma \seq \Delta, \Pi
      }
      {\infer[\lor_R]{\Gamma \seq \Delta, B \lor C^n
        }
        {\Gamma \seq \Delta, D \lor C^{n-1},B,C
        }
      &
      D, B\lor C^{m-1},\Sigma \seq \Pi  
      }
    \]
    with $D$ either of the formulae $C,D$. These two applications of
    multicut have complexity $c$ and depth less than $d$ and hence are
    eliminated using the induction hypothesis. From the resulting
    derivations finally we obtain a derivation ending in
    \[
      \infer=[\Con]{\Gamma,\Sigma \seq \Delta, \Pi
      }
      {\infer[\mcut]{\Gamma, \Sigma,\Gamma, \Sigma,\Gamma, \Sigma \seq
          \Delta, \Pi,\Delta, \Pi,\Delta, \Pi
        }
        {\infer[\mcut]{\Gamma, \Sigma, \Gamma, \Sigma \seq \Delta,
            \Pi, \Delta, \Pi, C
          }
          {\Gamma, \Sigma \seq \Delta, \Pi, B, C
          &
          B, \Gamma, \Sigma \seq \Delta, \Pi
          }
        &
        C, \Gamma, \Sigma \seq \Delta, \Pi
        }
      }
    \]
    Here the newly introduced applications of $\mcut$ have depth
    possibly greater than $d$ but complexity less than $c$, and hence
    also are eliminated using the induction hypothesis.
  \item $R_1 = \M$: In this case the logic is $\M\mathcal{A}$ for
    $\mathcal{A} \subseteq \{ \mathsf{N}, \PP, \D, \4 \}$ or one of
    $\{ \M\5, \M\PP\5, \M\4\5, \M\PP\4\5, \M\D\4\5 \}$. So $R_2$ is
    one of $\M, \PP, \D, \T, \4, \D4, \D5$. For the sake of brevity,
    in the following we only show the reductions of the multicuts,
    denoted by $\leadsto$.  As usual, the newly introduced multicuts
    have the same complexity but lower depth, or of lower complexity
    than the original ones.
    \begin{enumerate}
    \item $R_2 = (\M)$:
      \[
        \vcenter{
        \infer[\mcut]{\Box A \seq \Box C
        }
        {\infer[\M]{\Box A \seq \Box B}{A \seq B}
          &
          \infer[\M]{\Box B \seq \Box C}{B \seq C}
        }
      }
      \quad \leadsto \quad
      \vcenter{
      \infer[\M]{\Box A \seq \Box C
      }
      {\infer[\mcut]{A \seq C
        }
        {A \seq B
          &
          B \seq C
        }
      }
      }
      \]\label{case:mon}\label{case:MvsM}
    \item $R_2 = \PP$: similar to the previous case.\label{case:MvsP}
    \item $R_2 = \D$: The case where $m = 1$ is as above. If $m > 1$
      we only need to add some structural rules:
      \[
        \vcenter{
        \infer[\mcut]{\Box A \seq \;
        }
        {\infer[\M]{\Box A \seq \Box B}{A \seq B}
        &
        \infer[\D]{\Box B, \Box B\seq \; }{B, B \seq \;}
      }
      }
      \qquad
      \leadsto
      \qquad
      \vcenter{
        \infer[\Con]{\Box A \seq \;
      }
      {\infer[\D]{\Box A, \Box A \seq \;
        }
        {\infer[\W]{A, A \seq\;
          }
          {\infer[\mcut]{A \seq \;
            }
            {A \seq B
              &
              B, B \seq \;
            }
          }
        }
      }
    }
      \]\label{case:MvsD}
    \item $R_2 = \T$:
      \[
        \vcenter{
        \infer[\mcut]{\Gamma, \Box A \seq\Delta
        }
        {\infer[\M]{\Box A \seq \Box B
          }
          {A \seq B
          }
          &
          \infer[\T]{\Gamma, \Box B^m \seq \Delta
          }
          {\Gamma, \Box B^{m-1}, B \seq \Delta
          }
        }
      }
      \leadsto
      \vcenter{
        \infer[\T]{\Gamma, \Box A \seq \Delta
        }
        {\infer[\mcut]{\Gamma, A \seq \Delta
          }
          {A \seq B
            &
            \infer[\mcut]{\Gamma, B \seq \Delta
            }
            {\infer[\M]{\Box A \seq \Box B
              }
              {A \seq B
              }
              &
              \Gamma, \Box B^{m-1}, B \seq \Delta
            }
          }
        }
      }
      \]\label{case:MvsT}
    \item $R_2 = \4$:
      \[
        \vcenter{
        \infer[\mcut]{\Box A \seq
        }
        {\infer[\M]{\Box A \seq \Box B
          }
          {A \seq B
          }
          &
          \infer[\4]{\Box B \seq \Box C
          }
          {\Box B \seq C
          }
        }
        }
        \quad\leadsto\quad
        \vcenter{
        \infer[\4]{\Box A \seq \Box C
        }
        {\infer[\mcut]{\Box A \seq C
          }
          {\infer[\M]{\Box A \seq \Box B
            }
            {A \seq B
            }
            &
            \Box B \seq C
          }
        }
        }
      \]\label{case:Mvs4}
    \item $R_2 = \D\4$: We consider the case that $m = 2$. The other
      cases are as above.
      \[
        \vcenter{
        \infer[\mcut]{\Box A \seq
        }
        {\infer[\M]{\Box A \seq \Box B
          }
          {A \seq B
          }
          &
          \infer[\D\4]{\Box B,\Box B \seq \;
          }
          {\Box B, B \seq \;
          }
        }
        }
        \quad\leadsto\quad
        \vcenter{
        \infer[\Con]{\Box A \seq \;
        }
        {\infer[\D\4]{\Box A, \Box A \seq \;
          }
          {\infer[\mcut]{\Box A, A \seq \;
            }
            {A \seq B
              &
              \infer[\mcut]{\Box A, B \seq \;
              }
              {\infer[\M]{\Box A \seq \Box B
                }
                {A \seq B
                }
                &
                \Box B, B \seq \;
              }
            }
          }
        }
        }
      \]\label{case:MvsD4}
    \item $R_2 = \D\5$: as for $\M$\label{case:MvsD5}
       \end{enumerate}
  \item $R_1 = \mathsf{N}$:\label{case:N}
    \begin{enumerate}
    \item $R_2 = \M$: Similar to case~\ref{case:MvsP}
    \item $R_2 = \PP$: As in the previous case.
    \item $R_2 = \D$: The case where $m=2$ is similar to the previous
      case. For $m=1$ we have:
      \[
        \vcenter{
        \infer[\mcut]{\Box B\seq\;
        }
        {\infer[\mathsf{N}]{\;\seq \Box A
          }
          {\;\seq A
          }
          &
          \infer[\D]{\Box A, \Box B \seq \;
          }
          {A, B \seq \;
          }
        }
        }
        \quad\leadsto\quad
        \vcenter{
        \infer[\Con]{\Box B \seq \;
        }
        {\infer[\D]{\Box B, \Box B \seq \;
          }
          {\infer[\W]{B, B \seq \;
            }
            {\infer[\mcut]{B\seq \;
              }
              {\;\seq A
                &
                A, B \seq \;
              }
            }
          }
        }
        }
      \]
    \item $R_2 = \T$: Similar to case~\ref{case:MvsT}.\label{case:NvsT}
    \item $R_2 = \4$: Like case~\ref{case:Mvs4}.\label{case:Nvs4}
    \item $R_2 = \D\4$: The case with $m = 2$ is like
      case~\ref{case:MvsD4}. If $m = 1$ again we need some structural
      rules:
      \[
        \vcenter{
        \infer[\mcut]{\Box B \seq \;
        }
        {\infer[\mathsf{N}]{\;\seq \Box A
          }
          {\;\seq A
          }
          &
          \infer[\D\4]{\Box A, \Box B \seq \;
          }
          {\Box A, B \seq \;
          }
        }
      }
      \quad\leadsto\quad
      \vcenter{
      \infer[\Con]{\Box B \seq \;
      }
      {\infer[\D\4]{\Box B, \Box B \seq \;
        }
        {\infer[\W]{B, \Box B \seq \;
          }
          {\infer[\mcut]{B \seq \;
            }
            {\infer[\mathsf{N}]{\;\seq \Box A
              }
              {\;\seq A
              }
              &
              \Box A, B \seq \;
            }
          }
        }
      }
      }
      \]\label{case:NvsD4}
    \item $R_2 = \D\5$: As for the previous case.
    \item $R_2 = \C$: We have the following (substituting $\mathsf{N}$
      for $\C$ in the last step if $\Gamma$ is empty):
      \[
        \vcenter{
        \infer[\mcut]{\Box \Gamma \seq \Box B
        }
        {\infer[\mathsf{N}]{\;\seq \Box A
          }
          {\;\seq A
          }
          &
          \infer[\C]{\Box A^m, \Box \Gamma \seq \Box B
          }
          {A^m, \Gamma \seq B
          }
        }
        }
        \quad\leadsto\quad
        \vcenter{
        \infer[\C]{\Box \Gamma \seq \Box B
        }
        {\infer[\mcut]{\Gamma \seq B
          }
          {\;\seq A
            &
            A^m, \Gamma \seq B
          }
        }
        }
      \]
    \item $R_2 = \C\D$: As for the previous case.
    \item $R_2 = \C\4$: Similar to case~\ref{case:NvsK45}.
    \item $R_2 = \C\D\4$: Similar to case~\ref{case:NvsK45}.
    \item $R_2 = \K\4$: Similar to case~\ref{case:NvsK45}.
    \item $R_2 = \K\4\5$:
      \[
        \infer[\mcut]{\Box \Gamma, \Box \Sigma \seq \Box B, \Box \Delta
        }
        {\infer[\mathsf{N}]{\;\seq \Box A
          }
          {\;\seq A
          }
          &
          \infer[\K\4\5]{\Box A^m, \Box \Gamma, \Box \Sigma \seq \Box B, \Box \Delta
          }
          {\Box A^{m-k},  \Box \Gamma, A^k, \Sigma \seq B, \Box \Delta
          }
        }\hspace{15em}
      \]
      \[
        \hspace{5em}\quad\leadsto\quad
        \vcenter{
        \infer[\K\4\5]{\Box \Gamma, \Box \Sigma \seq \Box B, \Box \Delta
        }
        {\infer[\mcut]{\Box \Gamma, \Sigma \seq  B, \Box \Delta
          }
          {\;\seq A
            &
            \infer[\mcut]{\Box \Gamma, A^k, \Sigma \seq  B, \Box \Delta
            }
            {\infer[\mathsf{N}]{\;\seq \Box A
              }
              {\;\seq A
              }
              &
              \Box A^{m-k},  \Box \Gamma, A^k, \Sigma \seq B, \Box \Delta
            }
          }
        }
        }
      \]\label{case:NvsK45}
    \item $R_2 = \K\D\4\5$: Similar to case~\ref{case:NvsK45}.
    \end{enumerate}

  \item $R_1 = \4$: Since the rule $\4$ is a special case of each of
    $\C\4$, $\K\4$, and $\K\4\5$, here we only treat the cases not
    involving $\C$, i.e., where $R_2$ is one of
    $\M,\PP,\D,\T,\4,\D\4$. The case of $\D\5$ does not occur with the
    considered logics.
    \begin{enumerate}
    \item $R_2 = (\M)$: similar to case~\ref{case:mon}
    \item $R_2 = \PP$:
      \[
        \vcenter{
        \infer[\mcut]{\Box A \seq \;
        }
        {\infer[\4]{\Box A \seq \Box B
          }
          {\Box A \seq B
          }
          &
          \infer[]{\Box B \seq \;
          }
          {B \seq \;
          }
        }
        }
        \quad\leadsto\quad
        \vcenter{
        \infer[\mcut]{\Box A \seq \;
        }
        {\Box A \seq B
          &
          B \seq \;
        }
        }
      \]\label{case:4vsP}
    \item $R_2 = \D$: The case where $m = 2$ is similar to the
      previous case. If $m = 1$ we have the following reduction, using
      the fact that whenever $\4,\D \in \mathcal{A}$, then
      $\G_{\M\mathcal{A}}$ contains the rule $\D\4$:
      \[
        \vcenter{
        \infer[\mcut]{\Box A, \Box C \seq \;
        }
        {\infer[\4]{\Box A \seq \Box B}{\Box A \seq B}
        &
        \infer[\D]{\Box B, \Box C\seq \; }{B, C \seq \;}
      }
      }
      \qquad
      \leadsto
      \qquad
      \vcenter{
      \infer[\D\4]{\Box A, \Box C \seq \;
      }
      {\infer[\mcut]{\Box A, C \seq \;
        }
        {\Box A \seq B
          &
          C, B \seq \;
        }
      }
      }
      \]\label{case:4vsD}
    \item $R_2 = \T$:
      \[
        \vcenter{
        \infer[\mcut]{\Gamma,\Box A \seq \Delta
        }
        {\infer[\4]{\Box A \seq \Box B
          }
          {\Box A \seq B
          }
          &
          \infer[\T]{\Gamma, \Box B^m \seq \Delta
          }
          {\Gamma, \Box B^{m-1}, B \seq \Delta
          }
        }
      }\hspace{15em}
    \]
    \[
      \hspace{5em}\quad\leadsto\quad
      \vcenter{
        \infer[\Con]{\Gamma, \Box A \seq \Delta
        }
        {\infer[\mcut]{\Gamma, \Box A, \Box A \seq \Delta
          }
          {\Box A \seq B
            &
            \infer[\mcut]{\Gamma, B \seq \Delta
            }
            {\infer[\4]{\Box A \seq \Box B
              }
              {\Box A \seq B
              }
              &
              \Gamma, \Box B^{m-1}, B \seq \Delta
            }
          }
        }
      }
      \]\label{case:4vsT}
    \item $R_2 = \4$:
      \[
        \vcenter{
        \infer[\mcut]{\Box A \seq
        }
        {\infer[\4]{\Box A \seq \Box B
          }
          {A \seq B
          }
          &
          \infer[\4]{\Box B \seq \Box C
          }
          {\Box B \seq C
          }
        }
        }
        \quad\leadsto\quad
        \vcenter{
        \infer[\4]{\Box A \seq \Box C
        }
        {\infer[\mcut]{\Box A \seq C
          }
          {\infer[\4]{\Box A \seq \Box B
            }
            {A \seq B
            }
            &
            \Box B \seq C
          }
        }
        }
      \]\label{case:4vs4}
    \item $R_2 = \D\4$: The case where $m = 1$ is similar to the
      previous case respectively case~\ref{case:4vsD4}. If $m = 2$ we
      have (similarly to case~\ref{case:4vsT}):
      \[
        \vcenter{
        \infer[\mcut]{\Box A \seq
        }
        {\infer[\4]{\Box A \seq \Box B
          }
          {\Box A \seq B
          }
          &
          \infer[\D\4]{\Box B, \Box B \seq \;
          }
          {B, \Box B \seq \;
          }
        }
      }
      \quad\leadsto\quad
      \vcenter{
        \infer[\Con]{ \Box A \seq \;
        }
        {\infer[\mcut]{ \Box A, \Box A \seq \;
          }
          {\Box A \seq B
            &
            \infer[\mcut]{\Box A, B \seq \;
            }
            {\infer[\4]{\Box A \seq \Box B
              }
              {\Box A \seq B
              }
              &
              B, \Box B \seq \;
            }
          }
        }
      }
      \]\label{case:4vsD4}
        \end{enumerate}
  \item $R_1 = \5$: Again, since the rule $\5$ is a special case of
    rule $\K\4\5$, here we only consider the cases not including $\C$,
    i.e., where the logic is not $\K\4\5$ or $\K\D\4\5$. The remaining
    cases are treated in case~\ref{case:left-k45}. In this case $R_2$
    then is one of $\M,\PP, \D, \4, \D\4,\D\5$.
    \begin{enumerate}
    \item $R_2 = \M$: Similar to case~\ref{case:MvsD4}.
    \item $R_2 = \PP$: Similar to case~\ref{case:NvsD4}.\label{case:5vsP}
    \item $R_2 = \D$: Where $m = 1$ this is similar to
      case~\ref{case:4vsD}, using that in this case the calculus also
      includes the rules $\D\5$ and $\4$. Where $n = 1$ and $m = 2$, this is
      similar to case~\ref{case:NvsD4}. 
    \item $R_2 = \4$: For $n = 2$ we have:
      \[
        \vcenter{
        \infer[\mcut]{\;\seq \Box B
        }
        {\infer[\5]{\;\seq \Box A, \Box A
          }
          {\;\seq A, \Box A
          }
          &
          \infer[\4]{\Box A \seq \Box B
          }
          {\Box A \seq B
          }
        }
      }
      \quad\leadsto\quad
      \vcenter{
      \infer[\Con]{\;\seq \Box B
      }
      {\infer[\5]{\;\seq \Box B, \Box B
        }
        {\infer[\W]{\;\seq B, \Box B
          }
          {\infer[\mcut]{\;\seq B
            }
            {\infer[\5]{\; \seq \Box A, \Box A
              }
              {\;\seq A, \Box A
              }
              &
              \Box A \seq B
            }
          }
        }
      }
      }
      \]
      For $n = 1$ we have:
      \[
        \vcenter{
        \infer[\mcut]{\;\seq \Box A, \Box C
        }
        {\infer[\5]{\;\seq \Box A, \Box B
          }
          {\;\seq A, \Box B
          }
          &
          \infer[\4]{\Box B \seq \Box C
          }
          {\Box B \seq C
          }
        }
        }
        \quad\leadsto\quad
        \vcenter{
        \infer[\5]{\;\seq \Box A, \Box C
        }
        {\infer[\mcut]{\;\seq A, \Box C
          }
          {\;\seq A, \Box B
            &
            \infer[\4]{\Box B \seq \Box C
            }
            {\Box B \seq C
            }
          }
        }
        }
      \]\label{case:5vs4}
    \item $R_2 = \D\4$: The case of $n = 1$ and $m = 2$ is similar to
      case~\ref{case:5vsP}, that for $n = 2$ and $m = 1$ to
      case~\ref{case:NvsD4}. For $n=m=2$ we have:
      \[
        \vcenter{
        \infer[\mcut]{\;\seq \;
        }
        {\infer[\5]{\;\seq \Box A, \Box A
          }
          {\;\seq A, \Box A
          }
          &
          \infer[\D\4]{\Box A, \Box A \seq \;
          }
          {A, \Box A \seq \;
          }
        }
        }\hspace{15em}
      \]
      \[
        \hspace{5em}\quad\leadsto\quad
        \vcenter{
        \infer[\mcut]{\;\seq\;
        }
        {\infer[\mcut]{\;\seq A
          }
          {\;\seq A, \Box A
            &
            \infer[\D\4]{\Box A, \Box A \seq \;
            }
            {A, \Box A \seq \;
            }
          }
          &
          \infer[\mcut]{A \seq \;
          }
          {\infer[\5]{\;\seq \Box A, \Box A
            }
            {\;\seq A, A
            }
            &
            A, \Box A \seq \;
          }
        }
        }
      \]
      The case of $n = m = 1$ is similar to cases~\ref{case:4vsD}
      and~\ref{case:4vs4}, using the fact that in this case the
      calculus also includes the rules $\4$ and $\D\5$.\label{case:5vsD4}
    \item $R_2 = \D\5$: Similar to the previous case.\label{case:5vsD5}
    \end{enumerate}
  \item $R_1 = \D\5$: Again, we only consider the cases not including
    $\C$, for the remaining case see case~\ref{case:KD45}. The only
    relevant cases then are that $R_2$ is one of $\M, \PP, \D, \4, \D\4,
    \D\5$.
    \begin{enumerate}
    \item $R_2 = \M$: Similar to case~\ref{case:Mvs4}:
      \[
        \vcenter{
        \infer[\mcut]{\Box A \seq \Box C
        }
        {\infer[\D\5]{\Box A \seq \Box B
          }
          {A \seq \Box B
          }
          &
          \infer[\M]{\Box B \seq \Box C
          }
          {B \seq C
          }
        }
        }
        \quad\leadsto\quad
        \vcenter{
        \infer[\D\5]{\Box A \seq \Box C
        }
        {\infer[\mcut]{A \seq \Box C
          }
          {A \seq \Box B
            &
            \infer[\M]{\Box B \seq \Box C
            }
            {B \seq C
            }
          }
        }
        }
      \]
      \label{case:D5vsM}
    \item $R_2 = \PP$: Similar the previous case.
    \item $R_2 = \D$: The case of $m = 2$ is similar to the previous
      case, with additional structural rules. The case of $m = 1$ is
      similar to case~\ref{case:Mvs4}, using that in this case the
      calculus also includes the rule $\D\4$.\label{case:D5vsD}.
    \item $R_2 = \4$: Similar to case~\ref{case:D5vsM}.
    \item $R_2 = \D\4$:       Similar to case~\ref{case:D5vsD}.
    \item $R_2 = \D\5$: As in case~\ref{case:D5vsM}.
    \end{enumerate}
  \item $R_1 = \C$:
    \begin{enumerate}
    \item $R_2 = \PP$: Similar to case~\ref{case:MvsP}, using that in
      this case also $\C\D$ is in the rule set:
      \[
        \vcenter{
        \infer[\mcut]{\Box \Gamma \seq \;
        }
        {\infer[\C]{\Box \Gamma \seq \Box A
          }
          {\Gamma \seq A
          }
          &
          \infer[\PP]{\Box A \seq \;
          }
          {A \seq \;
          }
        }
        }
        \quad\leadsto\quad
        \vcenter{
        \infer[\C\D]{\Box \Gamma \seq \;
        }
        {\infer[\mcut]{\Gamma \seq \;
          }
          {\Gamma \seq A
            &
            A \seq \;
          }
        }
        }
      \]\label{case:CvsM}
    \item $R_2 = \D$: Similar to the previous case.
    \item $R_2 = \T$:
      \[
        \vcenter{
        \infer[\mcut]{\Sigma, \Box \Gamma \seq \Pi
        }
        {\infer[\C]{\Box \Gamma \seq \Box A
          }
          {\Gamma \seq A
          }
          &
          \infer[\T]{\Sigma, \Box A^m \seq \Pi
          }
          {\Sigma, \Box A^{m-1}, A \seq \Pi
          }
        }
        }\hspace{15em}
      \]
      \[
        \hspace{5em}\quad\leadsto\quad
        \vcenter{
        \infer=[\Con]{\Sigma, \Box \Gamma, \seq \Pi
        }
        {\infer=[\T]{\Sigma, \Box \Gamma, \Box \Gamma \seq \Pi
          }
          {\infer[\mcut]{\Sigma, \Gamma, \Box \Gamma \seq \Pi
            }
            {\Gamma \seq A
              &
              \infer[\mcut]{\Sigma, \Box \Gamma, A \seq \Pi
              }
              {\infer[\C]{\Box \Gamma \seq \Box A
                }
                {\Gamma \seq A
                }
                &
                \Sigma, \Box A^{m-1}, A \seq \Pi
              }
            }
          }
        }
        }
      \]\label{case:CvsT}
    \item $R_2 = \4$: See case~\ref{case:CvsK4}.
    \item $R_2 = \D\4$: Similar to case~\ref{case:CvsK4}.
    \item $R_2 = \D\5$: See case~\ref{case:CvsKD45}.
    \item $R_2 = \C$:
      \[
        \vcenter{
        \infer[\mcut]{\Box \Gamma, \Box \Sigma \seq \Box B
        }
        {\infer[\C]{\Box \Gamma \seq \Box A
          }
          {\Gamma \seq A
          }
          &
          \infer[\C]{\Box A^m, \Box \Sigma \seq \Box B
          }
          {A^m, \Sigma \seq B
          }
        }
        }
        \quad\leadsto\quad
        \vcenter{
        \infer[\C]{\Box \Gamma, \Box \Sigma \seq \Box B
        }
        {\infer[\mcut]{\Gamma, \Sigma \seq B
          }
          {\Gamma \seq A
            &
            A^m, \Sigma \seq B
          }
        }
        }
      \]\label{case:CvsC}
    \item $R_2 = \C\D$: Similar to the previous case.\label{case:CvsCD}
    \item $R_2 = \C\4$: 
      \[
        \vcenter{
        \infer[\mcut]{\Box \Gamma, \Box \Sigma, \Box \Omega \seq \Box B
        }
        {\infer[\C]{\Box \Gamma \seq \Box A
          }
          {\Gamma \seq A
          }
          &
          \infer[\K\4]{\Box A^m,\Box \Sigma, \Box \Omega \seq \Box B
          }
          {\Box A^{m-k}, \Box \Sigma, A^k, \Omega \seq B
          }
        }
        }\hspace{15em}
      \]
      \[
        \hspace{5em}\quad\leadsto\quad
        \vcenter{
        \infer=[\Con]{\Box \Gamma, \Box \Sigma, \Box \Omega \seq \Box B
        }
        {\infer[\K\4]{\Box \Gamma, \Box \Gamma, \Box \Sigma, \Box \Omega
            \seq \Box B
          }
          {\infer[\mcut]{\Gamma, \Box \Gamma, \Box \Sigma, \Omega
            \seq \Box B
          }
          {\Gamma \seq A
            &
            \infer[\mcut]{\Box \Gamma, \Box \Sigma, A^k, \Omega \seq B
            }
            {\infer[\C]{\Box \Gamma \seq \Box A
              }
              {\Gamma \seq A
              }
              &
              \Box A^{m-k}, \Box \Gamma, \Box \Sigma, A^k, \Omega \seq B
            }
          }
          }
        }
        }
      \]\label{case:CvsC4}
    \item $R_2 = \C\D\4$: Similar to case~\ref{case:CvsC4}.\label{case:CvsCD4}
    \item $R_2 = \K\4$: See case~\ref{case:CvsC4}.
       \label{case:CvsK4}
    \item $R_2 = \K\4\5$: Similar to case~\ref{case:CvsKD45}.
    \item $R_2 = \K\D\4\5$: Similar to case~\ref{case:CvsK4}:
            \label{case:CvsKD45}
    \end{enumerate}\label{case:C}
  \item $R_1 = \C\4$: Similar to case~\ref{case:C}.\label{case:C4}
  \item $R_1 = \K\4$: See case~\ref{case:C4} and case~\ref{case:N}.\label{case:K4}
  \item $R_1 = \K\4\5$: This case only occurs for the logics $\K\4\5$
    and $\K\D\4\5$, limiting the possible cases to the following:
    \begin{enumerate}
    \item $R_2 = \4$: See case~\ref{case:K45vsK45}.
    \item $R_2 = \D\4$: See case~\ref{case:K45vsKD45}.
    \item $R_2 = \D\5$: See case~\ref{case:K45vsKD45}.
    \item $R_2 = \C$: See case~\ref{case:K45vsK45}.
    \item $R_2 = \C\D$: See case~\ref{case:K45vsKD45}.
    \item $R_2 = \C\4$: See case~\ref{case:K45vsK45}.
    \item $R_2 = \C\D\4$: See case~\ref{case:K45vsKD45}.
    \item $R_2 = \K\4$: See case~\ref{case:K45vsK45}.
    \item $R_2 = \K\4\5$: We show the most interesting case, the
      remaining cases are similar.
      \[
        \vcenter{
        \infer[\mcut]{\Box \Gamma, \Box \Sigma, \Box \Omega, \Box
          \Theta \seq \Box \Delta, \Box B, \Box \Pi
        }
        {\infer[\K\4\5]{\Box \Gamma, \Box \Sigma \seq \Box A^n, \Box \Delta
          }
          {\Box \Gamma, \Sigma \seq A, \Box A^{n-1}, \Box \Delta
          }
          &
          \infer[\K\4\5]{\Box A^m, \Box \Omega, \Box \Theta \seq \Box B, \Box \Pi
          }
          {\Box A^{m-k}, \Box \Omega, A^k, \Theta \seq B, \Box \Pi
          }
        }
        }\hspace{15em}
      \]
      \[
        \leadsto\qquad
        \vcenter{
        \infer=[\Con]{\Box \Gamma, \Box \Sigma, \Box \Omega, \Box
          \Theta \seq \Box \Delta, \Box B, \Box \Pi
        }
        {\infer[\K\4\5]{\Box \Gamma, \Box \Sigma, \Box \Omega, \Box
          \Theta,\Box \Gamma, \Box \Sigma, \Box \Omega, \Box
          \Theta \seq \Box \Delta, \Box B, \Box \Pi,\Box \Delta, \Box B, \Box \Pi
          }
          {\infer[\mcut]{\Box \Gamma, \Sigma,
              \Box \Omega, \Box \Theta,\Box \Gamma, \Box \Sigma, \Box \Omega,
            \Theta \seq \Box \Delta,\Box B, \Box \Pi, \Box \Delta, B,
            \Box \Pi
            }
            {\infer*[\mathcal{D}_1]{\Box \Gamma, \Sigma, \Box \Omega, \Box
            \Theta \seq A, \Box \Delta, \Box B, \Box \Pi
              }
              {              }
              &\qquad
              \infer*[\mathcal{D}_2]{\Box \Gamma, \Box \Sigma, \Box \Omega, A^k, 
            \Theta \seq \Box \Delta,  B, \Box \Pi
              }
              {
              }
            }
          }
        }}
      \]
      where $\mathcal{D}_1$ is
      \[
        \infer[\mcut]{\Box \Gamma, \Sigma, \Box \Omega, \Box
            \Theta \seq A, \Box \Delta, \Box B, \Box \Pi
              }
              {\Box \Gamma, \Sigma \seq A, \Box \Delta
                &
                \infer[\K\4\5]{\Box A^m, \Box \Omega, \Box \Theta \seq
                  \Box B, \Box \Pi
                }
                {\Box A^{m-k}, \Box \Omega, A^k, \Theta \seq B, \Box \Pi
                }
              }
            \]
            and $\mathcal{D}_2$ is
            \[
              \infer[\mcut]{\Box \Gamma, \Box \Sigma, \Box \Omega, A^k, 
            \Theta \seq \Box \Delta,  B, \Box \Pi
              }
              {\infer[\K\4\5]{\Box \Gamma, \Box \Sigma \seq \Box A^n,
                  \Box \Delta
                }
                {\Box \Gamma, \Sigma \seq A, \Box A^{n-1}, \Box \Delta
                }
                &
                \Box A^{m-k}, \Box \Omega, A^k, \Theta \seq B, \Box \Pi
              }
            \]\label{case:K45vsK45}
    \item $R_2 = \K\D\4\5$: Similar to the previous case.\label{case:K45vsKD45}
    \end{enumerate}
    \label{case:left-k45}
  \item $R_1 = \K\D\4\5$: Similar to case~\ref{case:left-k45}.\label{case:KD45}
  \end{enumerate}
\end{enumerate}
\end{proof}


\end{document}